\newenvironment{assumptionp}[1]{
  
  \assumptionalt
}{\endassumptionalt}
\def\P{{\mathbb P}}
\def\E{{\mathbb E}}
\def\argmin{\mathop{\text{\rm arg\,min}}}
\def\argmax{\mathop{\text{\rm arg\,max}}}
\def\span{{\mathop{\text{\rm Span}}}}
\def\tr{{\mathop{\text{\rm tr}}}}
\def\vec{{\mathop{\text{\rm Vec}}}}
\numberwithin{equation}{section}
\numberwithin{theorem}{section}
\numberwithin{corollary}{section}
\numberwithin{asmp}{section}
\numberwithin{definition}{section}
\renewcommand{\baselinestretch}{1.2}
\begin{document}
\title{Manifold  Principle Component Analysis for Large-Dimensional Matrix Elliptical Factor Model}
\author{
ZeYu Li
	\footnotemark[1] \footnotemark[4],
Yong He \footnotemark[2] \footnotemark[4],
Xinbing Kong \footnotemark[3],
Xinsheng Zhang
	\footnotemark[1]
}
\renewcommand{\thefootnote}{\fnsymbol{footnote}}

\footnotetext[1]{Department of Statistics, School of Management at Fudan University, China; e-mail:{\tt zeyuli21@m.fudan.edu.cn; xszhang@fudan.edu.cn }}
\footnotetext[2]{Institute of Financial Studies, Shandong University, China; e-mail:{\tt heyong@sdu.edu.cn }}
\footnotetext[3]{Nanjing Audit University, China; e-mail:{\tt xinbingkong@126.com }}
\footnotetext[4]{The authors contributed equally to this work.}
\maketitle
\begin{abstract}
Matrix factor model has been growing popular in scientific fields such as econometrics, which serves as a two-way dimension reduction tool for matrix sequences. In this article, we for the first time propose the matrix elliptical factor model, which can better depict the possible heavy-tailed property of matrix-valued data especially in finance. Manifold  Principle Component Analysis (MPCA) is for the first time introduced to estimate the row/column loading spaces.
MPCA first performs Singular Value Decomposition (SVD)
for each ``local" matrix observation and then averages the local estimated spaces across all observations, while the existing ones such as 2-dimensional PCA first integrates data across observations and then does eigenvalue decomposition of the sample covariance matrices.
We  propose two versions of MPCA algorithms to estimate the factor loading matrices robustly, without any moment constraints on the factors and the idiosyncratic errors. Theoretical convergence rates of the corresponding estimators of the factor loading matrices, factor score matrices and  common components matrices are derived under mild conditions. We also propose  robust estimators of  the row/column factor numbers based on the eigenvalue-ratio idea, which are proven to be consistent. Numerical studies and real example on financial returns data check the flexibility of our model and the validity of our MPCA methods.
\end{abstract}

\noindent {\bf Keywords:}
Factor Model; Grassmann manifold; Matrix elliptical distribution; Principle component analysis.
\section{Introduction}
Factor models have been a classical dimension reduction tool in statistics, which is popular for its ability to summarize information in large data
sets. More importantly, factor models characterize many  economic problems, e.g., the Arbitrage Pricing Theory of \cite{Ross1976The}. In the last two decades large-dimensional approximate factor model is growing popular as we embrace the big data era where more and more variables are recorded and
stored, see the seminal work by \cite{bai2002determining} and \cite{stock2002forecasting},
and some representative work by \cite{bai2003inferential},\cite{onatski09}, \cite{ahn2013eigenvalue}, \cite{fan2013large}, \cite{bai2012statistical}, \cite{Bai2016Maximum} and \cite{Trapani2018A}. The aforementioned papers all require the fourth moments (or even higher moments) of factors and idiosyncratic errors exist, which may be constrictive in  research areas such as finance. \cite{he2022large} for the first time propose a Robust Two Step (RTS) procedure to do factor analysis without any moment constraints, under the framework of elliptical distributions, see also the endeavors by \cite{yu2019robust}, \cite{He2020Learning} and \cite{Chen2021Quantile}.

The modern data collected are usually well-structured in a matrix form, such as time list of tables recording several macroeconomic variables across a
number of countries;  a series of customers' ratings on a large number of items in an online platform, see \citet{chen2021statistical} for further examples of well-structured matrix observations.
In the last few years,  matrix factor model has drawn growing attention as an important two-way dimension reduction tool for matrix sequences. \cite{wang2019factor} for the first time proposed the following matrix factor model:
\begin{equation}\label{equ:matrixfactormodel}
  \underbrace{X_t}_{p\times q}=\underbrace{R}_{p\times p_0}\times \underbrace{F_t}_{p_0\times q_0}\times \underbrace{C^\top}_{q_0\times q}+  \underbrace{E_t}_{p\times q},
\end{equation}
where  $\{X_t, 1\leq t\leq T\}$ are matrix observations of dimension $p\times q$, $R$ is the row factor loading matrix exploiting the variations of $X_{t}$ across the rows, $C$ is the $q\times q_{0}$ column factor loading matrix reflecting the differences across the columns of $X_{t}$, $F_{t}$ is the common factor matrix for all cells in $X_{t}$ and $E_{t}$ is the idiosyncratic components. A naive way to do factor analysis for matrix observations  is to
first vectorize the data $X_t$, and then to adopt the classical well-developed vector factor models techniques. However, when data genuinely have a matrix factor structure as
in (\ref{equ:matrixfactormodel}), this naive approach would lead to sub-optimal inference \citep{chen2021statistical,He2021Online}. Two different types of matrix factor model assumptions are adopted in the existing literature. One type of models assumes that the
factors accommodate all dynamics, making the idiosyncratic
noise ``white" with no autocorrelation but allowing substantial
contemporary cross-correlation among the error process,  and the estimation
of the loading space is done by an eigen-analysis of the
nonzero lag autocovariance matrices, see for example \citet{wang2019factor}. The other type of models assumes that a common
factor must have impact on almost all (defined asymptotically) of
the matrix time series, but allows the idiosyncratic noise to have weak
cross-correlations and weak autocorrelations, and  principle component analysis (PCA)
of the sample covariance matrix is typically used to estimate
the spaces spanned by the row/column loading matrices, see for example \cite{chen2021statistical,Yu2021Projected,He2021Vector}. As far as we know, all the existing work on matrix factor model assumes that the  fourth moments (or even higher moments) of factors and idiosyncratic errors exist, which could be restrictive in real applications such as in finance. Figure \ref{fig:intro}  depicts the boxplots of the row factor loading
 estimation errors based on 100 replications, from which we can see that  the $(2D)^2$-PCA by \cite{zhang20052d} and the Projection Estimation (PE) method by \cite{ Yu2021Projected} results in bigger biases and higher dispersions as the distribution tails become heavier.

In the current work, we for the first time propose a freshly new and flexible model, named as the Matrix Elliptical Factor Model (MEFM), which assumes that the factor matrix $F_t$
and the idiosyncratic errors matrix $E_t$  follow an joint Elliptical Matrix Distribution (EMD), which covers a large class of heavy-tailed matrix distributions such as matrix $t$-distribution. To estimate the row (column) loading space $\text{Span}(R)$ ($\text{Span}(C)$) of MEFM robustly, we propose a Manifold Principle Component Analysis (MPCA) method for the first time. In essence, for each data matrix $X_{t}$, assume that $p_0=q_0=r_0$ for better illustration, the MPCA first finds the best $r_0$-dimensional row (column) loading space estimator $\text{Span}(\hat R_t)$ ($\text{Span}(\hat C_t)$), which can be viewed as an element in the Grassmann manifold $\cG(p_0,p)$ ($\cG(q_0,q)$), where the Grassmann manifold $\cG(p_0,p)$ is
the set of $p_0$-dimensional linear subspaces of the $\RR^p$ \citep{2008Grassmann}.
Then MPCA looks for the ``centers" of all row/column loading space estimators within their Grassmann manifolds respectively. According to the way of finding the best linear row/column space for each matrix observation, the MPCA then has two versions, $\text{MPCA}_{op}$ and $\text{MPCA}_{F}$, corresponding to the optimization problem (\ref{equ:loss}) under matrix operator norm and matrix Frobenius norm respectively. For the $\text{MPCA}_{F}$, the projection technique in \cite{Yu2021Projected} happens to be taken into account, which increases the signal-to-noise ratio. Now, let us
come back to  Figure \ref{fig:intro}, in
which we also presented the results using the $\text{MPCA}_{op}$ and $\text{MPCA}_{F}$ methods. It can be seen that $\text{MPCA}_{F}$ always performs well under various distributions, and is even not sensitive to the elliptical assumption (noting that $\alpha$-stable distribution is not elliptical). The $\text{MPCA}_{op}$ method also exhibits an extent of robustness, while it is inferior to the $\text{MPCA}_{F}$ in all cases. This indicates that for matrix factor model, the projection technique is always preferred as it can increase the signal-to-noise ratio.

\begin{figure}[htbp]
	  \centering
	    	\includegraphics[height=5cm,width=16cm]{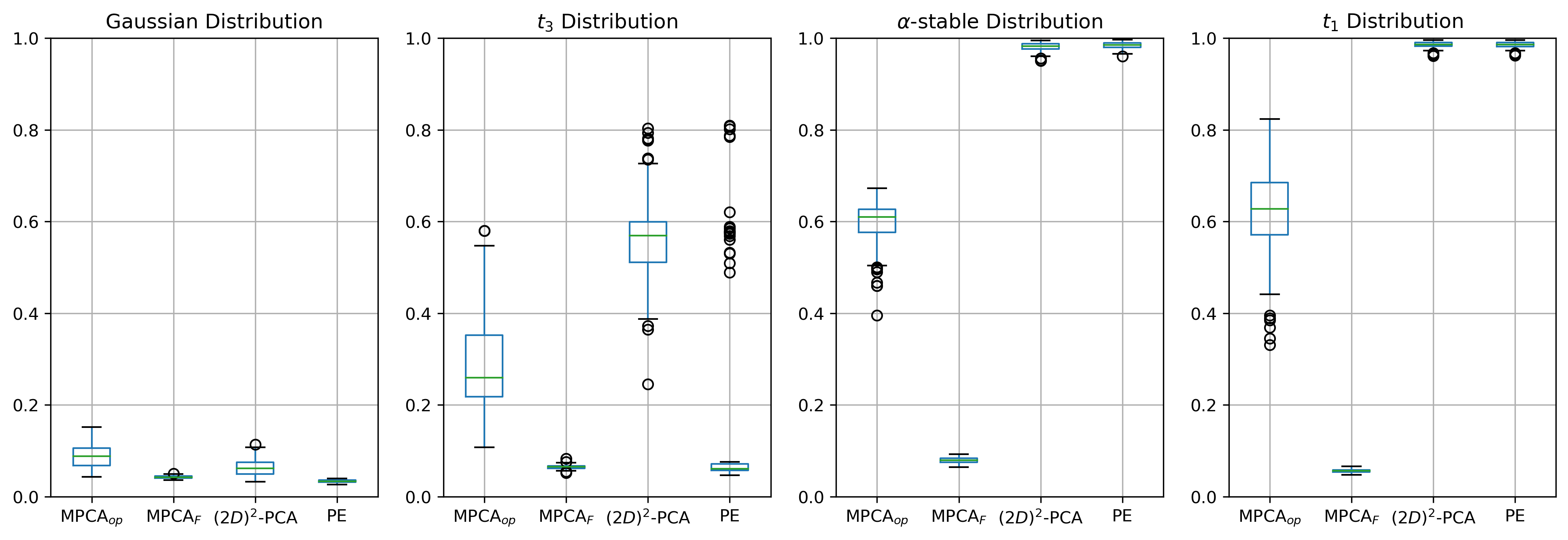}
		\caption{Boxplot of the distance between the estimated row loading space $\hat R$ and the true row loading space $R$ by $\text{MPCA}_{op}$, $\text{MPCA}_{F}$, $(2D)^2$-PCA and PE methods under different distributions (normal, $t_3$, $\alpha$-stable with $\alpha=1.8$ and  $t_1$), $p=q=100$, $T=300$. The noises are scaled to get comparable performance under various distributions.}\label{fig:intro}
	\end{figure}

To do matrix factor analysis, the first step is to determine the pair of factor numbers. As for the Elliptical Matrix Factor Model (MFM), both the row and column factor numbers should be predetermined. \cite{wang2019factor} proposed to estimate the pair of factor numbers by the ratios of consecutive eigenvalues of auto-covariance matrices; \cite{chen2021statistical} proposed an $\alpha$-PCA based eigenvalue-ratio method and \cite{Yu2021Projected} further proposed a projection-based iterative eigenvalue-ratio method, all borrowing the eigenvalue ratio idea from \cite{ahn2013eigenvalue}. \cite{He2021Vector} is the only work that determines the pair of factor numbers  from the perspective of sequential hypothesis testing. In this article we also propose similar eigenvalue ratio methods based on the MPCA approach. The proposed estimators of the pair of factor numbers are proven to be consistent under mild conditions and performs much better than the existing ones when the matrix-valued data are heavy-tailed shown in the simulation study.

The contributions of the current work lie in the following aspects. Firstly, we for the first time propose a flexible matrix elliptical factor model for matrix observations which is adaptive to their tail properties. Secondly, to estimate the loading spaces of MEFM robustly, we for the first time introduce a freshly new principle component analysis method, named as the Manifold Principle  Component Analysis (MPCA), which is computationally efficient
and easy to implement. The MPCA is completely different from the traditional PCAs (e.g., the $\alpha$-PCA by \cite{chen2021statistical} and $(2D)^2$-PCA by \cite{zhang20052d}) method in the sense that MPCA first performs Singular Value Decomposition (SVD) for each ``local" matrix observation and then integrates/averages the local spaces, while the traditional PCAs first integrates the matrix observations and then finds the principle eigenvectors of the sample covariance matrices. Clearly, the MPCA would show great computational advantage especially in online updating problems.
Thirdly, the theoretical guarantee of MPCA relies heavily on the properties of the expected projection matrices, which has not aroused much attention in existing literatures. We show that the expected projection matrices contain adequate subspace information, which is of independent interest. At last, the theoretical analysis shows that the proposed MPCA estimators are consistent without any moment constraints on the underlying distributions of the factors and the idiosyncratic errors, which generalize the methods' applicability to heavy-tailed datasets
such as financial returns.

The remainder of the article is organized as follows. In Section 2, we first introduce the proposed matrix elliptical factor model. Then we introduce the Manifold Principle Component Analysis (MPCA) method, and take the special degenerated case $q=1$ to illustrate the intuition of its robustness. At last we introduce the best subspace approximation for each matrix observation $X_t$ under both the matrix operator norm and the matrix Frobenius norm, by which we further propose the two versions of MPCA algorithms.
 In Section 3, we investigate the theoretical properties of the estimators by MPCA, including the factor loadings, factor scores and common components matrices. In Section 4, we further discuss the selection of the pair of factor numbers and propose two procedures based on the eigenvalue ratio idea.
 In section 5, we conduct thorough numerical studies to illustrate the advantages/robustness of the MPCA method and the corresponding eigenvalue-ratio factor number estimation methods  over the state-of-the-art methods. In Section 6, we analyze a financial returns dataset to illustrate the practical value of the proposed methods.
 We conclude the article and discuss the limitation of the current work and  possible future research directions in Section 7. The proofs of the main theorems  and additional details are collected in the supplementary materials.

To end this section, we introduce some notations throughout the study. For matrix $A$, $\|A\|_{op}$ and $\|A\|_{F}$ represent the operator norm and Frobenius norm, $\tr(A)$ denotes the trace of $A$, and $\vec(A)$ means vectorization, $\sigma_i(A)$ denotes the $i$-th largest singular value of $A$. Moreover, if $A$ is symmetric, $\lambda_i(A)$ denotes the $i$-th largest eigenvalue of $A$, and let $d_{i}(A)=(\lambda_{i}-\lambda_{i+1})(A)$ be the $i$-th eigengap. For vector $a$, denote $l^2$-norm as $\|a\|_2$. The notation $\stackrel{d}{=}$ means identically distributed. The $o_{p}$ is for convergence to zero in probability and $O_{p}$ is for stochastic boundedness. For two random series $X_{n}$ and $Y_{n}$, $X_{n} \lesssim Y_{n}$ means that $X_{n}=O_{p}\left(Y_{n}\right)$, and $X_{n} \gtrsim Y_{n}$ means that $Y_{n}=O_{p}\left(X_{n}\right)$. The notation $X_{n} \asymp Y_{n}$ means that $X_{n} \lesssim Y_{n}$ and $X_{n} \gtrsim Y_{n}$. $\mathcal{O}_{p, q}$ denotes the space of $p\times q$ matrices with orthogonal columns. $\lfloor x\rfloor$ means rounding $x$ to the nearest integer. The constants $c$ and $C$ may not be identical in different lines.
\section{Methodology}
\subsection{Matrix Elliptical Factor Model}
For $p\times q$ matrix-variate sequences $\{X_t,t=1,\ldots,T\}$, the centered matrix factor model is introduced by \cite{wang2019factor} as follows:
\begin{equation*}
	X_t = R F_tC^{\top}+E_t,\quad t=1,\dots,T,
\end{equation*}
where $R$ is the $p\times p_0$ row factor loading matrix, $C$ is the $q\times q_0$ column factor loading matrix, $F_t$ is the common factor matrix and $E_t$ is the idiosyncratic component  with $\E(F_t)=0$ and $\E(E_t)=0$. This model is suited for well-structured tables of macroeconomic indicators, financial characteristics, and frames of pictures etc. In this paper, we are interested in recovering the loading spaces $\span(R)$ and $\span(C)$. Without loss of generality, we assume $R^{\top} R/p=I_{p_0}$ and $C^{\top} C/q=I_{q_0}$. The projection matrices onto $\span(R)$ and $\span(C)$ are then naturally $P_R = RR^{\top}/p$ and $P_C = CC^{\top}/q$.

Prior to the introduction of Matrix Elliptical Factor Model (MEFM), we first take a look at matrix elliptical distributions. A random matrix $X$ of size $p\times q$ is matrix elliptical distributed if its characteristic function has the form $\varphi_{X}(T)=\operatorname{exp}\left[\operatorname{tr}\left(i T^{\top} M\right)\right] \psi\left[\operatorname{tr}\left(T^{\top} \Sigma T \Omega\right)\right]$ with $T$: $p \times q,$ $M$: $p \times q$, $\Sigma$: $p\times p$, $\Omega$: $q \times q$, $\Sigma \succcurlyeq O$, $\Omega \succcurlyeq O$ and $\psi:[0, \infty) \rightarrow \mathbb{R}$. This distribution is denoted by $E_{p, q}(M, \Sigma \otimes \Omega, \psi)$, see \cite{gupta2018matrix} for details. An important observation given in \cite{gupta1994new} shows that for $\operatorname{rank}(\Sigma)=m$, $\operatorname{rank}(\Omega)=n$, the random matrix $X \sim E_{p, q}(M, \Sigma \otimes \Omega, \psi)$ if and only if:
\begin{equation*}
	X \stackrel{d}{=} r A U B^{\top}+M,
\end{equation*}
where $U$: $m\times n$ and $\vec \left(U\right)$ is uniformly distributed on the unit sphere in $\mathbb{R}^{mn}$, $r$ is a nonnegative random variable independent of $U$, $\Sigma=A A^{\top}$ and $\Omega=B B^{\top}$ are rank factorizations of $\Sigma$ and $\Omega$. The matrix Gaussian distributions and matrix $t$-distributions belong to the class of matrix elliptical distributions. In the article, in the definition of MEFM, we assume the factor $F_t$ and noise $E_t$ are from joint matrix elliptical distribution as in \cite{he2022large}, which is:

\begin{equation}\label{eq_joint_ellip_model}
	\left(\begin{array}{c}\vec(F_t) \\ \vec(E_t)\end{array}\right)=r_{t}\left(\begin{array}{cc}\Sigma_2^{1/2}\otimes \Sigma_1^{1/2} & 0 \\ 0 & \Omega_2^{1/2}\otimes \Omega_1^{1/2}\end{array}\right) \frac{Z_t}{\|Z_t\|_2},
\end{equation}
where $Z_t$ is a $(pq+p_0q_0)$-dimensional isotropic Gaussian vector, while $r_t$ is a positive random variable independent of $Z_t$. It is sometimes more convenient to separate the joint model into:
\begin{equation*}
	F_t = \frac{r_t}{\|Z_t\|_2}\Sigma_1^{1/2} Z_t^{F}\Sigma_2^{1/2}, \quad E_t = \frac{r_t}{\|Z_t\|_2}\Omega_1^{1/2} Z_t^{E}\Omega_2^{1/2},
\end{equation*}
with positive-definite transformation matrices $\Sigma_1$ of size $p_0\times p_0$, $\Sigma_2$ of size $q_0\times q_0$, $\Omega_1$ of size $p\times p$ and $\Omega_2$ of size $q\times q$. $Z_t^F$ is a $p_0\times q_0$ random matrix by taking the leading $p_0q_0$ elements of $Z_t$, while $Z_t^E$ of size $p\times q$ consists of all the elements left. It is not hard to verify that $F_t$ and $E_t$ are matrix elliptical distributed, since $\|Z_t\|^2_2=\|Z_t^F\|_F^2+\|Z_t^E\|_F^2$ is independent of both $Z_t^F/\|Z_t^F\|_F$ and $Z_t^E/\|Z_t^E\|_F$.

\begin{remark}
	Assuming the joint matrix elliptical distribution of  $F_t$ and  $E_t$ is to ensure distribution-free signal-to-noise conditions. For example, if $E_t$ has i.i.d. standard Gaussian elements, then $\|E_t\|_{op}=O_p((p\vee q)^{1/2})$. On the other hand, if $E_t$ has i.i.d. $t(1)$ elements, then $\|E_t\|_{op}\geq \|E_t\|_{\infty} \asymp pq$. Assuming joint matrix elliptical distribution ensures simplicity, otherwise, the signal-to-noise conditions are distribution-dependent, and higher signal-to-noise ratio is naturally required for heavier-tailed noise case. See the same joint (vector) elliptical distribution assumption in \cite{Fan2018LARGE,he2022large}.
\end{remark}

\subsection{Manifold Principle Component Analysis}
In this section, we introduce our Manifold Principle Component Analysis (MPCA) method for MEFM estimation. As a simple heuristic argument to see the robustness of MPCA, first consider $q=1$ and then the matrix factor model degenerates to the vector case. The classical vector factor model would be written as:
		\begin{equation*}
			y_t = Af_t+\epsilon_t,\quad t=1,\dots,T,
		\end{equation*}
where $y_t$ is the $p\times 1$ observed vector, $A$ is the  $p\times p_0$ loading matrix, $f_t$ is the $p_0\times 1$ latent factor vector and $\epsilon_t$ is the $p\times 1$ noise vector. The classical PCA seeks the leading $p_0$ eigenvectors of the sample covariance matrix $\tilde{\Sigma}=\sum_t y_ty_t^{\top}/T$, which is easily influenced by outliers, as those $y_t$ with larger norm naturally have larger influence on $\tilde{\Sigma}$.

It would be more robust to treat all $y_t$ equally, in a sense that each $y_t$ provides the same amount of subspace information. The Manifold PCA (MPCA) first finds the best subspace estimation $\hat{A}_t$ for each $y_t$, which is the first eigenvector of the rank one matrix $y_ty_t^{\top}$. Then it seeks the ``center" of all $\hat{A}_t$, which would be the leading eigenvectors of the average projection matrix $\hat{\Sigma}=\sum_t \hat{A}_t\hat{A}_t^{\top}/T$. It is exactly a distance-weighted sample covariance, namely:
		\begin{equation*}
			\hat{\Sigma}=\sum_t \hat{A}_t\hat{A}_t^{\top}/T = \sum_t\frac{y_ty_t^{\top}}{y_t^{\top}y_t}/T.
		\end{equation*}
		
Such degeneration towards vector factor model provides some basic insights for the robustness of our MPCA methods against heavy-tailed noises. If the data set $\{y_t,\,1\leq t\leq T\}$ is augmented to $\{y_i-y_j,\,1\leq i<j\leq T\}$, our method is then equivalent to calculating the leading eigenvectors of the multivariate Kendall's $\tau$ matrix, which is also valid without moment conditions, see \cite{he2022large} for details.

We then introduce the MPCA methods for matrix variate data. Assume first $p_0=q_0=r_0$, for each data matrix $X_{t}$, we give the best linear row/column space estimator for $X_{t}$, denoted by orthogonal matrices $\hat{R}_{t}$ and $\hat{C}_{t}$ respectively, which are the representatives of their own equivalent classes on the Grassmann manifolds $\mathcal{G}(p_0, p)$ and $\mathcal{G}(q_0, q)$. Then MPCA finds the ``centers" of all $\span(\hat{R}_{t})$ and $\span(\hat{C}_{t})$ within their Grassmann manifolds respectively.

In the following section, we will show that for each $X_{t}$, the best linear subspace estimator $\hat{R}_{t}$ and $\hat{C}_{t}$ are the leading $r_0$ eigenvectors of $X_{t} X_{t}^{\top}$ and $X_{t}^{\top} X_{t}$ with respect to operator norm $\|.\|_{op}$ loss in (\ref{equ:loss}) below. Similarly, $\hat{R}_{t}$ and $\hat{C}_{t}$ are leading $r_0$ eigenvectors of $X_{t} P_{C} X_{t}^{\top}$ and $X_{t}^{\top} P_{R} X_{t}$ under Frobenius norm $\|.\|_F$ loss if $C$ and $R$ are given respectively.

Then, for given linear space estimators $\hat{R}_{t}$ and $\hat{C}_{t}$ for each $X_{t}$, it is natural to find their ``centers" on the Grassmann manifolds $\mathcal{G}(p_0, p)$ and $\mathcal{G}(q_0, q)$ as the final estimators. However, Grassmann manifolds admit highly non-linear structures and direct sample averaging is not admissible. Fortunately, we  show that the leading $r_0$ eigenvectors of the average projection matrices $\bar{P}_{\hat{R}_t}=\sum_{t} \hat{R}_{t} \hat{R}_{t}^{\top} / T$ and $\bar{P}_{\hat{C}_t}=\sum_{t} \hat{C}_{t} \hat{C}_{t}^{\top} / T$, denoted by $\hat{R}/\sqrt{p}$ and $\hat{C}/\sqrt{q}$, could serve as the representatives of the Manifolds.

\begin{remark}
	Although such manifold center intuition seems vivid, we have to be more careful with those degenerated cases where the best linear subspace approximation for each single matrix data is of lower dimensional than expected. For instance, if the factor matrix $F_t$ is of dimension $p_0\times q_0$ with $p_0>q_0$, and we wish to find a $p_0$-dimensional subspace $\span(\hat{R})$. The signal part $RF_tC^{\top}$ would be of rank $r_0=q_0$, so it would be more natural to set $\hat{R}_t$ as the leading $q_0$ left singular vectors of $X_t$, instead of $p_0$. The algorithm could be slightly modified and performs equally well, but the intuition of manifold ``center" no longer makes sense, as each $\hat{R}_t$ is $q_0$-dimensional while the $\hat{R}$ we seek is $p_0$-dimensional. That is to say, we are finding the $p_0$-dimensional ``centered" subspace among some $q_0$-dimensional subspaces, hence we adopt the term degeneration.
\end{remark}

\subsection{Best Subspace Approximations}

First, consider the best subspace approximation for a single matrix data $X_t$. We aim to find basis matrices $\hat{R}_t$ and $\hat{C}_t$ of dimension $p\times p_0$ and $q\times q_0$ respectively, and the $p_0\times q_0$ compressed factor matrix $\hat{F}_t$ such that $\hat{R}_t \hat{F}_t \hat{C}_t^{\top}$ is sufficiently close to the original $X_t$. It is then natural to solve the following optimization problem for some matrix norm $\|.\|$,
\begin{equation}\label{equ:loss}
	(\hat{R}_t, \hat{F}_t, \hat{C}_t)=\argmin_{R_t \in \mathcal{O}_{p, p_{0}}, C_t \in \mathcal{O}_{q, q_{0}}, F_{t}}\|X_t- R_tF_tC_t^{\top}\|.
\end{equation}

For the Frobenius norm $\|.\|_F$, it is well-known that if $\hat{R}_t$ and $\hat{C}_t$ are given, $\hat{F}_t$ would simply be the projected value $\hat{R}_t^{\top}X_t \hat{C}_t$, see \cite{He2021Statistical}. After simple matrix manipulation, $\hat{R}_t$ would be the leading eigenvectors of $X_t P_{\hat{C}_t}X_t^{\top}$ and $\hat{C}_t$ would be the leading eigenvectors of $X_t^{\top} P_{\hat{R}_t}X_t$.

For the operator norm $\|.\|_{op}$, unfortunately, a close form solution for $\hat{F}_t$ even if $R_t$ and $C_t$ are given is vacant. However, the optimized value $\mathcal{M}(R_t,C_t)=\|X_t-R_t\hat{F}_tC_t^{\top}\|_{op}$ has a closed form for $\hat{F}_t=\argmin_{F_t}\|X_t-R_tF_tC_t^{\top}\|_{op}$, if the largest singular values of $X_t-R_t\hat{F}_t C_t^{\top}$ do not coincide so that the operator norm $\|.\|_{op}$ is differentiable, which is:

\[
	\mathcal{M}(R_t,C_t)=\|X_t-R_t\hat{F}_tC_t^{\top}\|_{op}=\sigma_{R_t}\vee \sigma_{C_t},
\]
where $\sigma_{R_t}^2$ and $\sigma_{C_t}^2$ are respectively the largest singular values of the matrices $\Sigma_{R_t}$ and $\Sigma_{C_t}$, defined as
\[
	\Sigma_{R_t}=(I-P_{R_t})X_t, \ \
	\Sigma_{C_t}=(I-P_{C_t})X_t^{\top}.
\]

We could minimize $\sigma_{R_t}$ and $\sigma_{C_t}$ separately. It is then straightforward that $\hat{R}_t$ and $\hat{C}_t$ are the leading $p_0$ left and $q_0$ right singular vectors of the matrix data $X_t$. In the end, as the optimization problem is sufficiently continuous and operator norm $\|.\|_{op}$ is differentiable almost everywhere except for a zero Lebesgue measure set, the above solutions would be numerically valid.

\subsection{MPCA algorithms for MEFM}
In this section, we give the details of the MPCA algorithms. We first discuss the operator loss approximation case, naming this variant as $\text{MPCA}_{op}$. As discussed earlier, $\text{MPCA}_{op}$ first acquires the best linear subspace approximation $\hat{R}_t$ and $\hat{C}_t$ for each data matrix $X_t$ by singular value decompositions. For non-degenerated cases namely $r_0=p_0=q_0$, $\hat{R}_t$ and $\hat{C}_t$ would be the leading $r_0$ left and right singular vectors of $X_t$, which are the representatives of elements on the Grassmann manifolds $\mathcal{G}(p_0, p)$ and $\mathcal{G}(q_0, q)$ respectively. Then, $\text{MPCA}_{op}$ finds the centers by minimizing the projection metrics on Grassmann manifolds:
\[
	\hat{R}_{op}/\sqrt{p}=\mathop{\arg\min}_{R^{\top}R=I_{p_0}}\sum_{t=1}^{T}\|RR^{\top}-\hat{R}_t\hat{R}_t^{\top}\|^2_F, \ \ \
	\hat{C}_{op}/\sqrt{q}=\mathop{\arg\min}_{C^{\top}C=I_{q_0}}\sum_{t=1}^{T}\|CC^{\top}-\hat{C}_t\hat{C}_t^{\top}\|^2_F,
\]
which is analogous to the physical notion of barycenter. Without loss of generality, we only focus on discussion of $\hat{R}$ here. Denote $P_R=RR^{\top}$ and $P_{\hat{R}_t}=\hat{R}_t\hat{R}_t^{\top}$, then we have:
\begin{equation*}
	\sum_{t}\|RR^{\top}-\hat{R}_t\hat{R}_t^{\top}\|^2_F=\sum_t\tr(P_R)+\sum_t\tr(P_{\hat{R}_t})-2\tr\left[P_R(\sum_t P_{\hat{R}_t})\right].
\end{equation*}

\begin{algorithm}[!h]
\caption{MPCA algorithm under the operator norm loss.}\label{alg:Mop}
\begin{algorithmic}[1] 
\REQUIRE ~~\\ 
    The set of all data matrices, $\{X_t\}$;\\
    Compression dimensions, $p_0\leq r$ and $q_0\leq r$;\\
\ENSURE ~~\\ 
    Estimators by $\text{MPCA}_{op}$, $\hat{R}_{op}$ and $\hat{C}_{op}$;
    \STATE Acquire the best linear subspace estimations $\hat{R}_t$ and $\hat{C}_t$ for each $X_t$. For $\text{MPCA}_{op}$, $\hat{R}_t$ and $\hat{C}_t$ are the leading $r_0=p_0\wedge q_0$ eigenvectors of $X_tX_t^{\top}$ and $X_t^{\top}X_t$;
    \STATE The  $\hat{R}_{op}/\sqrt{p}$ and $\hat{C}_{op}/\sqrt{q}$ are the leading $p_0$ and $q_0$ eigenvectors of the average projection matrices $\sum_{t}\hat{R}_t\hat{R}_t^{\top}/T$ and $\sum_{t}\hat{C}_t\hat{C}_t^{\top}/T$;
\RETURN $\hat{R}_{op}$, $\hat{C}_{op}$. 
\end{algorithmic}
\end{algorithm}

The first two terms on the right hand side are fixed, so we are actually maximizing the last term, namely $\mathcal{L}(R) = \tr\left[R^{\top}(\sum_t P_{\hat{R}_t})R\right]$. It is a classical eigenvalue problem and $\hat{R}_{op}/\sqrt{p}$ would be the leading $p_0$ eigenvectors of the average projection matrix $\sum_{t}P_{\hat{R}_t}/T$.

As for the degenerated case where $p_0\neq q_0$, without loss generality we
assume $p_0>q_0$, then $\hat{R}_t$ and $\hat{C}_t$ are the leading $r_0=q_0$ left and right singular vectors of $X_t$. Solve the same optimization problem and $\hat{R}_{op}/\sqrt{p}$, $\hat{C}_{op}/\sqrt{q}$ are still the leading $p_0$, $q_0$ eigenvectors of the average projection matrices $\sum_{t}P_{\hat{R}_t}/T$, $\sum_{t}P_{\hat{C}_t}/T$ respectively. It is easy to see that $\hat{C}_{op}/\sqrt{q}$ is exactly the same as in the non-degenerated case. As for $\hat{R}_t$, it is of size $p\times q_0$. It is no longer a representative of some element in $\mathcal{G}(p_0,p)$, thus the manifold center intuition no longer holds. However similar geometric interpretation is still somehow valid: each $\hat{R}_t$ corresponds to a $q_0$-dimensional subspace, and with some principal angle related arguments, $\tr(P_{\hat{R}_t}P_R)$ still gives the magnitude of deviation of $q_0$-dimensional $\span(\hat{R}_t)$ from the $p_0$-dimensional $\span(R)$. It is obvious that $\text{MPCA}_{op}$ needs $p_0\vee q_0 \leq r=p\wedge q$, as we are seeking for lower dimensional column and row subspaces, and the right hand side is the maximal rank of the original data matrices $\{X_t\}$. The detailed procedures for $\text{MPCA}_{op}$ is summarized in Algorithm \ref{alg:Mop}.

We now discuss the Frobenius loss case, naming this variant as $\text{MPCA}_{F}$. Under Frobenius loss, $\hat{R}_t$ would be the leading $r_0$ eigenvectors of $X_t P_C X_t^{\top}$ if $C$ is given and $\hat{C}_t$ would be the leading $r_0$ eigenvectors of $X_t^{\top}P_R X_t$ if $R$ is given. Then $\text{MPCA}_{F}$ takes the leading eigenvectors of the average projection matrices similarly. Note that iterative procedure is necessary as $C$ and $R$ is unavailable at the beginning. For initial value, the estimators by $\text{MPCA}_{op}$ can be adopted as a warm start. The detailed procedures for $\text{MPCA}_{F}$ is summarized in Algorithm \ref{alg:MF}.

\begin{algorithm}[H]
\caption{MPCA algorithm under the Frobenius norm loss.}\label{alg:MF}
\begin{algorithmic}[1] 
\REQUIRE ~~\\ 
    The set of all data matrices, $\{X_t\}$;\\
    Compression dimensions, $p_0\leq r$ and $q_0\leq r$;\\
\ENSURE ~~\\ 
    Estimators by $\text{MPCA}_{F}$, $\hat{R}_F$ and $\hat{C}_F$;
    \STATE Use the result of $\text{MPCA}_{op}$ as a warm start, denoted by $\hat{R}^{(0)}$ and $\hat{C}^{(0)}$;
    \STATE Assume we have acquired $\hat{R}^{(i)}$ and $\hat{C}^{(i)}$, then $\hat{R}_t^{(i+1)}$ and $\hat{C}_t^{(i+1)}$ are the leading $r_0=p_0\wedge q_0$ eigenvectors of $X_tP_{\hat{C}^{(i)}}X_t^{\top}$ and $X_t^{\top}P_{\hat{R}^{(i)}}X_t$ respectively;
    \STATE Then $\hat{R}^{(i+1)}/\sqrt{p}$ and $\hat{C}^{(i+1)}/\sqrt{q}$ are the leading $p_0$ and $q_0$ eigenvectors of the average projection matrices $\sum_{t}\hat{R}_t^{(i+1)}(\hat{R}_t^{(i+1)})^{\top}/T$ and $\sum_{t}\hat{C}_t^{(i+1)}(\hat{C}_t^{(i+1)})^{\top}/T$ respectively; iterate until convergence to $\hat{R}_F$, $\hat{C}_F$;
\RETURN $\hat{R}_F$, $\hat{C}_F$. 
\end{algorithmic}
\end{algorithm}

\section{Theoretical Results}
In this section, we present the theoretical properties of the estimators by MPCA, and throughout this section, the number of factors $p_0$ and $q_0$ are treated as given. The  determination of factor numbers $p_0$, $q_0$ are left to Section \ref{sec:fn}.

\subsection{Expected Projection Matrix}
Prior to presenting the consistency of our MPCA estimators, we first give some intuitions on why the algorithms work. For clearer illustration, we only analyze $\hat{R}_{op}$   by $\text{MPCA}_{op}$. Recall that the model is $X_t=RF_tC^{\top}+E_t$, while $\hat{R}_t$ is the leading $r_0$ left singular vectors of $X_t$. In the end, $\hat{R}_{op}/\sqrt{p}$ is acquired by taking the leading $p_0$ eigenvectors of the average projection matrix $\bar{P}_{\hat{R}_t}$. Obviously, the algorithm relies heavily on the concentration of average projection matrix $\bar{P}_{\hat{R}_t}$ to its expected version $\E P_{\hat{R}_t}$. The algorithms could be justified as long as we show that:

\begin{enumerate}
	\item The average projection matrix $\bar{P}_{\hat{R}_t}$ converges to the expected version $\E P_{\hat{R}_t}$ at a rate of $\sqrt{T}$.
	\item The leading eigenvectors of the expected version $\E P_{\hat{R}_t}$ give us $\span(R)$ as desired.
\end{enumerate}

As for the first part, with the help of matrix concentration inequalities in \cite{tropp2012user}, we have:

\begin{lemma}[$\sqrt{T}$-Convergence]\label{sqrt(T)-convergence}
	For i.i.d. random projection matrices $P_{\hat{R}_t}$ with dimension $p$, for all $x \geq0$, the following concentration inequality holds,
\begin{equation*}
\P\left\{\left\|\sum_{t=1}^T(P_{\hat{R}_t}-\E P_{\hat{R}_t}) \right\|_{op} \geq x\right\} \leq p \cdot \mathrm{e}^{-x^{2} / 8 T}.
\end{equation*}
\end{lemma}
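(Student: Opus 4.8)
The plan is to read the statement off directly from the matrix Hoeffding inequality of \cite{tropp2012user}. Set $A_t := P_{\hat{R}_t}-\E P_{\hat{R}_t}$ for $t=1,\dots,T$. By the hypothesis of the lemma the $P_{\hat{R}_t}$ are i.i.d., so the $A_t$ are i.i.d., self-adjoint (each $P_{\hat{R}_t}=\hat{R}_t\hat{R}_t^{\top}$ is symmetric), mean-zero random matrices of dimension $p$, which is precisely the setting of the matrix Hoeffding bound. It remains only to supply a deterministic almost-sure envelope for $A_t^2$ and to compute the resulting variance proxy.

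The key step is an elementary Loewner-order bound. Because $P_{\hat{R}_t}$ is an orthogonal projection onto an $r_0$-dimensional subspace, its spectrum lies in $\{0,1\}$, so $0\preceq P_{\hat{R}_t}\preceq I_p$ almost surely. Monotonicity of the expectation in the Loewner order gives $0\preceq \E P_{\hat{R}_t}\preceq I_p$, whence
\[
	-I_p \;\preceq\; A_t \;=\; P_{\hat{R}_t}-\E P_{\hat{R}_t} \;\preceq\; I_p,
\]
and consequently $A_t^2\preceq I_p$ almost surely. Taking the fixed dominating matrices to be $B_t=I_p$, the variance parameter in Tropp's bound is
\[
	\sigma^2 \;=\; \Bigl\|\sum_{t=1}^T B_t^2\Bigr\|_{op} \;=\; \bigl\|T\,I_p\bigr\|_{op} \;=\; T.
\]

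With these ingredients the matrix Hoeffding inequality yields $\P\{\lambda_{\max}(\sum_t A_t)\ge x\}\le p\,e^{-x^2/8T}$ for every $x\ge 0$. Since the family $\{-A_t\}$ satisfies the identical hypotheses ($\E(-A_t)=0$ and $(-A_t)^2=A_t^2\preceq I_p$), the same bound holds for $\lambda_{\max}(-\sum_t A_t)$; as $\sum_t A_t$ is symmetric, $\|\sum_t A_t\|_{op}=\max\{\lambda_{\max}(\sum_t A_t),\,\lambda_{\max}(-\sum_t A_t)\}$, which delivers the operator-norm statement of the lemma. The $\sqrt{T}$ rate in the lemma's name is then obtained by setting $x=c\sqrt{T}$: the tail decays like $e^{-c^2/8}$, so $\|\sum_t A_t\|_{op}=O_p(\sqrt{T})$ and the average projection matrix $\bar{P}_{\hat{R}_t}$ concentrates around $\E P_{\hat{R}_t}$ at rate $T^{-1/2}$.

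There is essentially no hard obstacle here: the entire content is the Loewner bound $A_t^2\preceq I_p$ together with the bookkeeping $\sigma^2=T$, after which the conclusion is a verbatim application of the cited inequality. The only point meriting care is the leading constant. Tropp's one-sided bound gives exactly $\P\{\lambda_{\max}(\sum_t A_t)\ge x\}\le p\,e^{-x^2/8T}$; the two-sided operator-norm event requires a union bound over $\pm\sum_t A_t$ and so formally carries a factor $2$, which is immaterial for the $O_p(\sqrt{T})$ conclusion used downstream and can be absorbed into the exponential tail.
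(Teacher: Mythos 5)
Your proposal is correct and follows essentially the same route as the paper's proof: both verify the almost-sure envelope $(P_{\hat{R}_t}-\E P_{\hat{R}_t})^2\preccurlyeq I_p$ via the bound $0\preccurlyeq P_{\hat{R}_t}\preccurlyeq I_p$ (and hence for its expectation) and then invoke the matrix Hoeffding inequality of \cite{tropp2012user} with variance proxy $\sigma^2=T$. If anything you are more careful than the paper, which dismisses the one-sided-to-two-sided step as ``straightforward'': as you note, the operator-norm event formally carries a factor $2$ (i.e.\ $2p\,\mathrm{e}^{-x^2/8T}$) from the union bound over $\pm\sum_t A_t$, a harmless discrepancy with the stated constant that does not affect any downstream use of the lemma.
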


	Although we are quite satisfied with this $\sqrt{T}$-consistency result for finite-dimensional matrix data, the haunting dimensional factor $p$ of matrix concentration inequalities would give exploding bounds if the  dimension $p$ tends to infinity. Fortunately, in this case of random projection matrices, we are able to shrink the dimensional factor $p$ to $r_0$ via intrinsic dimension arguments. As $r_0$ remains fixed as $p$ goes to infinity, dimension-free convergence could be acquired.

As for the second part, we claim that $\span(R)$ and $\span(R^{\perp})$ are invariant subspaces of the expected projection matrix $\E P_{\hat{R}_t}$ if the noise $E_t$ is left spherical. Matrix spherical distribution can be viewed as a special case of matrix elliptical distribution. The random matrix $X$ is left spherical if $X \sim E_{p, q}(0, I \otimes \Omega, \psi_l)$, right spherical if $X \sim E_{p, q}(0, \Sigma \otimes I, \psi_r)$ and spherical if $X \sim E_{p, q}(0, I \otimes I, \psi_s)$. If $E_t$ is left spherical, then $E_t \stackrel{d}{=} W E_t$, $\forall W \in \mathcal{O}_{p,p}$. Right spherical and spherical distributions have similar properties accordingly. Random matrices with i.i.d. centered Gaussian or $t_v$ elements are matrix spherically distributed, see \cite{gupta2018matrix} for details.

\begin{lemma}[Invariant Subspaces]\label{invariant-subspaces}
	For joint matrix elliptical data $X_t=RF_tC^{\top}+E_t$ as in \ref{eq_joint_ellip_model}, let $P_{\hat{R}_t}=\hat{R}_t\hat{R}_t^{\top}$, where $\hat{R}_t$ is the leading $r_0=p_0\wedge q_0$ eigenvectors of $X_tX_t^{\top}$. If $E_t$ is left spherical, then $\span(R)$ and $\span(R^{\perp})$ are invariant subspaces of $\E P_{\hat{R}_t}$.
\end{lemma}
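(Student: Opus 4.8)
The plan is to run a symmetry/equivariance argument. I will exhibit a group of orthogonal transformations that leaves the law of $X_t$ invariant and under which the projection $P_{\hat{R}_t}$ transforms equivariantly, and then use Schur-type reasoning to force $\E P_{\hat{R}_t}$ to be block-diagonal with respect to the decomposition $\mathbb{R}^p=\span(R)\oplus\span(R^{\perp})$, which is exactly the invariance claim.

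First I would record the equivariance of the estimator. Since $\hat{R}_t$ collects the leading $r_0$ eigenvectors of $X_tX_t^{\top}$, replacing $X_t$ by $WX_t$ for any $W\in\mathcal{O}_{p,p}$ sends $X_tX_t^{\top}$ to $WX_tX_t^{\top}W^{\top}$, leaving the eigenvalues unchanged and rotating the leading $r_0$-dimensional eigenspace by $W$. Hence the associated projection satisfies $P_{\hat{R}_t}(WX_t)=W\,P_{\hat{R}_t}(X_t)\,W^{\top}$. This identity is well defined whenever $\sigma_{r_0}(X_t)>\sigma_{r_0+1}(X_t)$, which holds almost surely and makes the choice of eigenvectors inside the eigenspace irrelevant, so that $P_{\hat{R}_t}$ is a genuine measurable function of $X_t$ with this equivariance property.

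Next I would identify the invariance group. Let $W$ be any orthogonal matrix that fixes $\span(R)$ pointwise and acts as an arbitrary orthogonal map $W_2$ on $\span(R^{\perp})$; in a basis adapted to $\span(R)\oplus\span(R^{\perp})$ this reads $W=\mathrm{diag}(I_{p_0},W_2)$. Because $W$ fixes $\span(R)$ we have $WR=R$, so $WX_t=RF_tC^{\top}+WE_t$, and the crux is to prove $WX_t\stackrel{d}{=}X_t$, i.e.\ $(F_t,WE_t)\stackrel{d}{=}(F_t,E_t)$. Here the left spherical assumption on $E_t$ forces $\Omega_1=I_p$, so that $E_t=(r_t/\|Z_t\|_2)\,Z_t^E\Omega_2^{1/2}$ and $WE_t=(r_t/\|Z_t\|_2)\,WZ_t^E\Omega_2^{1/2}$. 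Applying the orthogonal map $(Z_t^F,Z_t^E)\mapsto(Z_t^F,WZ_t^E)$ to the isotropic Gaussian $Z_t$ preserves its law, leaves $Z_t^F$ and the norm $\|Z_t\|_2^2=\|Z_t^F\|_F^2+\|Z_t^E\|_F^2$ (hence $F_t$) untouched, and sends $E_t\mapsto WE_t$; since $r_t$ is independent of $Z_t$, this yields the desired joint equality in distribution, and therefore $WX_t\stackrel{d}{=}X_t$.

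Combining the two steps, $W\,\E[P_{\hat{R}_t}]\,W^{\top}=\E[P_{\hat{R}_t}]$ for every such $W$, i.e.\ $\E P_{\hat{R}_t}$ commutes with all $\mathrm{diag}(I_{p_0},W_2)$. Writing $\E P_{\hat{R}_t}$ in the adapted basis as a $2\times2$ block matrix and imposing this commutation (for instance with $W_2=-I$) forces the off-diagonal blocks to vanish, so $\E P_{\hat{R}_t}$ is block-diagonal and both $\span(R)$ and $\span(R^{\perp})$ are invariant. The step I expect to be the main obstacle is the distributional invariance $WX_t\stackrel{d}{=}X_t$: one must verify that \emph{only} $E_t$, and not $F_t$, need be left spherical, which is precisely why $W$ is chosen to fix $\span(R)$ pointwise (carrying $F_t$ along unchanged) rather than to act nontrivially on $\span(R)$, and the coupling of $F_t$ and $E_t$ through the shared radial factor $r_t/\|Z_t\|_2$ must be handled by checking that $\|Z_t\|_2$ is preserved under the transformation.
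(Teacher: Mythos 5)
Your proposal is correct and takes essentially the same approach as the paper: the paper's proof is exactly your symmetry argument specialized to the single reflection $W=2P_R-I$ (your $W_2=-I$), establishing $WX_t\stackrel{d}{=}X_t$ via the rotational invariance of $Z_t^E$ together with preservation of $\|Z_t\|_2$ and independence of $r_t$, and then using the equivariance $P_{\hat{R}_t}(WX_t)=W\,P_{\hat{R}_t}(X_t)\,W^{\top}$ to force $\E P_{\hat{R}_t}$ to be block-diagonal with respect to $\span(R)\oplus\span(R^{\perp})$. Your more general group framing and the explicit almost-sure well-definedness of the eigenprojection are harmless refinements of the same idea.
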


	The joint matrix elliptical model here is more of a burden instead of blessing. In fact, the conclusion is more straightforward if $F_t$ and $E_t$ are independent and could be of independent interet. The combination of Lemmas \ref{sqrt(T)-convergence} and \ref{invariant-subspaces} theoretically justifies the validity of the proposed MPCA methods: the expected projection matrix contains adequate subspace information, while the matrix concentration to it is guaranteed by the compactness of the projection matrices.

\subsection{Technical Assumptions}
In this section, we give some technical assumptions to establish the convergence rates of the estimators by MPCA.

\begin{assumptionp}{A}[Joint Matrix Elliptical Model]\label{joint_elliptical}
	We assume matrix elliptical factor model as:

\begin{equation*}
	X_t = R F_tC^{\top}+E_t,\quad t=1,\dots,T,
\end{equation*}

\begin{equation*}
	\left(\begin{array}{c}\vec(F_t) \\ \vec(E_t)\end{array}\right)=r_{t}\left(\begin{array}{cc}\Sigma_2^{1/2}\otimes \Sigma_1^{1/2} & 0 \\ 0 & \Omega_2^{1/2}\otimes \Omega_1^{1/2}\end{array}\right) \frac{Z_t}{\|Z_t\|_2},
\end{equation*}
where $Z_t$ is a $(pq+p_0q_0)$-dimensional isotropic Gaussian vector, $r_t$ is a positive random variable independent of $Z_t$, with $(pq)^{-1/2}r_t=O_p(1)$ as $p,q\rightarrow \infty$. It is sometimes more convenient to separate the joint model into:
\begin{equation*}
	F_t = \frac{r_t}{\|Z_t\|_2}\Sigma_1^{1/2} Z_t^{F}\Sigma_2^{1/2}, \quad E_t = \frac{r_t}{\|Z_t\|_2}\Omega_1^{1/2} Z_t^{E}\Omega_2^{1/2},
\end{equation*}
with $\Sigma_1$ of size $p_0\times p_0$, $\Sigma_2$ of size $q_0\times q_0$, $\Omega_1$ of size $p\times p$ and $\Omega_2$ of size $q\times q$. $Z_t^F$ is a $p_0\times q_0$ random matrix made by leading $p_0q_0$ elements of $Z_t$, while $Z_t^E$ of size $p\times q$ consists of all the elements left.
\end{assumptionp}

\begin{assumptionp}{B}[Strong Factor Conditions]\label{strong_factor}
We assume $R^{\top} R/p = I_{p_0}$ and $C^{\top} C/q = I_{q_0}$. In addition, there exist positive constants $c_1$ and $C_1$ such that $c_1 \leq \lambda_{p_0}(\Sigma_1)\leq \lambda_{1}(\Sigma_1)\leq C_1$, $c_1 \leq \lambda_{q_0}(\Sigma_2)\leq \lambda_{1}(\Sigma_2)\leq C_1$ as $p, q\rightarrow\infty$.
\end{assumptionp}

\begin{assumptionp}{C}[Regular Noise Conditions]\label{regular_noise}
We assume there exist positive constants $c_2$ and $C_2$ such that $c_2\leq \lambda_{p}(\Omega_1)\leq \lambda_{1}(\Omega_1)\leq C_2$, $c_2\leq \lambda_{q}(\Omega_2)\leq \lambda_{1}(\Omega_2)\leq C_2$ as $p, q\rightarrow\infty$.

\end{assumptionp}

	The convergence relies heavily on matrix concentration results. The independence between $\{X_t\}$ in Assumption \ref{joint_elliptical} could extend readily to weak dependence by matrix concentration results such as matrix Azuma inequality, see \cite{tropp2012user}, \cite{tropp2015introduction} for details. Assumption \ref{strong_factor} and \ref{regular_noise} are standard in large-dimensional factor models. In addition, the joint matrix elliptical distribution  assumption  is only for the convenience of theoretical analysis, while empirical experiments show that the MPCA methods are not sensitive to the elliptical assumption.

\subsection{Consistency of Manifold PCA}

\begin{theorem}[Consistency of $\text{MPCA}_{op}$]\label{consistency_MPCAop}
	For $\text{MPCA}_{op}$, under Assumption \ref{joint_elliptical} to \ref{regular_noise}, there exist $p_0\times p_0$ orthonormal matrix $H_R$ and $q_0\times q_0$ orthonormal matrix $H_C$ such that:
	\begin{equation*}
		\|\hat{R}_{op}-RH_R\|_F^2/p=O_p(T^{-1}+p^{-1/2}+q^{-1/2}),
	\end{equation*}
	\begin{equation*}
		\|\hat{C}_{op}-CH_C\|_F^2/q=O_p(T^{-1}+p^{-1/2}+q^{-1/2}).
	\end{equation*}
\end{theorem}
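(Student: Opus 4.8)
The plan is to translate the theorem into a statement about the principal angles between $\span(\hat R_{op})$ and $\span(R)$, and then bound those angles by a bias--variance decomposition anchored at the expected projection matrix. First I would record the elementary equivalence that, since $\hat R_{op}/\sqrt p$ and $R/\sqrt p$ both have orthonormal columns, $\min_{H_R\in\mathcal O_{p_0,p_0}}\|\hat R_{op}-RH_R\|_F^2/p\asymp\|\sin\Theta(\hat R_{op},R)\|_F^2=\tr\big(P_{\hat R_{op}}(I-P_R)\big)$, where $P_R=RR^\top/p$ is the orthogonal projection onto $\span(R)$; it therefore suffices to bound this leakage by $O_p(T^{-1}+p^{-1/2}+q^{-1/2})$. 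Writing $M:=\E P_{\hat R_t}$ and $\bar P:=\bar P_{\hat R_t}$, and letting $V^\ast$ be the top-$p_0$ eigenspace of $M$, I would split $\|\sin\Theta(\hat R_{op},R)\|_F^2\lesssim\|\sin\Theta(\hat R_{op},V^\ast)\|_F^2+\|\sin\Theta(V^\ast,R)\|_F^2$ into a stochastic term and a deterministic bias term.

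For the stochastic term, $\hat R_{op}/\sqrt p$ is exactly the top-$p_0$ eigenspace of $\bar P$, so a Davis--Kahan $\sin\Theta$ bound gives $\|\sin\Theta(\hat R_{op},V^\ast)\|_F\lesssim\|\bar P-M\|_{op}/\delta$, with $\delta$ the eigengap of $M$ at index $p_0$. Lemma \ref{sqrt(T)-convergence}, sharpened by the intrinsic-dimension refinement noted after it (the summands are rank-$r_0$ projections, so the effective dimension is $r_0$, not $p$), yields $\|\bar P-M\|_{op}=O_p(T^{-1/2})$. It remains to show $\delta\gtrsim1$: since each $P_{\hat R_t}$ is a projection with $\tr P_{\hat R_t}=r_0$ and $\|P_{\hat R_t}\|_{op}\le1$, and since $\E\|P_{\hat R_t}-P_R\|_{op}\to0$ (established below), $M$ has its top $p_0$ eigenvalues near $1$ and the rest near $0$, so $\delta\ge1-o(1)\gtrsim1$ for large $p,q$. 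Squaring gives $\|\sin\Theta(\hat R_{op},V^\ast)\|_F^2=O_p(T^{-1})$.

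The bias term is where the real work lies, and it must be controlled without moment conditions on $r_t$. For $X_t=S_t+E_t$ with signal $S_t=RF_tC^\top$ of rank $r_0$ and left singular space $\span(R)$, Wedin's theorem gives $\|\sin\Theta(\hat R_t,R)\|_F\lesssim\|E_t\|_{op}/\sigma_{r_0}(S_t)$, and the crucial device is that the common scalar $r_t/\|Z_t\|_2$ entering both $E_t$ and $F_t$ cancels in this ratio, leaving the distribution-free quantity $\|\Omega_1^{1/2}Z_t^E\Omega_2^{1/2}\|_{op}/\sigma_{r_0}\big(R\Sigma_1^{1/2}Z_t^F\Sigma_2^{1/2}C^\top\big)$. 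Under Assumptions \ref{strong_factor}--\ref{regular_noise} the numerator concentrates at order $\sqrt{p\vee q}$ and the denominator is of order $\sqrt{pq}\,\sigma_{r_0}(Z_t^F)$, so with $a:=(p\wedge q)^{-1/2}$ the truncated bound $\|\sin\Theta(\hat R_t,R)\|_F^2\lesssim\min\!\big(r_0,\,a^2/\sigma_{r_0}(Z_t^F)^2\big)$ holds. The key computation is $\E\min\!\big(r_0,\,a^2/\sigma_{r_0}(Z_t^F)^2\big)\asymp a=(p\wedge q)^{-1/2}$: the heavy lower tail of the smallest singular value of the fixed-size Gaussian matrix $Z_t^F$ (near-singular individual factor matrices) inflates the naive $a^2$ to $a$, while truncation at $r_0$ keeps the expectation finite. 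Hence $\tr\big(M(I-P_R)\big)=\E\|\sin\Theta(\hat R_t,R)\|_F^2\asymp(p\wedge q)^{-1/2}$, and by Cauchy--Schwarz the off-diagonal block obeys $\|P_RMP_{R^\perp}\|_F\le\E\|\sin\Theta(\hat R_t,R)\|_F\lesssim(p\wedge q)^{-1/4}$. Comparing $M$ to its block-diagonal part $P_RMP_R+P_{R^\perp}MP_{R^\perp}$ (for which $\span(R)$ is exactly the top-$p_0$ eigenspace, the left-spherical situation of Lemma \ref{invariant-subspaces} where this block vanishes), a Davis--Kahan bound with gap $\delta\gtrsim1$ gives $\|\sin\Theta(V^\ast,R)\|_F\lesssim(p\wedge q)^{-1/4}$, i.e. $\|\sin\Theta(V^\ast,R)\|_F^2\lesssim p^{-1/2}+q^{-1/2}$. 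Combining the two terms yields the claimed rate; the statement for $\hat C_{op}$ is identical with $p,q$ and $R,C$ interchanged, and the degenerate case $p_0\neq q_0$ follows the same route with $r_0=p_0\wedge q_0$ and the principal-angle interpretation discussed after Algorithm \ref{alg:Mop}.

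The main obstacle, as flagged, is the distribution-free second-moment estimate $\E\|\sin\Theta(\hat R_t,R)\|_F^2\asymp(p\wedge q)^{-1/2}$ together with the $r_t$-cancellation that makes it attainable: one must verify that the Wedin ratio genuinely decouples from $r_t$ under the joint elliptical law, and that the near-singularity of $Z_t^F$ is precisely absorbed by truncation, so that the expectation neither diverges nor collapses to the over-optimistic $(p\wedge q)^{-1}$. Everything else is perturbation theory and the two auxiliary lemmas.
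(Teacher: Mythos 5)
Your treatment of the non-degenerate case $p_0=q_0=r_0$ is correct, and on one key point it takes a genuinely different route from the paper: the handling of non-spherical noise. The paper cannot invoke Lemma \ref{invariant-subspaces} directly, since $E_t=\zeta_t\Omega_1^{1/2}Z_t^E\Omega_2^{1/2}$ is not left spherical when $\Omega_1\not\propto I_p$; it therefore builds a ``spherical neighbour'' $\dot E_t=\zeta_t\omega_1^{1/2}Z_t^E\Omega_2^{1/2}$, runs the invariant-subspace argument on $\dot X_t$, and controls the deviation $\|P_{\hat R}-\dot P_{\hat R}\|_F=O_p(T^{-1/2}+p^{-1/4}+q^{-1/4})$ by the Yu--Wang--Samworth $\sin\Theta$ inequality plus Wedin plus truncation. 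You bypass sphericity altogether: the Wedin/truncation estimate $\E\tr\big(P_{\hat R_t}(I-P_R)\big)\lesssim (p\wedge q)^{-1/2}$ simultaneously pins down both diagonal blocks of $M=\E P_{\hat R_t}$ and, via Cauchy--Schwarz and idempotency, the off-diagonal block $\|P_R M P_{R^\perp}\|_F\lesssim (p\wedge q)^{-1/4}$, after which Davis--Kahan against the block-diagonal part of $M$ gives the bias. This is leaner (the invariance lemma is never needed, and the non-sphericity correction and the bias are one and the same term), and your layer-cake computation $\E\min\big(r_0,a^2/\sigma_{r_0}^2(Z_t^F)\big)\asymp a$ is sharper than the paper's two-point truncation, which yields only $O\big((p\wedge q)^{-1/3}\big)$ for the corresponding quantity; both suffice for the stated rate. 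One caveat: your appeal to intrinsic dimension for $\|\bar P-M\|_{op}=O_p(T^{-1/2})$ is not automatic in the non-degenerate case, where every eigenvalue of $\E(P_{\hat R_t}-M)^2$ vanishes asymptotically and the trace-to-norm ratio need not stay $O(r_0)$; the paper only verifies the intrinsic-dimension bound in the degenerate case, and in the non-degenerate case falls back on $\|\bar P-M\|_{op}\le\sum_t\|P_{\hat R_t}-M\|_{op}/T=O_p(T^{-1/2}+p^{-1/4}+q^{-1/4})$ by scalar concentration, which is all the theorem needs.

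The genuine gap is the degenerate case $p_0\neq q_0$, which the theorem covers (whenever $p_0\neq q_0$, one of $\hat R_{op}$, $\hat C_{op}$ is necessarily degenerate) and which you dismiss with ``follows the same route.'' It does not: there $\hat R_t$ has only $r_0=p_0\wedge q_0<p_0$ columns, so $P_{\hat R_t}$ has rank $r_0$, $\|P_{\hat R_t}-P_R\|_{op}=1$ almost surely, and $\tr M=r_0<p_0$, so the top $p_0$ eigenvalues of $M$ cannot all be near $1$ --- both of your stated justifications for the eigengap $\delta\gtrsim 1$ are false in this regime. The bound $d_{p_0}(M)\ge c>0$ is exactly where the work lies: the paper's Lemma \ref{lemma_pe} proves it through the stochastic representation $Z_t^F=U_tD_tV_t^{\top}$, the almost-sure invertibility of $U_t^{\top}\Sigma_1U_t$ with $\lambda_{q_0}\big((U_t^{\top}\Sigma_1U_t)^{-1}\big)\ge 1/C_1$, and a contradiction argument showing $\inf_{u\in\span(R),\,\|u\|_2=1}\langle u,\E P_{R_t}u\rangle>0$ uniformly in $p,q$. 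Without this (or an equivalent) ingredient, every Davis--Kahan step in your argument --- for the stochastic term and for the bias term alike --- has no gap to divide by, so the proposal as written establishes the theorem only when $p_0=q_0$.
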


As MPCA methods rely heavily on the concentration of $T$ projection matrices, the convergence rate would be at most $T^{-1}$. It is slower than $\alpha$-PCA from \cite{chen2021statistical} and PE from \cite{Yu2021Projected} with the rate $(Tq)^{-1}$ (or $(Tp)^{-1}$) when estimating $R$ (or $C$) under strong signal conditions, which are comparable to taking each column (or row) as individual observation.
  The inefficiency of MPCA methods comes from the fact that by taking each projection matrix as individual observation would lose information especially when each matrix observation $X_t$ is of large dimensions. However, as we will see in the simulation study, MPCA performs comparably to the classical methods for Gaussian noise, while the compactness of the projection matrices ensures the good performance of MPCA even for noises without any moments. As a result, they could be potential replacements of classical methods for data with heavier-tailed noise.

For $\text{MPCA}_{F}$ method, we further discuss how the information of column factor loading matrix $C$ would help to estimate $R$, which is named as the projection effect in this work. Estimation of $C$ can be discussed in a similar way. Recall that for $\text{MPCA}_{F}$, $\hat{R}^{(i)}_t$ would be the leading $r_0=p_0\wedge q_0$ eigenvalues of $X_tP_{\hat{C}^{(i-1)}} X_t^{\top}$ if $\hat{C}^{(i-1)}$ is given. Let $C^{(i-1)} = \hat{C}^{(i-1)}/\sqrt{q}$ for notational simplicity. We focus on the projected matrix model $X_tC^{(i-1)}=RF_t C^{\top}C^{(i-1)}+E_tC^{(i-1)}$, and $\hat{R}^{(i)}_t$ would  exactly be the estimator by the MPCA$_{op}$ to the projected data set $\{X_tC^{(i-1)}\}$.

The difference with/without projection lies in the signal-to-noise ratio level. Consider the extreme case where the true value $C$ is known, for $X_tC/\sqrt{q}=q^{1/2}RF_t+E_t C/\sqrt{q}$, the projection does no harm to the signal size while compressing the noise to lower-dimensional $E_tC/\sqrt{q}$. It is then foreseeable that we could increase the signal-to-noise ratio via projection by some $\hat{C}^{(i-1)}$ sufficiently close to $C$, keeping the signal size almost unchanged. It is ensured by assuming $\sigma_{q_0}(C^{\top} \hat{C}^{(i-1)})/q=c>0$, which is a rather mild condition.

\begin{theorem}[Projection Effect of $\text{MPCA}_{F}$]\label{PE_MPCAF}
	For each iteration step of MCPCA$_F$, under Assumption \ref{joint_elliptical} to \ref{regular_noise}, given $\hat{C}^{(i-1)}$, if we assume that $\sigma_{q_0}(C^{\top} \hat{C}^{(i-1)})/q>0$, then there exists $p_{0} \times p_{0}$ orthonormal matrix $H_{R}$ such that:
\begin{equation*}
	\left\|\widehat{R}^{(i)}-R H_{R}\right\|_{F}^{2} / p=O_{p}\left(T^{-1}+q^{-1 / 2}\right).
\end{equation*}

Similarly, given $\hat{R}^{(i-1)}$, if we assume that $\sigma_{p_0}(R^{\top} \hat{R}^{(i-1)})/p>0$, then there exists $q_{0} \times q_{0}$ orthonormal matrix $H_{C}$ such that:
\begin{equation*}
	\left\|\widehat{C}^{(i)}-C H_{C}\right\|_{F}^{2} / p=O_{p}\left(T^{-1}+p^{-1 / 2}\right).
\end{equation*}
\end{theorem}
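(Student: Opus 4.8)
The plan is to recognize that one iteration of $\text{MPCA}_F$ for $R$ is nothing but $\text{MPCA}_{op}$ run on a projected sample, so that Theorem~\ref{consistency_MPCAop} can be reused almost verbatim once the signal-to-noise ratio is recomputed. Fix $\hat C^{(i-1)}$, set $C^{(i-1)}=\hat C^{(i-1)}/\sqrt q$ so that $(C^{(i-1)})^\top C^{(i-1)}=I_{q_0}$, and define $Y_t:=X_tC^{(i-1)}$. Since $Y_tY_t^\top=X_tP_{\hat C^{(i-1)}}X_t^\top$, the matrix $\hat R_t^{(i)}$ is exactly the leading $r_0=p_0\wedge q_0$ left singular vectors of $Y_t$, i.e. the local $\text{MPCA}_{op}$ estimator for the projected data, and $\{Y_t\}$ obeys the reduced factor model $Y_t=R\tilde F_t+\tilde E_t$ with $\tilde F_t=F_tC^\top C^{(i-1)}$ and $\tilde E_t=E_tC^{(i-1)}$. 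First I would verify that $\{Y_t\}$ meets the hypotheses behind Theorem~\ref{consistency_MPCAop} (conditionally on $\hat C^{(i-1)}$) and then invoke the same two-step argument: concentration of the average projection matrix plus the invariant-subspace property of its expectation.

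For the invariant-subspace input of Lemma~\ref{invariant-subspaces}, I would note that linear maps preserve matrix ellipticity, so $(\tilde F_t,\tilde E_t)$ is again jointly matrix elliptical; and since $E_t$ is left spherical, $\tilde E_t=E_tC^{(i-1)}\stackrel{d}{=}WE_tC^{(i-1)}=W\tilde E_t$ for every $W\in\mathcal{O}_{p,p}$, so $\tilde E_t$ is left spherical as well. Lemma~\ref{invariant-subspaces} then gives that $\span(R)$ and $\span(R^\perp)$ are invariant subspaces of $\E P_{\hat R_t^{(i)}}$, guaranteeing that its leading $p_0$ eigenvectors recover $\span(R)$. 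Lemma~\ref{sqrt(T)-convergence}, with the intrinsic-dimension refinement, is unaffected by the projection and still controls $\|\sum_t(P_{\hat R_t^{(i)}}-\E P_{\hat R_t^{(i)}})\|_{op}$ at the dimension-free $\sqrt T$ rate, furnishing the $T^{-1}$ term. A Davis--Kahan comparison of the leading $p_0$ eigenvectors of $\sum_t P_{\hat R_t^{(i)}}/T$ with those of $\E P_{\hat R_t^{(i)}}$ then yields the orthonormal $H_R$ and the final bound.

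The only quantity that changes relative to Theorem~\ref{consistency_MPCAop} is the single-observation bias, governed by the reciprocal signal-to-noise ratio $\|\tilde E_t\|_{op}/\sigma_{r_0}(R\tilde F_t)$, in which the common elliptical scale $r_t/\|Z_t\|_2$ cancels (this is precisely why no moment condition is needed). Under Assumptions~\ref{strong_factor}--\ref{regular_noise} and the hypothesis $\sigma_{q_0}(C^\top\hat C^{(i-1)})/q>0$ one has $\sigma_{q_0}(C^\top C^{(i-1)})=\sigma_{q_0}(C^\top\hat C^{(i-1)})/\sqrt q\gtrsim\sqrt q$, so $\sigma_{r_0}(R\tilde F_t)\gtrsim\sqrt p\,\sigma_{r_0}(F_t)\,\sigma_{q_0}(C^\top C^{(i-1)})\asymp\sqrt{pq}$; meanwhile $\tilde E_t=E_tC^{(i-1)}$ has fixed column dimension $q_0$, hence $\|\tilde E_t\|_{op}=O_p(\sqrt p)$ in contrast with $\|E_t\|_{op}\asymp\sqrt{p\vee q}$ before projection. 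Consequently the reciprocal signal-to-noise ratio drops from $(p\wedge q)^{-1/2}$ to $\sqrt p/\sqrt{pq}=q^{-1/2}$, which is exactly the mechanism that removes the $p^{-1/2}$ term of Theorem~\ref{consistency_MPCAop} and produces $O_p(T^{-1}+q^{-1/2})$. The estimator of $C$ is symmetric: projecting $X_t^\top$ by $\hat R^{(i-1)}/\sqrt p$ lowers the noise level to $O_p(\sqrt q)$ while the signal stays $\asymp\sqrt{pq}$, so the bias term becomes $p^{-1/2}$.

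The hard part will be that $\hat C^{(i-1)}$ is itself computed from the whole sample $\{X_t\}$, so conditioning on it does not literally preserve the independence and left-sphericity of $\{\tilde E_t\}$ that the argument uses; the displayed identity $\tilde E_t\stackrel{d}{=}W\tilde E_t$ silently treats $C^{(i-1)}$ as fixed. I would resolve this by first proving the bound for an \emph{arbitrary deterministic} $C^{(i-1)}\in\mathcal{O}_{q,q_0}$ with $\sigma_{q_0}(C^\top C^{(i-1)})/\sqrt q\gtrsim 1$, obtaining a conclusion uniform over such matrices, and then transferring it to the data-driven iterate through the warm-start consistency of Theorem~\ref{consistency_MPCAop}, which forces the angle condition to hold with probability tending to one. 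The delicate point is to ensure the heavy tails do not re-enter after projection: because $\{P_{\hat R_t^{(i)}}\}$ are uniformly bounded projection matrices, the rare events on which $\|\tilde E_t\|_{op}$ is atypically large contribute negligibly to $\E\|P_{\hat R_t^{(i)}}-P_R\|_{op}$, so the moment-free control survives exactly as in the unprojected case.
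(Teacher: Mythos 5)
Your proposal is correct and follows essentially the same route as the paper's own proof: it likewise recognizes $\hat{R}^{(i)}_t$ as the $\text{MPCA}_{op}$ local estimator applied to the projected data $\{X_tC^{(i-1)}\}$, observes that right-multiplication by $C^{(i-1)}$ preserves left sphericity so that Lemmas \ref{sqrt(T)-convergence} and \ref{invariant-subspaces} still apply, and recomputes the signal-to-noise ratio exactly as the paper does (signal still $\gtrsim (pq)^{1/2}\zeta_t\sigma_{r_0}(Z_t^F)$ via Courant--Fischer under $\sigma_{q_0}(C^{\top}\hat{C}^{(i-1)})/q>0$, noise reduced to $O_p(\zeta_t p^{1/2})$ because after projection only $q_0$ Gaussian columns of $Z_t^E$ matter), which yields $O_p(T^{-1}+q^{-1/2})$. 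Your final paragraph's concern about the data-dependence of $\hat{C}^{(i-1)}$ is not resolved by the paper either --- its proof, like the theorem statement, simply treats $\hat{C}^{(i-1)}$ as given, and the authors explicitly defer the analysis of the full iterative procedure to future work --- so your proposed uniform-over-deterministic-matrices repair is an addition beyond the paper's argument rather than a reconstruction of it.
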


   As we take the estimator  MCPCA$_{op}$ as the initial estimate, which is shown to be consistent under Assumption \ref{joint_elliptical} to \ref{regular_noise} by Theorem \ref{consistency_MPCAop}, we have $\sigma_{q_0}(C^{\top} \hat{C}^{(0)})/q>0$ and $\sigma_{p_0}(R^{\top} \hat{R}^{(0)})/p>0$ with probability tending to 1.

\subsection{Factor and Common Component Matrices}
After the loading matrices being determined, the factor matrix $F_t$ can be naturally estimated by $\hat{F}_t=\hat{R}^{\top} X_t \hat{C}/(pq)$, and the common component matrix $S_t=RF_t C^{\top}$  be estimated by $\hat{S}_t = \hat{R} \hat{F}_t \hat{C}^{\top}$.

\begin{corollary}[Consistency of Factor and Common Component Matrices]\label{consistency_factor_cc}
	Suppose there exist $p_0\times p_0$ orthonormal matrix $H_R$ and $q_0\times q_0$ orthonormal matrix $H_C$ such that for $\varepsilon_R = \hat{R}-RH_R$ and $\varepsilon_C = \hat{C}-CH_C$, we have $\|\varepsilon_R/\sqrt{p}\|_{op} = o_p(1)$ and $\|\varepsilon_C/\sqrt{q}\|_{op}=o_p(1)$, then:
	\begin{equation*}
		\|\hat F_{t}-H_R^{\top}F_t H_C\|_{op} = O_p\left( \|\frac{\varepsilon_R}{\sqrt{p}}\|_{op}+\|\frac{\varepsilon_C}{\sqrt{q}}\|_{op}+(pq)^{-1/2}\right),
	\end{equation*}
	
	\begin{equation*}
		\|\hat{S}_t-S_t\|_{op}/\sqrt{pq} = O_p\left( \|\frac{\varepsilon_R}{\sqrt{p}}\|_{op}+\|\frac{\varepsilon_C}{\sqrt{q}}\|_{op}+(pq)^{-1/2}\right).
	\end{equation*}
\end{corollary}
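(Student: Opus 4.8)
The plan is to establish the factor-matrix bound first and then bootstrap it to the common-component bound. Throughout I will exploit the normalizations $R^\top R = pI_{p_0}$, $C^\top C = qI_{q_0}$, and, since $\hat R/\sqrt p$ and $\hat C/\sqrt q$ have orthonormal columns, also $\hat R^\top\hat R = pI_{p_0}$, $\hat C^\top\hat C = qI_{q_0}$; hence $\|R\|_{op}=\|\hat R\|_{op}=\sqrt p$ and $\|C\|_{op}=\|\hat C\|_{op}=\sqrt q$. At the outset I will record two size estimates under Assumptions \ref{joint_elliptical}--\ref{regular_noise}: first $\|F_t\|_{op}=O_p(1)$, which follows from $F_t=\tfrac{r_t}{\|Z_t\|_2}\Sigma_1^{1/2}Z_t^F\Sigma_2^{1/2}$ together with $r_t/\|Z_t\|_2=O_p(1)$, the boundedness of $\Sigma_1,\Sigma_2$, and the fixed-dimensional Gaussian $Z_t^F$; and second the crude bound $\|E_t\|_{op}=O_p((p\vee q)^{1/2})$.

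For the factor matrix I substitute $\hat R = RH_R+\varepsilon_R$, $\hat C = CH_C+\varepsilon_C$ into $\hat F_t=(pq)^{-1}\hat R^\top(RF_tC^\top+E_t)\hat C$. Using $\hat R^\top R = pH_R^\top+\varepsilon_R^\top R$ and $C^\top\hat C = qH_C+C^\top\varepsilon_C$, the signal part expands as
\[
	H_R^\top F_t H_C + \frac1q H_R^\top F_t C^\top\varepsilon_C + \frac1p \varepsilon_R^\top R F_t H_C + \frac1{pq}\varepsilon_R^\top R F_t C^\top\varepsilon_C .
\]
By submultiplicativity together with $\|R\|_{op}=\sqrt p$, $\|C\|_{op}=\sqrt q$ and $\|F_t\|_{op}=O_p(1)$, the last three summands are $O_p(\|\varepsilon_C/\sqrt q\|_{op})$, $O_p(\|\varepsilon_R/\sqrt p\|_{op})$ and $O_p(\|\varepsilon_R/\sqrt p\|_{op}\|\varepsilon_C/\sqrt q\|_{op})$ respectively, the last being of higher order. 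It then remains to bound the noise part $(pq)^{-1}\hat R^\top E_t\hat C$.

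The main obstacle is precisely this noise term, since the naive estimate $(pq)^{-1}\|\hat R\|_{op}\|E_t\|_{op}\|\hat C\|_{op}=O_p((p\wedge q)^{-1/2})$ is too weak to reach the claimed $(pq)^{-1/2}$. I will instead expand $\hat R^\top E_t\hat C$ along both decompositions into four pieces and exploit that $p_0,q_0$ are fixed so that operator and Frobenius norms of the resulting fixed-size matrices are equivalent. For the leading piece $H_R^\top R^\top E_t C H_C$, writing $R^\top E_t C=\frac{r_t}{\|Z_t\|_2}(R^\top\Omega_1^{1/2})Z_t^E(\Omega_2^{1/2}C)$ and using $\E\|AZ_t^E B\|_F^2=\|A\|_F^2\|B\|_F^2$ for Gaussian $Z_t^E$, together with $\|R^\top\Omega_1^{1/2}\|_F^2=\tr(R^\top\Omega_1 R)=O(p)$ and $\|\Omega_2^{1/2}C\|_F^2=\tr(C^\top\Omega_2 C)=O(q)$, gives $\|R^\top E_t C\|_{op}=O_p((pq)^{1/2})$, hence a contribution of order $(pq)^{-1/2}$. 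The identical moment computation yields $\|R^\top E_t\|_{op}$ and $\|E_t C\|_{op}$ of order $(pq)^{1/2}$, so the two mixed pieces contribute $O_p(\|\varepsilon_C\|_{op}/\sqrt{pq})$ and $O_p(\|\varepsilon_R\|_{op}/\sqrt{pq})$, both dominated by $\|\varepsilon_C/\sqrt q\|_{op}$ and $\|\varepsilon_R/\sqrt p\|_{op}$, while the remaining piece $\varepsilon_R^\top E_t\varepsilon_C$ is of even higher order. Collecting the terms gives the first assertion.

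For the common-component matrix I write $\hat F_t=H_R^\top F_t H_C+\Delta_t$ with $\|\Delta_t\|_{op}$ equal to the rate just proved, and use $\hat R H_R^\top=R+\varepsilon_R H_R^\top$ and $H_C\hat C^\top=C^\top+H_C\varepsilon_C^\top$ to obtain
\[
	\hat S_t-S_t = R F_t H_C\varepsilon_C^\top + \varepsilon_R H_R^\top F_t C^\top + \varepsilon_R H_R^\top F_t H_C\varepsilon_C^\top + \hat R\Delta_t\hat C^\top .
\]
Dividing by $\sqrt{pq}$ and reusing the same size estimates, the first two terms are $O_p(\|\varepsilon_C/\sqrt q\|_{op})$ and $O_p(\|\varepsilon_R/\sqrt p\|_{op})$, the third is higher order, and the last equals $\|\hat R\|_{op}\|\Delta_t\|_{op}\|\hat C\|_{op}/\sqrt{pq}=\|\Delta_t\|_{op}$, i.e.\ exactly the factor-matrix rate, which yields the second assertion. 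Apart from the refined noise bound, every step is a routine submultiplicative estimate, so the crux of the argument is the Frobenius-norm moment computation that upgrades the noise cross term from $(p\wedge q)^{-1/2}$ to $(pq)^{-1/2}$.
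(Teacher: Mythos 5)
Your proposal is correct, and its overall architecture coincides with the paper's own proof: both plug $X_t = RF_tC^\top + E_t$ into $\hat F_t=\hat R^\top X_t\hat C/(pq)$, expand signal and noise parts through $\hat R = RH_R+\varepsilon_R$, $\hat C = CH_C+\varepsilon_C$, identify $H_R^\top R^\top E_t CH_C$ as the leading noise piece of order $(pq)^{1/2}$, kill the cross terms by submultiplicativity, and then repeat the substitution inside $\hat S_t=\hat R\hat F_t\hat C^\top$ (your grouping of the $\Delta_t$-terms via $\hat RH_R^\top = R+\varepsilon_RH_R^\top$ is a cosmetic condensation of the paper's seven-term expansion). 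The one genuine difference is how the crucial bound $\|R^\top E_tC\|_{op}=O_p((pq)^{1/2})$ is obtained: the paper invokes the rotational-invariance ``projection argument'' from its proof of Theorem \ref{PE_MPCAF} (rotate $Z_t^E$ so that only a $p\times q_0$ Gaussian block matters, then repeat on the left), whereas you use the elementary second-moment identity $\E\|AZ_t^EB\|_F^2=\|A\|_F^2\|B\|_F^2$ together with $\tr(R^\top\Omega_1R)=O(p)$, $\tr(C^\top\Omega_2C)=O(q)$ and the fixed dimension $p_0\times q_0$ of the resulting matrix. Your route is self-contained (Markov's inequality suffices, no appeal to an earlier section) and it also yields the sharper bounds $\|R^\top E_t\|_{op},\|E_tC\|_{op}=O_p((pq)^{1/2})$ for the mixed pieces, where the paper instead uses the cruder global bound $\|E_t\|_{op}=O_p((pq)^{1/2})$; the paper's route, on the other hand, recycles machinery already needed for Theorem \ref{PE_MPCAF} and extends more directly to settings where one only has spherical symmetry rather than exact Gaussianity of $Z_t^E$. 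Both arguments are valid under Assumption \ref{joint_elliptical}, which you (like the paper) correctly use to get $\zeta_t=r_t/\|Z_t\|_2=O_p(1)$, $\|F_t\|_{op}=O_p(1)$ and the noise-size estimates.
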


 Here $\|\varepsilon_R/\sqrt{p}\|_{op} = o_p(1)$ and $\|\varepsilon_C/\sqrt{q}\|_{op}=o_p(1)$ are direct consequences of Theorem \ref{consistency_MPCAop} under Assumptions \ref{joint_elliptical} to \ref{regular_noise}, so we claim the consistency of factor and common component matrices.

\section{Determining the Factor Numbers}\label{sec:fn}

In the last section, the factor number is assumed to be known in advance, while in practice, the factor numbers $p_0$ and $q_0$ need to be determined. We propose a natural criterion by calculating eigenvalue-ratios (ER) of the average projection matrices, under both $\text{MPCA}_{op}$ and $\text{MPCA}_{F}$. The corresponding algorithms are named as $\text{MER}_{op}$ and $\text{MER}_{F}$ respectively. Unlike existing eigenvalue-ratio methods based on covariance-type matrices as in \cite{chen2021statistical} and \cite{Yu2021Projected}, $\text{MER}_{op}$ and $\text{MER}_{F}$ are clearly free of moment-constraints. For $\text{MER}_{op}$, first determine the compression rank $\hat{r}_0$ by averaging $\hat{r}_{0,t}$ acquired from each data matrix $X_t$, that is:
\begin{equation*}
	\hat{r}_{0,t} = \argmax_{j\leq r_{\max}}\frac{\sigma_j(X_t)}{\sigma_{j+1}(X_t)},\quad \hat{r}_0 = \lfloor \sum_{t=1}^{T}\hat{r}_{0,t}/T+\frac{1}{2}\rfloor,
\end{equation*}
where $\lfloor x+\frac{1}{2}\rfloor$ means rounding $x$ to the nearest integer. Then $p_0$ and $q_0$ are estimated by:
\begin{equation}\label{eq_MERop}
	\hat{p}_0^{op}=\argmax_{j\leq r_{\max}}\frac{\lambda_j(\bar{P}_{\tilde{R}_t})}{\lambda_{j+1}(\bar{P}_{\tilde{R}_t})},\quad \hat{q}_0^{op}=\argmax_{j\leq r_{\max}}\frac{\lambda_j(\bar{P}_{\tilde{C}_t})}{\lambda_{j+1}(\bar{P}_{\tilde{C}_t})}.
\end{equation}
where $r_{\max}$ is predetermined value larger than $p_0$, $q_0$, while $\bar{P}_{\tilde{R}_t}$, $\bar{P}_{\tilde{C}_t}$ are the average projection matrices by taking  the leading $\hat{r}_0$ left and right singular vectors of each $X_t$ respectively.

\begin{remark}
	In fact, accurate estimation of $\hat{r}_0$ is not necessary for $\text{MER}_{op}$, we could still get comparable results from \ref{eq_MERop} even if $\hat{r}_0\neq r_0$. However, since each data matrix contains at most $r_0$-dimensional subspace information, pre-estimation of $r_0$ could stabilize the algorithm. Once $r_0$ has been correctly estimated, $\tilde{R}_t$ would be exactly $\hat{R}_t$ in MPCA$_{op}$. Further analysis in supplementary materials ensures that $\lambda_{p_0}(\E P_{\hat{R}_t})\geq c>0$, $\lambda_{p_0+1}(\E P_{\hat{R}_t})\rightarrow 0$ while $\|\bar{P}_{\hat{R}_t}-\E P_{\hat{R}_t}\|_{op}\rightarrow 0$ as $T,p,q\rightarrow \infty$ under Assumption \ref{joint_elliptical} to \ref{regular_noise}, which theoretically justifies $\text{MER}_{op}$.
\end{remark}

\begin{algorithm}[H]
\caption{$\text{MER}_{op}$ estimators of the pair of the factor numbers}
\begin{algorithmic}[1] 
\REQUIRE ~~\\ 
    The set of all data matrices, $\{X_t\}$;\\
    Maximum number, $r_{\max}$;\\
\ENSURE ~~\\ 
    $\text{MER}_{op}$ estimators, $\hat{p}_0^{op}$ and $\hat{q}_0^{op}$;
    \STATE Acquire compression dimension $\hat{r}_0$ by averaging $\hat{r}_{0,t}$ from each data matrix $X_t$, where $\hat{r}_{0,t} = \argmax_{j\leq r_{\max}}\sigma_j(X_t)/\sigma_{j+1}(X_t)$ and $\hat{r}_0 = \lfloor \sum_{t=1}^{T}\hat{r}_{0,t}/T+\frac{1}{2}\rfloor$;
    \STATE Acquire the best linear subspace estimations $\tilde{R}_t$ and $\tilde{C}_t$ for each $X_t$, which are the leading $\hat{r}_{0}$ eigenvectors of $X_tX_t^{\top}$ and $X_t^{\top}X_t$;
    \STATE Calculate the average projection matrices $\bar{P}_{\tilde{R}_t}$, $\bar{P}_{\tilde{C}_t}$ from $\tilde{R}_t$ and $\tilde{C}_t$. Determine $\hat{p}_0^{op}$ and $\hat{q}_0^{op}$ by finding the largest eigenvalue-ratio from $\lambda_j(\bar{P}_{\tilde{R}_t})/\lambda_{j+1}(\bar{P}_{\tilde{R}_t})$ and $\lambda_j(\bar{P}_{\tilde{C}_t})/\lambda_{j+1}(\bar{P}_{\tilde{C}_t})$ for $j\leq r_{\max}$;
\RETURN $\hat{p}_0^{op}$, $\hat{q}_0^{op}$. 
\end{algorithmic}
\end{algorithm}

\begin{theorem}[Consistency of $\text{MER}_{op}$ estimators]\label{consistency_MERop}
	Under Assumption \ref{joint_elliptical} to \ref{regular_noise}, assume the maximum number $r_{\max}\geq p_0\vee q_0$, as $T$, $p$, $q\rightarrow \infty$,
	\begin{equation*}
		\P(\hat{p}_0^{op}=p_0) \rightarrow 1,\quad\P(\hat{q}_0^{op}=q_0) \rightarrow 1.
	\end{equation*}
\end{theorem}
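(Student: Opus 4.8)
The plan is to establish consistency of the eigenvalue-ratio estimators $\hat{p}_0^{op}$ and $\hat{q}_0^{op}$ by combining the two eigenvalue facts quoted in the remark following the $\text{MER}_{op}$ algorithm: that the expected projection matrix $\E P_{\hat{R}_t}$ has a spectral gap at index $p_0$ (its $p_0$-th eigenvalue is bounded below by some $c>0$ while its $(p_0+1)$-th eigenvalue vanishes asymptotically), and that the empirical average projection matrix $\bar{P}_{\tilde{R}_t}$ concentrates to $\E P_{\hat{R}_t}$ in operator norm. By symmetry I will argue only for $\hat{p}_0^{op}$; the case of $\hat{q}_0^{op}$ is identical with rows and columns interchanged.

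\textbf{Step 1: Reduce to the case of correct compression rank.} First I would invoke the pre-estimation result, namely that $\hat{r}_0 = r_0$ with probability tending to one. This follows because $\hat r_{0,t}$ is the argmax of the singular-value ratio of each $X_t$; under Assumptions~\ref{joint_elliptical}--\ref{regular_noise} the signal part $RF_tC^\top$ contributes $r_0=p_0\wedge q_0$ spikes that dominate the noise spectrum, producing a diverging ratio at index $r_0$. Hence each $\hat r_{0,t}=r_0$ with high probability, and the rounded average $\hat r_0$ equals $r_0$. Conditioning on this event, $\tilde R_t=\hat R_t$ so $\bar P_{\tilde R_t}=\bar P_{\hat R_t}$, and I may work entirely with the MPCA$_{op}$ projection matrices.

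\textbf{Step 2: Transfer the spectral gap from $\E P_{\hat R_t}$ to $\bar P_{\hat R_t}$.} By Lemma~\ref{invariant-subspaces}, $\span(R)$ and $\span(R^\perp)$ are invariant subspaces of $\E P_{\hat R_t}$, so its spectrum splits into a top block living on $\span(R)$ and a bottom block on $\span(R^\perp)$; the remark guarantees $\lambda_{p_0}(\E P_{\hat R_t})\ge c>0$ and $\lambda_{p_0+1}(\E P_{\hat R_t})\to 0$. I would then apply Weyl's inequality together with the concentration $\|\bar P_{\hat R_t}-\E P_{\hat R_t}\|_{op}=o_p(1)$ (which comes from Lemma~\ref{sqrt(T)-convergence} sharpened by the intrinsic-dimension argument) to conclude that, for each $j$, $|\lambda_j(\bar P_{\hat R_t})-\lambda_j(\E P_{\hat R_t})|=o_p(1)$. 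Consequently $\lambda_{p_0}(\bar P_{\hat R_t})\ge c-o_p(1)$ stays bounded away from zero, while $\lambda_{p_0+1}(\bar P_{\hat R_t})=o_p(1)$.

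\textbf{Step 3: Show the ratio is maximized exactly at $j=p_0$.} For $j<p_0$, both $\lambda_j$ and $\lambda_{j+1}$ of $\bar P_{\hat R_t}$ are bounded away from zero (being close to the corresponding nonvanishing eigenvalues of $\E P_{\hat R_t}$), so the ratio $\lambda_j/\lambda_{j+1}$ is $O_p(1)$ and cannot diverge. At $j=p_0$, the numerator is bounded below by $c-o_p(1)$ while the denominator is $o_p(1)$, so the ratio $\lambda_{p_0}(\bar P_{\hat R_t})/\lambda_{p_0+1}(\bar P_{\hat R_t})\to\infty$ in probability. For $j>p_0$, both numerator and denominator are $o_p(1)$; here I must rule out a spurious larger ratio among the vanishing eigenvalues, which is where care is needed. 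Since $\P(\hat p_0^{op}=p_0)=\P(\text{argmax of the ratio}=p_0)$, it suffices that the $p_0$-th ratio diverges while all others remain bounded in probability; the diverging $p_0$-th ratio then exceeds every competitor with probability tending to one.

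\textbf{The main obstacle} will be Step~3 for indices $j>p_0$: controlling ratios of the \emph{noise} eigenvalues $\lambda_{p_0+1},\lambda_{p_0+2},\dots$ of $\bar P_{\hat R_t}$, all of which tend to zero, so that no accidental large ratio among them overtakes the genuine signal ratio at $p_0$. A direct bound on each individual small eigenvalue is delicate because the concentration $\|\bar P_{\hat R_t}-\E P_{\hat R_t}\|_{op}=o_p(1)$ only controls absolute, not relative, perturbations. The resolution is to exploit the explicit convergence \emph{rate} of Theorem~\ref{consistency_MPCAop}: one shows $\lambda_{p_0+1}(\bar P_{\hat R_t})\gtrsim$ some polynomially small quantity (so the denominator at $j=p_0$ does not decay too fast) while simultaneously $\lambda_{p_0}(\bar P_{\hat R_t})/\lambda_{p_0+1}(\bar P_{\hat R_t})$ still outpaces $\max_{j>p_0}\lambda_j/\lambda_{j+1}$. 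This amounts to quantifying the rate at which the bottom-block eigenvalues decay relative to one another, for which I would lean on the perturbation-rate bounds already furnished in the proof of Theorem~\ref{consistency_MPCAop} and the fact that $r_{\max}$ is a fixed constant, keeping the number of competing ratios finite.
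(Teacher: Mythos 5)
Your proposal follows essentially the same route as the paper's proof: the same factorization $\P(\hat{p}_0^{op}=p_0)\geq \P(\hat{r}_0=r_0)\,\P(\hat{p}_0^{op}=p_0\mid \hat{r}_0=r_0)$, the same signal-dominance argument giving $\P(\hat{r}_0=r_0)\to 1$, and, on the event $\{\hat{r}_0=r_0\}$ (where $\tilde{R}_t=\hat{R}_t$), the same combination of the eigengap of the expected projection matrix (Lemma \ref{invariant-subspaces} and Lemma \ref{lemma_pe}, transported through the spherical-neighbour reduction) with operator-norm concentration of $\bar{P}_{\hat{R}_t}$ and Weyl's inequality; your Steps 1--2 and your treatment of $j\leq p_0$ are exactly the paper's argument.

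Where you depart from the paper is the ``main obstacle'' paragraph, and the comparison is instructive. The difficulty you flag --- that for $j>p_0$ both $\lambda_j(\bar{P}_{\tilde{R}_t})$ and $\lambda_{j+1}(\bar{P}_{\tilde{R}_t})$ vanish, so their ratio could in principle exceed the ratio at $j=p_0$ --- is genuine, and the paper's own proof simply does not address it: it stops at ``$\lambda_{p_0}\geq c>0$, $\lambda_{p_0+1}=o_p(1)$, hence the argmax is $p_0$'', which is precisely the incomplete inference you identify, so on this point you are more careful than the source. However, your sketched resolution does not close the gap: (i) Theorem \ref{consistency_MPCAop} is a rate for the subspace distance $\|\hat{R}_{op}-RH_R\|_F^2/p$ and yields no lower bound on $\lambda_{p_0+1}(\bar{P}_{\tilde{R}_t})$, let alone on deeper eigenvalues; (ii) even granting a polynomial lower bound on $\lambda_{p_0+1}$, that only keeps the $j=p_0$ denominator from collapsing --- it says nothing about $\max_{j>p_0}\lambda_j/\lambda_{j+1}$, whose control requires two-sided \emph{relative} bounds on every eigenvalue down to index $r_{\max}+1$, whereas the available concentration is additive of order $O_p(T^{-1/2})$ and hence uninformative for eigenvalues that may lie far below that scale. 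One observation that helps within the paper's framework: after the spherical-neighbour reduction, left-sphericity makes $\E \dot{P}_{\hat{R}_t}$ invariant under every rotation of $\span(R^{\perp})$, so its bottom block is a scalar multiple of the identity and all population ratios beyond $p_0$ coincide; but converting this into a statement about the empirical ratios again demands relative, not operator-norm, perturbation control. The clean way to make either your argument or the paper's airtight is to regularize the criterion --- maximize $\lambda_j/(\lambda_{j+1}\vee \delta)$ with $\delta\to 0$ decaying strictly slower than the attainable bound $T^{-1/2}+p^{-1/4}+q^{-1/4}$ on $\lambda_{p_0+1}(\bar{P}_{\tilde{R}_t})$, so that the ratio at $p_0$ diverges while every ratio at $j\neq p_0$ stays $O_p(1)$ --- at the cost of amending the estimator as defined in (\ref{eq_MERop}).
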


Similarly, we could use the average projection matrices from $\text{MPCA}_{F}$ to increase accuracy. Now that estimating $p_0$ requires information of $C$, and estimating $q_0$ requires information of $R$, iterations are naturally needed, and the result from $\text{MER}_{op}$ estimators could be used as a warm start.

\begin{algorithm}[H]
\caption{$\text{MER}_{F}$ estimators of the pair of the factor numbers}
\begin{algorithmic}[1] 
\REQUIRE ~~\\ 
    The set of all data matrices, $\{X_t\}$;\\
    Maximum number, $r_{\max}$;\\
\ENSURE ~~\\ 
    $\text{MER}_{F}$ estimators, $\hat{p}_0^F$ and $\hat{q}_0^F$;
    \STATE Use the $\text{MER}_{op}$ estimators as a warm start, denoted by $\hat{p}^{(0)}_0$, $\hat{q}^{(0)}_0$;
    \STATE Given compression dimensions $\hat{p}^{(i)}_0$, $\hat{q}^{(i)}_0$, acquire $\hat{R}_F^{(i)}$ and $\hat{C}_F^{(i)}$ from $\text{MPCA}_{F}$;
    \STATE Acquire the best linear subspace estimations $\tilde{R}^{(i+1)}_t$ and $\tilde{C}^{(i+1)}_t$ for each $X_t$, which are the leading $\hat{r}_0^{(i)}=\hat{p}^{(i)}_0\wedge\hat{q}^{(i)}_0$ eigenvectors of $X_tP_{\hat{C}_F^{(i)}}X_t^{\top}$ and $X_t^{\top}P_{\hat{R}_F^{(i)}}X_t$ respectively;
    \STATE Calculate the average projection matrices $\bar{P}^{(i+1)}_{\tilde{R}_t}$, $\bar{P}^{(i+1)}_{\tilde{C}_t}$ from $\tilde{R}^{(i+1)}_t$ and $\tilde{C}^{(i+1)}_t$. Determine $\hat{p}^{(i+1)}_0$ and $\hat{q}^{(i+1)}_0$ by finding the largest eigenvalue-ratio from $\lambda_j(\bar{P}^{(i+1)}_{\tilde{R}_t})/\lambda_{j+1}(\bar{P}^{(i+1)}_{\tilde{R}_t})$ and $\lambda_j(\bar{P}^{(i+1)}_{\tilde{C}_t})/\lambda_{j+1}(\bar{P}^{(i+1)}_{\tilde{C}_t})$ for $j\leq r_{\max}$; iterate until convergence to $\hat{p}_0^F$, $\hat{q}_0^F$;
\RETURN $\hat{p}_0^F$, $\hat{q}_0^F$. 
\end{algorithmic}
\end{algorithm}

Theoretical analysis of $\text{MER}_{F}$ is challenging due to the iteration procedure, thankfully the initial step taken from $\text{MER}_{op}$ has already been consistent under Assumption \ref{joint_elliptical} to \ref{regular_noise}. As shown in simulations, $\text{MER}_{F}$ benefits from the same projection technique as in $\text{MPCA}_{F}$ and turns out to be more accurate than $\text{MER}_{op}$ in finite-sample performances.

\section{Simulation Results}
In this section, we investigate the finite-sample performances of MPCA algorithms by generating synthetic datasets.
The observed data matrices are generated as $X_t = RF_t C^{\top} +E_t$ from moderate noise regime, by rescaling signal and noise to the same scale. 

\subsection{Data Generation}
To generate observations from the model $X_t = RF_t C^{\top} +E_t$, $t=1,\dots,T$, we set $p_0=q_0=3$, draw the entries of $R$ and $C$ from independent standard Gaussian distribution, and let:
\begin{equation*}
	F_t = \Omega\times F_{t-1} +\sqrt{1-\phi^2}\times U_t,\quad U_{t} \stackrel{i.i.d}{\sim} \mathcal{M} \mathcal{N}\left(0, I_{p_0}, I_{q_0}\right).
\end{equation*}
\begin{equation*}
	E_t = \psi\times E_{t-1} +\sqrt{1-\psi^2}\times s_E\times \Omega_1^{1/2} V_t \Omega_2^{1/2},
\end{equation*}
where $V_t=\gamma W_t$ and the elements of $W_t$ are generated by independent standard centered Gaussian, $t_v$, skewed-$t_v$ and $\alpha$-stable distributions. The rescaling constant $\gamma$ is to  get comparable signal and noise level. In fact, if $W_t$ consists of independent standard centered Gaussian random variables,  the signal part $\|RF_t C^{\top}\|_{op}\asymp (pq)^{1/2}$ while $\|W_t\|_{op}\asymp (p\vee q)^{1/2}$, thus by setting $\gamma=(p\wedge q)^{1/2}$ we get comparable signal and noise level. On the other hand, if $W_t$ consists of independent $t_1$ random variables, it holds $\|W_t\|_{op}\gtrsim pq$, then we need to set $\gamma =  (pq)^{-1/2}$.

The pair of dimensions $(p,q)$ are chosen from the set $\{(20,20),(20,100),(100,100)\}$,  the sample size $T$ is set to be $3(pq)^{1/2}$ and  $\Omega_1$, $\Omega_2$ are set to be matrices with ones on the diagonal, and $1/p$, $1/q$ on the off-diagonal respectively. In addition, the parameters $\phi$ and $\psi$ control temporal correlation and are set as $\phi=\psi=1/10$, while $s_E$ is the noise scaling constant chosen from $\{1,1.5,2\}$. We only show the cases with $s_E=1$ in this section, and the rest are left to the supplementary materials. Elements of $V_t$ are drawn independently from standard Gaussian, $t_3$, $t_1$, $\alpha$-stable distribution ($\alpha=1.8$, skewness parameter $\beta=0$), and skewed-$t_3$ distribution (standard deviation $\sigma=\sqrt{3}$, skewness parameter $\nu=2$) respectively, while we set  $\gamma=(pq)^{-1/2}$ for $t_1$ distribution and  $\gamma=(p\wedge q)^{1/2}$ for the rest distributions. We generate $\alpha$-stable distribution by Python package \href{https://docs.scipy.org/doc/scipy/reference/generated/scipy.stats.levy_stable.html}{scipy.stats}, and skewed-$t_3$ distribution by Python package \href{https://sstudentt.readthedocs.io/en/latest/}{sstudentt}. All simulation results reported here are based on 100 replications.

\subsection{Estimation of Loading Spaces}
We first compare the performances of MPCA algorithms with those of $(2D)^2$-PCA by \cite{zhang20052d} and PE method by \cite{Yu2021Projected}in terms of estimating loading spaces. In fact, $(2D)^2$-PCA is equivalent to $\alpha$-PCA from \cite{chen2021statistical} with $\alpha=-1$, whose empirical performances corresponding to $\alpha \in \{-1,0,1\}$ are comparable as shown in \cite{Yu2021Projected}. To measure the difference between the estimated $\hat{R}$ and the true loading $R$, we used the scaled projection metric as in \cite{Yu2021Projected}, which is defined as:

\begin{equation*}
	\mathcal{D}(\hat{R},R)=\left(1- \frac{1}{p_0}\tr(P_{\hat{R}}P_R)\right)^{1/2}= (2p_0)^{-1/2}\|P_{\hat{R}}-P_R\|_F,
\end{equation*}
so it is straightforward that $\mathcal{D}(\hat{R},R)$ is always between $0$ (corresponding to $\span(\hat{R})= \span(R)$) and $1$ (corresponding to $\span(\hat{R})$ and $\span(R)$ are orthogonal). $\mathcal{D}(\hat{C},C)$ can be defined similarly.

\begin{table}[htb]
\begin{center}
\caption{Means and standard deviations (in parentheses) of $\mathcal{D}(\hat{R},R)$ and $\mathcal{D}(\hat{C},C)$ over 100 replications with $s_E=1$ and $T = 3(pq)^{1/2}$. Here MPCA$_{op}$ and MPCA$_{F}$ stands for Manifold PCA methods; $(2D)^2$-PCA is from \cite{zhang20052d}, it is equivalent to $\alpha$-PCA by \cite{chen2021statistical} with $\alpha=-1$; PE stands for the projected estimation by \cite{Yu2021Projected}.}
\label{simulation_loading_se1_T3}
\scalebox{0.85}{\begin{tabular}{cccccccc}
\bottomrule
Distribution& Evaluation & $p$ & $q$ & MPCA$_{op}$ & MPCA$_{F}$ & $(2D)^2$-PCA & PE\\
\hline
\multirow{6}{*}{Gaussian}&\multirow{3}{*}{$\mathcal{D}(\hat{R},R)$}  & 20 & 20 &(0.3024,0.1272) & (0.1154,0.0233) & (0.2007,0.1147) & \bf{(0.0833,0.0179)}\\
 & & 20 & 100 &(0.4510,0.1039) &(0.0402,0.0046)& (0.1375,0.0756) & \bf{(0.0234,0.0030)}\\
 & & 100 & 100 &(0.0878,0.0226) & (0.0426,0.0025)&(0.0632,0.0173)&\bf{(0.0337,0.0024)}\\

& \multirow{3}{*}{$\mathcal{D}(\hat{C},C)$} & 20 & 20 &(0.3005,0.1237) & (0.1170,0.0239)&(0.1807,0.1061) & \bf{(0.0838,0.0182)}\\
 & & 20 & 100 &(0.0694,0.0073) & (0.0665,0.0058) &(0.0545,0.0067) & \bf{(0.0521,0.0059)}\\
 & & 100 & 100 &(0.0891,0.0240)&(0.0424,0.0027)&(0.0629,0.0172)&\bf{(0.0334,0.0027)}\\
 \hline
\multirow{6}{*}{$t_3$}&\multirow{3}{*}{$\mathcal{D}(\hat{R},R)$}  & 20 & 20 & (0.5145,0.0891)& \bf{(0.1941,0.0569)} & (0.5678,0.1202)& (0.3077,0.2194)\\
 & & 20 & 100 &(0.5143,0.0682) &\bf{(0.0594,0.0094)}&(0.5105,0.0870)&(0.0867,0.1413) \\
 & & 100 & 100 &(0.3184,0.1156)&\bf{(0.0662,0.0043)}&(0.5638,0.1011)& (0.1864,0.2356)\\
& \multirow{3}{*}{$\mathcal{D}(\hat{C},C)$} & 20 & 20 &(0.4925,0.0907) & \bf{(0.1943,0.0578)}& (0.5338,0.1159)& (0.3018,0.2037)\\
 & & 20 & 100 &(0.1239,0.0159)&\bf{(0.0998,0.0095)} & (0.1732,0.1353)& (0.1379,0.1400)\\
 & & 100 & 100 &(0.2927,0.1108) & \bf{(0.0651,0.0047)}&(0.5619,0.1022)& (0.1923,0.2420)\\

 \hline
 \multirow{6}{*}{$t_1$}&\multirow{3}{*}{$\mathcal{D}(\hat{R},R)$}  & 20 & 20 &(0.0524,0.0130)&\bf{(0.0429,0.0061)}&(0.7613,0.1322)&(0.7647,0.1738)\\
 & & 20 & 100 &(0.0328,0.0081)&\bf{(0.0198,0.0028)}&(0.8076,0.1382) &(0.8175,0.1893)\\
 & & 100 & 100 & \bf{(0.0123,0.0009)	}&(0.0130,0.0007)&(0.9627,0.0596)&(0.9698,0.0668) \\
& \multirow{3}{*}{$\mathcal{D}(\hat{C},C)$} & 20 & 20 &(0.0525,0.0110)&\bf{(0.0430,0.0065)}&(0.7569,0.1423) & (0.7638,0.1808)\\
 & & 20 & 100 &\bf{(0.0240,0.0023)}&(0.0270,0.0023)&(0.8831,0.1620)&(0.8866,0.2009) \\
 & & 100 & 100 & \bf{(0.0121,0.0009)	}&(0.0130,0.0007)	&(0.9637,0.0624)&(0.9681,0.0693) \\
 \hline
 \multirow{6}{*}{$\alpha$-stable}&\multirow{3}{*}{$\mathcal{D}(\hat{R},R)$}  & 20 & 20 & (0.5346,0.0807)&\bf{(0.2088,0.0550)}&(0.7919,0.0888)&(0.7926,0.1480)\\
 & & 20 & 100 &(0.5101,0.0713)&\bf{(0.0664,0.0104)}&(0.8091,0.0827)&(0.8215,0.1285)\\
 & & 100 & 100 & (0.5963,0.0496)&\bf{(0.0790,0.0054)}&(0.9818,0.0087)	&(0.9846,0.0071) \\
& \multirow{3}{*}{$\mathcal{D}(\hat{C},C)$} & 20 & 20 &(0.5237,0.0866)&\bf{(0.2159,0.0621)}&(0.7968,0.0900) &(0.8047,0.1526)\\
 & & 20 & 100 &(0.1545,0.0213)&\bf{(0.1138,0.0104)}&(0.8682,0.1089)&(0.9103,0.1168)\\
 & & 100 & 100 & (0.5968,0.0517)	&\bf{(0.0793,0.0057)	}&(0.9805,0.0096)	&(0.9842,0.0071) \\
 \hline

 \multirow{6}{*}{skewed-$t_3$}&\multirow{3}{*}{$\mathcal{D}(\hat{R},R)$}  & 20 & 20 & (0.4707,0.1075)&\bf{(0.1824,0.0463)}&(0.5657,0.1128)&(0.3681,0.2072)\\
 & & 20 & 100 &(0.5005,0.0788)&\bf{(0.0571,0.0085)}&	(0.5131,0.0912)&(0.1080,0.1760)\\
 & & 100 & 100 & (0.2795,0.1123)&\bf{(0.0654,0.0051)}&(0.5980,0.1353)&	(0.2816,0.2867)\\
& \multirow{3}{*}{$\mathcal{D}(\hat{C},C)$} & 20 & 20 & (0.4876,0.0992)&\bf{(0.1829,0.0434)}&(0.5764,0.1058)&(0.3708,0.2084) \\
 & & 20 & 100 &(0.1192,0.0142)&\bf{(0.0962,0.0086)}&(0.1985,0.1601)&(0.1552,0.1722) \\
 & & 100 & 100 & (0.2932,0.1161)	&\bf{(0.0654,0.0044)}&(0.5933,0.1300)&(0.2754,0.2843) \\
\bottomrule
\end{tabular}}
\end{center}
\end{table}

Table \ref{simulation_loading_se1_T3} shows the averaged estimation errors of factor loadings with standard deviations in parentheses under different noise distributions with $s_E=1$. Simulation results with $s_E \in\{1.5,2\}$  are reported in Table \ref{simulation_loading_se1.5_T3} and  \ref{simulation_loading_se2_T3} in the supplementary materials. First, it is observed that the projection effect of MPCA$_F$ and PE is obvious for large-dimensional matrix factor analysis in finite-samples. The projected MPCA$_F$ and PE almost always show advantages over MPCA$_{op}$ and $(2D)^2$-PCA, which in fact correspond to their non-projected versions. For the cases with Gaussian noise, the PE method achieves the best performances, while MPCA$_F$ performs comparably. However, for heavy-tailed noises it is a completely different picture. MPCA$_F$ shows great advantage over PE under $t_3$, $t_1$ and $\alpha$-stable noises. It is foreseeable since MPCA methods require no moment conditions. Under relatively small noise scale, MPCA$_{op}$ shows comparable performances as MPCA$_F$, but the latter benefits greatly from the projection effect and is more robust against larger noise scale, as shown in Table \ref{simulation_loading_se1.5_T3} and Table \ref{simulation_loading_se2_T3}. In addition, by comparing the results under $t_3$ and skewed-$t_3$ noise, we observe that skewness does almost no harm to MPCA methods, while increasing the estimation errors of $(2D)^2$-PCA and PE. To summarize, both MPCA$_F$ and PE benefit greatly from the projection effect, which is essential in large-dimensional matrix factor analysis. MPCA methods perform comparably with $(2D)^2$-PCA and PE under light-tailed noises, but much more robustly under heavy-tailed and skewed noises, and as a result are more suitable for financial and econometrical applications.

\subsection{Estimation Errors for Common Components}
In this section, we compare the performances of of MPCA algorithms with those of $(2D)^2$-PCA by \cite{zhang20052d} and PE method by \cite{Yu2021Projected} in  terms of estimating the common component matrices. We evaluate the performances  by mean squared error (MSE) and maximum operator loss (opMax), which are defined as :

\begin{equation*}
	\text{MSE} = \frac{1}{T p q} \sum_{t=1}^{T}\left\|\hat{S}_{t}-S_{t}\right\|_F^2,\quad \text{opMax} = \frac{1}{(pq)^{1/2}}\max_{1\leq t\leq T}\left\|\hat{S}_{t}-S_{t}\right\|_{op},
\end{equation*}
where $\hat{S}_t = P_{\hat{R}}X_tP_{\hat{C}}$ refers to the estimated common component matrix and $S_t$ is the true value.

\begin{table}[!htb]
\begin{center}
\caption{Means and standard deviations (in parentheses) of MSE and opMax over 100 replications with $s_E=1$ and $T = 3(pq)^{1/2}$. Here MPCA$_{op}$ and MPCA$_{F}$ stands for Manifold PCA methods; $(2D)^2$-PCA is from \cite{zhang20052d}, it is equivalent to $\alpha$-PCA by \cite{chen2021statistical} with $\alpha=-1$; PE stands for the projected estimation by \cite{Yu2021Projected}.}
\label{simulation_cc_se1_T3}
\scalebox{0.85}{\begin{tabular}{ccccccc}
\bottomrule
\bf{MSE}\\

Distribution & $p$ & $q$ & MPCA$_{op}$ & MPCA$_{F}$ & $(2D)^2$-PCA & PE\\
\hline
\multirow{3}{*}{Gauss}& 20 & 20 &(0.0782,0.0222)&(0.0327,0.0030)&(0.0463,0.0141)	&\bf{(0.0280,0.0024)}\\
  				& 20 & 100 &(0.0277,0.0100)&(0.0031,0.0002)&(0.0052,0.0018)	&\bf{(0.0026,0.0001)}\\
				& 100 & 100 &(0.0023,0.0005)	&(0.0012,0.0000)	&(0.0016,0.0003)&	\bf{(0.0011,0.0000)}\\
\hline
\multirow{3}{*}{$t_3$}& 20 & 20 & (0.2381,0.0299)&\bf{(0.1012,0.0158)}&(0.3367,0.1668)&(0.2206,0.2185)\\
  				& 20 & 100 &(0.0405,0.0095)&\bf{(0.0086,0.0008)}&(0.0767,0.2454)&(0.0466,0.2472)\\
				& 100 & 100 &(0.0177,0.0068)	&\bf{(0.0035,0.0002)}	&(0.0719,0.0947)&	(0.0409,0.1010)\\
\hline
\multirow{3}{*}{$t_1$}& 20 & 20 & (46.248,271.94)&\bf{(42.431,252.56)}	&(1968.3,13474)&(1968.4,13474)\\
  				& 20 & 100 & (4.6400,37.238)&\bf{(4.0799,31.791)	}&(2180.1,19330)&(2180.1,19330)\\
				& 100 & 100 &(0.8044,7.4323)	&\bf{(0.7825,7.2180)}&(888.12,7689.9)&	(888.12,7689.9)\\
\hline
\multirow{3}{*}{$\alpha$-stable}& 20 & 20 & (0.6030,0.9509)&\bf{(0.2638,0.3138)}&(7.0910,29.585)&(7.1591,29.592)\\
  				& 20 & 100 &(0.0613,0.0313)&\bf{(0.0243,0.0288)}&(1.9485,3.6707)&(1.9865,3.6689)\\
				& 100 & 100 &(0.0698,0.0257)	&\bf{(0.0113,0.0095)}	&(5.0615,9.6404)&	(5.0704,9.6369)\\
\hline
\multirow{3}{*}{skewed-$t_3$}& 20 & 20 & (0.2273,0.0504)&\bf{(0.0979,0.0232)}&(0.3977,0.2926)&(0.2918,0.3304)\\
  				& 20 & 100 & (0.0397,0.0089)&\bf{(0.0085,0.0007)}&(0.0747,0.1179)	&(0.0442,0.1293)\\
				& 100 & 100 &(0.0160,0.0068)	&\bf{(0.0035,0.0001)	}&(0.0797,0.0671)&	(0.0519,0.0789)\\
\bottomrule
\bf{opMax}\\

Distribution & $p$ & $q$ & MPCA$_{op}$ & MPCA$_{F}$ & $(2D)^2$-PCA & PE\\
\hline
\multirow{3}{*}{Gauss}& 20 & 20 & (0.0820,0.0207)&(0.0501,0.0047)&(0.0600,0.0132)&	\bf{(0.0490,0.0047)}\\
  				& 20 & 100 & (0.0487,0.0112)	&(0.0104,0.0007)&(0.0158,0.0045)	&\bf{(0.0101,0.0008)}\\
				& 100 & 100 & (0.0067,0.0011)&\bf{(0.0047,0.0003)}&(0.0053,0.0006)&\bf{(0.0047,0.0003)}\\
\hline
\multirow{3}{*}{$t_3$}& 20 & 20 & (0.1554,0.0377)&\bf{(0.1166,0.0398)}&(0.3523,0.2749)&	(0.3187,0.3208)\\
  				& 20 & 100 & (0.0558,0.0145)&\bf{(0.0252,0.0148)}&(0.1291,0.2865)	&(0.0927,0.2944)\\
				& 100 & 100 &(0.0216,0.0061)	&\bf{(0.0116,0.0075)}	&(0.1516,0.2322)&	(0.1239,0.2463)\\
\hline
\multirow{3}{*}{$t_1$}& 20 & 20 &(3.2655,11.248)&\bf{(3.1082,10.779)}	&(21.583,73.268)&(21.583,73.268)\\
  				& 20 & 100 &(0.8514,3.6067)&\bf{(0.8207,3.3761)}&(16.575,78.617)&	(16.575,78.617)\\
				& 100 & 100 & (0.3369,1.5060)&\bf{(0.3332,1.4852)}&(12.686,49.611)&(12.687,49.611)\\
\hline
\multirow{3}{*}{$\alpha$-stable}& 20 & 20 &(0.5956,0.6924)&\bf{(0.4152,0.4161)}	&(2.4968,3.4367)&(2.5217,3.4254)\\
  				& 20 & 100 & (0.1406,0.1292)	&\bf{(0.1192,0.1212)}&(1.5301,1.4189)	&(1.5386,1.4144)\\
				& 100 & 100 & (0.1327,0.1149)&\bf{(0.0792,0.0668)}&(2.5643,2.2154)&	(2.5652,2.2148)\\
\hline
\multirow{3}{*}{skewed-$t_3$}& 20 & 20 & (0.1754,0.0943)&\bf{(0.1363,0.0750)}&(0.4477,0.3798)&(0.4439,0.4152)\\
  				& 20 & 100 & (0.0557,0.0158)	&\bf{(0.0271,0.0136)}&(0.1470,0.2371)	&(0.1112,0.2496)\\
				& 100 & 100 & (0.0199,0.0058)&\bf{(0.0116,0.0035)}&(0.1986,0.2064)&(0.1684,0.2300)\\
\bottomrule
\end{tabular}}
\end{center}
\end{table}

Table \ref{simulation_cc_se1_T3} reports the means and standard deviations of  MSEs and opMaxs with $s_E = 1$. Simulation results with $s_E\in \{1.5,2\}$ are reported in Table \ref{simulation_cc_se1.5_T3} and \ref{simulation_cc_se2_T3} in the supplementary materials. Similar as the conclusions drawn for factor loadings, both MPCA$_F$ and PE also benefit from the projection effect. MPCA methods are comparable with $(2D)^2$-PCA and PE under light-tailed noises, but are much more robust under the heavy-tailed and skewed noises.

\subsection{Estimation of Factor Numbers}
Accurate estimation of the pair of factor numbers is of vital importance in matrix factor analysis. In this section, we compare the empirical performances of our MER$_{op}$, MER$_{F}$ algorithms with IterER method by \cite{Yu2021Projected} and $(2D)^2$-ER method, which is equivalent to the eigenvalue-ratio method in \cite{chen2021statistical} with $\alpha=-1$.

\begin{table}[!htbp]
\begin{center}
\caption{Frequencies of exact estimation and underestimation (in parentheses) of factor numbers over 100 replications with $s_E=1$ and $T = 3(pq)^{1/2}$. Here MER$_{op}$ and MER$_{F}$ stands for Manifold eigenvalue-ratio methods; $(2D)^2$-ER is equivalent to the ER method in \cite{chen2021statistical} with $\alpha=-1$; IterER is from \cite{Yu2021Projected}.}
\label{simulation_fn_se1_T3}
\begin{tabular}{ccccccc}
\bottomrule
Distribution & $p$ & $q$ & MER$_{op}$ & MER$_{F}$ & $(2D)^2$-ER & IterER\\
\hline
\multirow{3}{*}{Gaussian}& 20 & 20&(0.37,0.19)&\bf{(0.95,0.04)}&(0.12,0.73)&(0.94,0.06)\\
  				& 20 & 100 &(0.98,0.00)&\bf{(1.00,0.00)}&(0.19,0.37)&\bf{(1.00,0.00)}\\
				& 100 & 100 & (1.00,0.00)&\bf{(1.00,0.00)}&(0.36,0.00)&\bf{(1.00,0.00)}\\
\hline
\multirow{3}{*}{$t_3$}& 20 & 20 & (0.13,0.65)&\bf{(0.53,0.43)}&(0.04,0.83)&(0.33,0.63)\\
  				& 20 & 100 &(0.17,0.13)&\bf{(1.00,0.00)}&(0.05,0.50)&(0.80,0.02)\\
				& 100 & 100 & (0.04,0.02)&\bf{(1.00,0.00)}&(0.00,0.24)&(0.51,0.03)\\
\hline
\multirow{3}{*}{$t_1$}& 20 & 20 & (0.99,0.01)&\bf{(1.00,0.00)}&(0.02,0.85)&(0.01,0.76)\\
  				& 20 & 100 & \bf{(1.00,0.00)}&\bf{(1.00,0.00)}&(0.08,0.82)&(0.08,0.72)\\
				& 100 & 100 & \bf{(1.00,0.00)}&\bf{(1.00,0.00)}&(0.07,0.86)&(0.05,0.88)\\
\hline
\multirow{3}{*}{$\alpha$-stable}& 20 & 20 &(0.05,0.85)&\bf{(0.37,0.63)}&(0.02,0.95)	&(0.01,0.94)\\
  				& 20 & 100 & (0.26,0.47)	&\bf{(1.00,0.00)	}&(0.03,0.92)	&(0.04,0.85)\\
				& 100 & 100 & (0.06,0.93)&\bf{(0.95,0.05)}&(0.07,0.87)&(0.07,0.91)\\
\hline
\multirow{3}{*}{skewed-$t_3$}& 20 & 20 &(0.08,0.83)	&\bf{(0.61,0.38)}&(0.03,0.91)&(0.33,0.59)\\
  				& 20 & 100 & (0.23,0.12)&\bf{(1.00,0.00)}&(0.16,0.47)&(0.75,0.03)\\
				& 100 & 100 & (0.07,0.06)&\bf{(1.00,0.00)}&(0.00,0.37)&(0.43,0.02)\\
\bottomrule
\end{tabular}
\end{center}
\end{table}

Table \ref{simulation_fn_se1_T3} reports the frequencies of exact estimation and underestimation with $s_E = 1$. Simulation results with $s_E\in \{1.5,2\}$ are reported in Table \ref{simulation_fn_se1.5_T3} and Table \ref{simulation_fn_se2_T3} in the supplementary materials. We set $r_{\max}=8$ for all the algorithms. It is observed that both MER$_F$ and IterER benefit from the projection effect. In addition, MER$_F$ is no worse than IterER for Gaussian noise, and outperforms IterER by a large margin for the heavy-tailed and skewed noises. As a result, MER$_F$ can be used as a safe replacement of IterER in financial and econometrical applications.

\section{Real Data Analysis}
In this section, we apply the proposed algorithms on a financial portfolio dataset as in \cite{wang2019factor}, \cite{Yu2021Projected}. The dataset consists of monthly returns of $100$ portfolios from January 1964 to December 2019, covering 672 months. The portfolios are constructed into $10\times 10$ data matrices, whose rows correspond to market capital size (S1-S10), and columns correspond to book-to-equity ratio (BE1-BE10). Detailed information could be found on the website \url{http://mba.tuck.dartmouth.edu/pages/faculty/ken.french/data_library.html}.

Following \cite{wang2019factor} and \cite{Yu2021Projected}, we first subtract the corresponding monthly excess returns and impute the missing data by linear interpolation. The augmented Dickey-Fuller tests indicate stationarity  of all time series. We apply eigenvalue-ratio algorithms on the full dataset to determine the pair of  factor numbers $p_0$ and $q_0$, where MER$_{op}$, $(2D)^2$-ER suggest $p_0=q_0=1$, while MER$_{F}$, IterER suggest $p_0=1$ and $q_0=2$. As the latter two projected algorithms are more stable under moderate noise shown in simulation study, we take $p_0=1$ and $q_0=2$ for further analysis.

\begin{table}[htbp]
\begin{center}
\caption{Factor loadings with $p_0=1$ and $q_0=2$ for Fama-French dataset after varimax rotation and scaling by 30. Here MPCA$_{F}$ and MPCA$_{op}$ stands for Manifold PCA methods; PE stands for the projected estimation by \cite{Yu2021Projected}; $(2D)^2$-PCA is from \cite{zhang20052d}, it is equivalent to $\alpha$-PCA by \cite{chen2021statistical} with $\alpha=-1$.}
\label{real_factor_loadings}
\scalebox{0.85}{\begin{tabular}{cc|cccccccccc}
\bottomrule
\bf{Size}\\
\hline   Method & Factor & S1 & S2 & S3 &S4 & S5 & S6 & S7 & S8 &S9 & S10 \\
\hline MPCA$_{op}$& 1& \cellcolor{lightgray}-10 & \cellcolor{lightgray}-10 & \cellcolor{lightgray}-11 & \cellcolor{lightgray}-11 & \cellcolor{lightgray}-11 & \cellcolor{lightgray}-11 & \cellcolor{lightgray}-9 & \cellcolor{lightgray}-9 & -6 & 0\\
\hline MPCA$_{F}$& 1& \cellcolor{lightgray}-11 & \cellcolor{lightgray}-12 & \cellcolor{lightgray}-12 & \cellcolor{lightgray}-11 & \cellcolor{lightgray}-11 & \cellcolor{lightgray}-10 & \cellcolor{lightgray}-9 & \cellcolor{lightgray}-8 & -5 & 0\\
\hline $(2D)^2$-PCA& 1& \cellcolor{lightgray} -10 & \cellcolor{lightgray}-11 & \cellcolor{lightgray} -12 & \cellcolor{lightgray}-11 & \cellcolor{lightgray}-11 &\cellcolor{lightgray} -10 & \cellcolor{lightgray}-9 & \cellcolor{lightgray}-8 & -6 &0\\
\hline PE& 1& \cellcolor{lightgray}-11 & \cellcolor{lightgray}-12 & \cellcolor{lightgray}-12 &\cellcolor{lightgray}-11 & \cellcolor{lightgray}-11 & \cellcolor{lightgray}-10 & \cellcolor{lightgray}-8 & \cellcolor{lightgray}-7 & -5 & 0\\
\bottomrule
\bf{Book-to Equity}\\
\hline   Method & Factor & BE1 & BE2 & BE3 &BE4 & BE5 & BE6 & BE7 & BE8 &BE9 & BE10 \\
\hline \multirow{2}{*}{MPCA$_{op}$}& 1& 3 & 0 & -4 & \cellcolor{lightgray}-6 & \cellcolor{lightgray}-10 & \cellcolor{lightgray}-11 & \cellcolor{lightgray}-13 & \cellcolor{lightgray}-12 & \cellcolor{lightgray}-12 & \cellcolor{lightgray}-12\\
		&2& \cellcolor{lightgray}17 & \cellcolor{lightgray}18 & \cellcolor{lightgray}13 & \cellcolor{lightgray}9 & 4 & 2 & -1 & -1 & -3 & -3\\
\hline \multirow{2}{*}{MPCA$_{F}$}& 1& 3 & 0 & -4 & \cellcolor{lightgray}-6 & \cellcolor{lightgray}-9 & \cellcolor{lightgray}-11 & \cellcolor{lightgray}-12 & \cellcolor{lightgray}-13 & \cellcolor{lightgray}-13 & \cellcolor{lightgray}-12\\
		&2& \cellcolor{lightgray}19 & \cellcolor{lightgray}17 & \cellcolor{lightgray}11 & \cellcolor{lightgray}8 & 5 & 1 & -1 & -2 & -3 & -3\\
\hline \multirow{2}{*}{$(2D)^2$-PCA}& 1& 3 & -1 & -5 & \cellcolor{lightgray}-8 & \cellcolor{lightgray}-10 & \cellcolor{lightgray}-12 & \cellcolor{lightgray}-12 & \cellcolor{lightgray}-12 & \cellcolor{lightgray}-12 & \cellcolor{lightgray}-11\\
		&2& \cellcolor{lightgray}19 & \cellcolor{lightgray}18 & \cellcolor{lightgray}12 & \cellcolor{lightgray}7 & 3 & 0 & -2 & -3 & -2 & -2\\
\hline \multirow{2}{*}{PE}& 1& 3 & -2 & -5 & \cellcolor{lightgray}-8 & \cellcolor{lightgray}-10 & \cellcolor{lightgray}-11 & \cellcolor{lightgray}-12 & \cellcolor{lightgray}-12 & \cellcolor{lightgray}-13 & \cellcolor{lightgray}-11\\
		&2& \cellcolor{lightgray}21 & \cellcolor{lightgray}16 & \cellcolor{lightgray}11 & \cellcolor{lightgray}7 & 3 & 0 & -2 & -3 & -2 & -1\\
\bottomrule
\end{tabular}}
\end{center}
\end{table}

The estimated loading matrices after varimax rotation and scaling  are reported in Table \ref{real_factor_loadings}. It is observed that MPCA$_{op}$, MPCA$_{F}$, $(2D)^2$-PCA and PE methods lead to similar estimated loadings. The small size portfolios load heavily on the front loading. The two factors in the back loading separate portfolios well from the perspective of book-to-equity, with large BE portfolios loading mainly on the first factor, and small BE portfolios loading mainly on the second.

Figure \ref{fama-french} shows the time series plots of the 100 series, while Figure \ref{factor_plot} shows the estimated factors by MPCA$_{op}$ and MPCA$_{F}$ with $p_0=1$ and $q_0=2$, which show similar patterns and further indicate the estimated factors could potentially replace the original data matrices for further analysis.

\begin{figure}[htbp]

\centering
\includegraphics[scale=0.41]{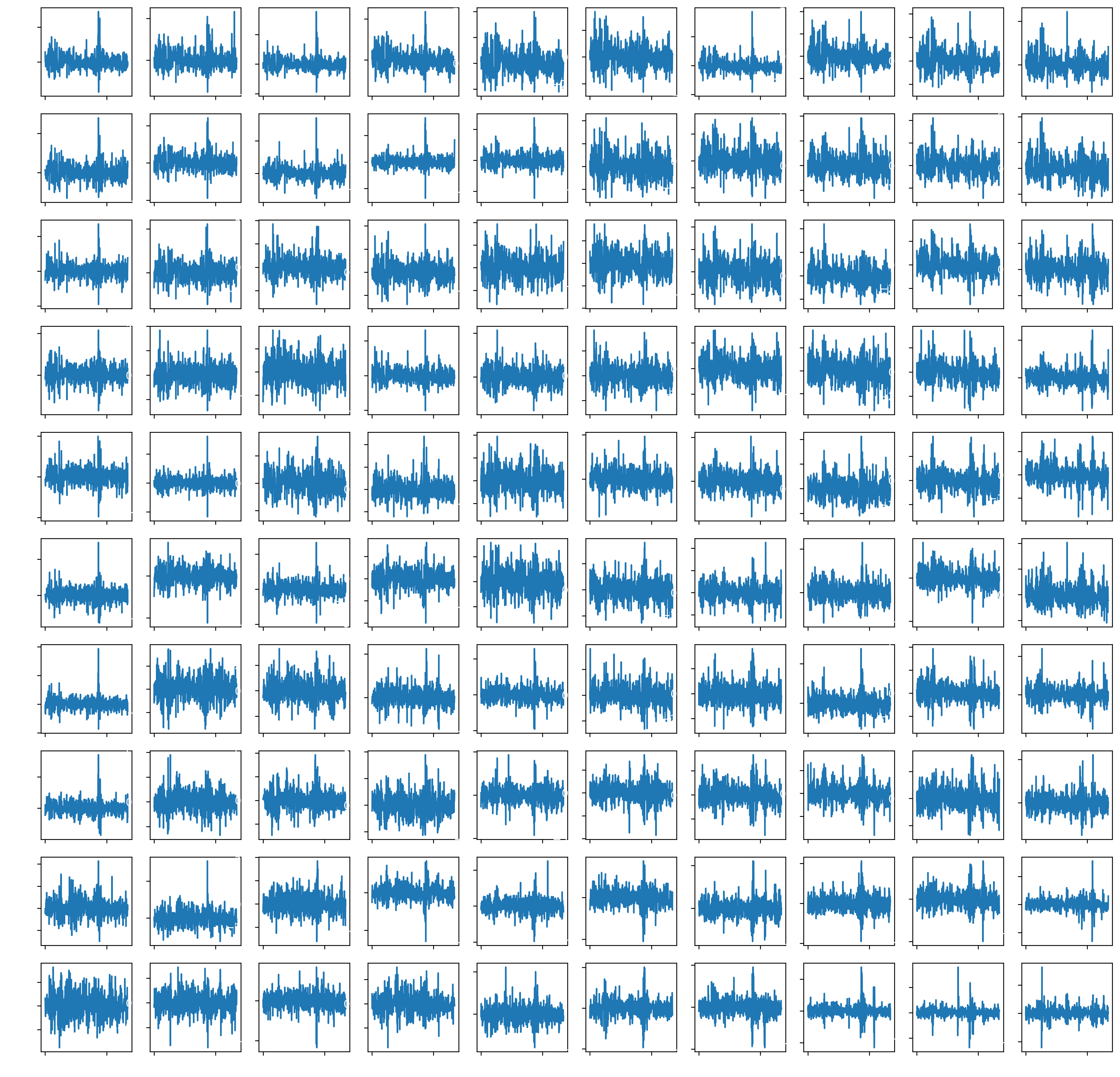}
\caption{Time series plots of Fama-French 10 by 10 series.}
\label{fama-french}
\end{figure}

\begin{figure}[htbp]
\centering  
\subfigure[MPCA$_{op}$]{   
\begin{minipage}{7cm}
\centering    
\includegraphics[scale=0.43]{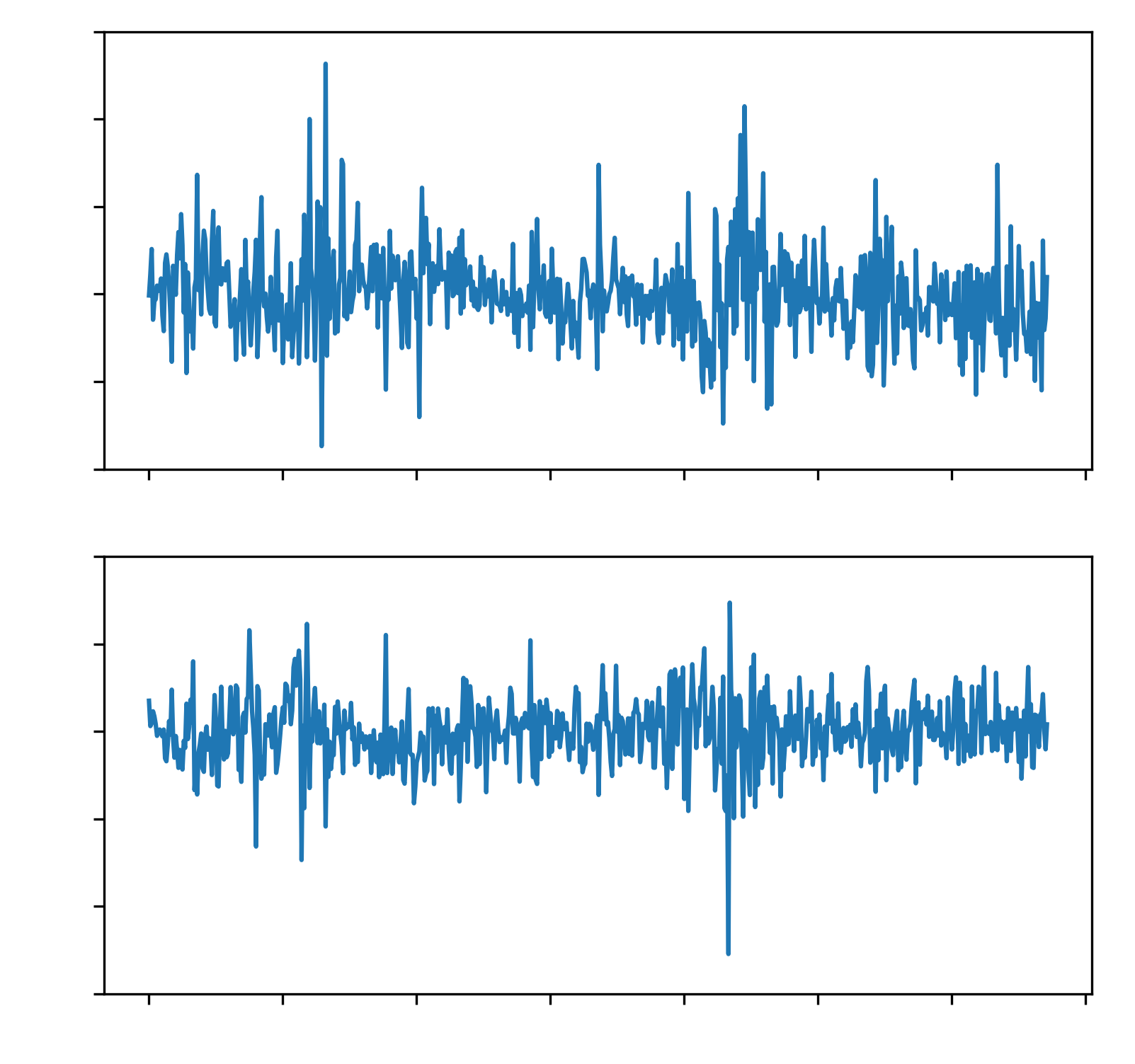}  
\end{minipage}
}
\subfigure[MPCA$_{F}$]{ 
\begin{minipage}{7cm}
\centering    
\includegraphics[scale=0.43]{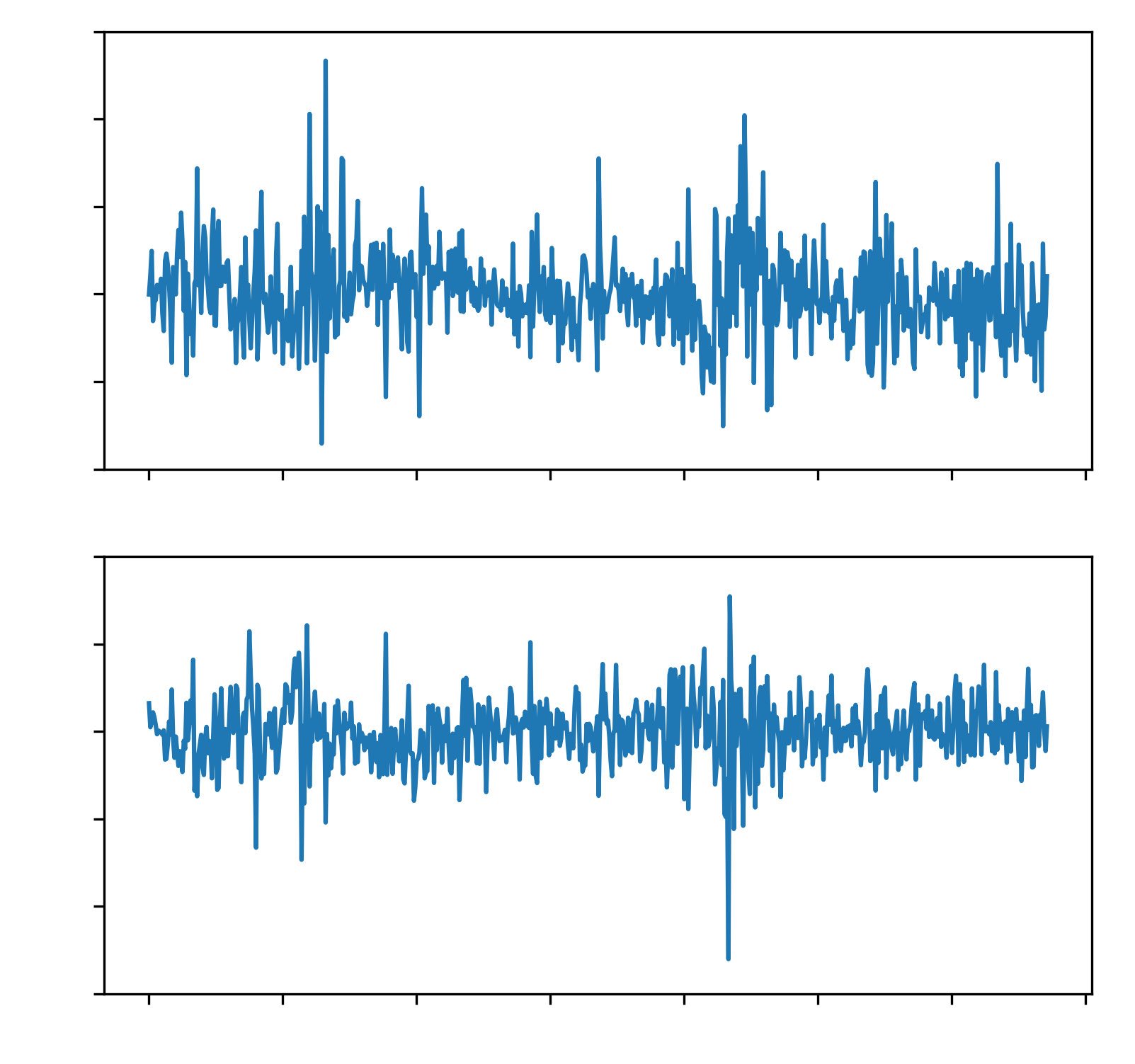}
\end{minipage}
}
\caption{Plots of $\hat{F}_t$ estimated by MPCA$_{op}$ and MPCA$_{F}$ respectively after  varimax rotation.}    
\label{factor_plot}    
\end{figure}

\begin{table}[!h]
\begin{center}
\caption{Rolling validation with $p_0=1$ and $q_0=2$ for Fama-French dataset, the sample size of training set is $12n$. We report the means and standard deviations (in parentheses) of $\mathrm{MSE}_{t}$ and $\mathrm{opMax}_t$. Here MPCA$_{F}$ and MPCA$_{op}$ stands for Manifold PCA methods; PE stands for the projected estimation by \cite{Yu2021Projected}; $(2D)^2$-PCA is from \cite{zhang20052d}, which is equivalent to $\alpha$-PCA by \cite{chen2021statistical} with $\alpha=-1$.}
\label{real_rolling}
\begin{tabular}{ccccc}
\bottomrule
\bf{MSE}\\
$n$&MPCA$_{op}$&MPCA$_{F}$&$(2D)^2$-PCA &PE\\
\hline
5 & (0.7423,0.7372) & (0.7405,0.7490) & (0.7410,0.7291) & \bf{(0.7378,0.7310)}\\
10 & (0.7430,0.7413) & \bf{(0.7431,0.7544)} & (0.7512,0.7377) & (0.7472,0.7431) \\
15 & (0.7488,0.7568) & \bf{(0.7456,0.7637)} & (0.7524,0.7525) & (0.7470,0.7529) \\
20 & (0.7452,0.7528) & \bf{(0.7417,0.7578)} & (0.7499,0.7534) & (0.7451,0.7534) \\
25 & (0.7444,0.7501) & \bf{(0.7407,0.7559)} & (0.7492,0.7574) & (0.7450,0.7566) \\
\bottomrule
\bf{opMax}\\
$n$&MPCA$_{op}$&MPCA$_{F}$&$(2D)^2$-PCA &PE\\
\hline
5 & (0.7599,0.4774) & \bf{(0.7509,0.4714)} & (0.7568,0.4764) & (0.7509,0.4708)\\
10 & (0.7600,0.4784) & \bf{(0.7482,0.4694)} & (0.7683,0.4660) & (0.7538,0.4591) \\
15 & (0.7597,0.4768) & \bf{(0.7482,0.4654)} & (0.7647,0.4655) & (0.7515,0.4564) \\
20 & (0.7621,0.4839) & \bf{(0.7497,0.4726)} & (0.7666,0.4701) & (0.7528,0.4600) \\
25 & (0.7627,0.4788) & \bf{(0.7488,0.4682)} & (0.7625,0.4626) & (0.7493,0.4557) \\
\bottomrule
\end{tabular}
\end{center}
\end{table}

To further compare these methods, we apply similar rolling-validation procedures as in \cite{wang2019factor} and \cite{Yu2021Projected}. For each year $t$ from 1996 to 2019, we take $n$ (bandwidth) years before $t$ as the training set, which is used to fit matrix factor models. The estimated loadings are then used to estimate factors and corresponding residuals of the testing set, consisting of the 12 months next year. Specifically, let $Y_t^{i}$ and $\hat{Y}_t^{i}$ be the observed and predicted price matrix of month $i$ in year $t$, we focus on the  errors $\mathrm{MSE}_{t}$  and $ \mathrm{opMax}_t $ defined as:
\begin{equation*}
	\mathrm{MSE}_{t}=\frac{1}{12 \times 10 \times 10} \sum_{i=1}^{12}\left\|Y_{t}^{i}-\hat{Y}_{t}^{i}\right\|_{F}^{2},\quad \mathrm{opMax}_t =\frac{1}{10} \max_{1\leq i\leq 12} \left\|Y_{t}^{i}-\hat{Y}_{t}^{i}\right\|_{op}.
\end{equation*}

Table \ref{real_rolling} reports the means and standard deviations of $\text{MSE}_t$ and $\text{opMax}_t$ by MPCA$_{op}$, MPCA$_F$, $(2D)^2$-PCA and PE methods. The reported errors of different methods are very close, but MPCA$_F$ performs slightly better under almost all bandwidths  $n$, in terms of both $\text{MSE}_t$ and $\text{opMax}_t$. Financial data is well-known to be heavy-tailed, and thus the more robust MPCA$_F$ is always preferred.

\section{Conclusions and Discussions}
Data in real world such as financial returns are well-known to be heavy-tailed, and robust factor modelling is indispensable as the traditional PCA estimation method would result in bigger biases and higher dispersions as the distribution tails become heavier \citep{he2022large,He2021Statistical}. In this article, we for the first time propose a flexible Matrix Elliptical Factor Model (MEFM)  for better modelling heavy-tailed matrix-valued data, which can be viewed as an extension of the matrix factor model by \cite{wang2019factor}. We also propose  robust Manifold Principle Component Analysis (MPCA)  procedures
to estimate the factor loading, scores, and common components  matrices without any moment constraint under the framework of
Matrix Elliptical Distributions (MED). We explore two versions of MPCA algorithms, denoted as $\text{MPCA}_{F}$ and $\text{MPCA}_{op}$, by considering the optimization problems in (\ref{equ:loss}) under matrix operator norm and matrix frobenius norm respectively. Theoretical convergence rates of the estimators are derived for both versions. However, the $\text{MPCA}_{F}$ method is not only robust to heavy-tailed data, but also enjoys the nice property of the projection technique, thus performs the best in finite-sample experiments. In addition, we also proposed two robust versions to estimate the pair of factor numbers, by calculating eigenvalue-ratios
(ER) of the average projection matrices corresponding to $\text{MPCA}_{F}$ and $\text{MPCA}_{op}$. We prove that the estimators of the pair of factor numbers are consistent.  We conduct extensive numerical
studies to validate the empirical performance of the proposed robust methods and an application to a
Fama-French financial portfolios dataset illustrates the practical value of the current work. In the theoretical analysis of the $\text{MPCA}_{F}$, we assume that either $R$ or $C$ is given to establish the convergence rate of $C$ or $R$, which is not quite satisfying.  As a future work, we will establish the convergence rates of estimators from the iterative procedure, which is more challenging as both statistical error and computational error should be taken into account.

\section*{Acknowledgements}

He's work is supported by  National Science Foundation (NSF) of  China (12171282,11801316), National Statistical Scientific Research Key Project (2021LZ09), Young Scholars Program of Shandong University, Project funded by
China Postdoctoral Science Foundation (2021M701997) and the Fundamental Research Funds of Shandong University. Kong's work is partially supported by NSF China (71971118 and 11831008) and the WRJH-QNBJ Project and Qinglan Project of Jiangsu Province. Zhang's work is supported by  NSF China (11971116).

\bibliographystyle{model2-names}
\bibliography{ref}

\begin{thebibliography}{31}
\expandafter\ifx\csname natexlab\endcsname\relax\def\natexlab#1{#1}\fi
\providecommand{\url}[1]{\texttt{#1}}
\providecommand{\href}[2]{#2}
\providecommand{\path}[1]{#1}
\providecommand{\DOIprefix}{doi:}
\providecommand{\ArXivprefix}{arXiv:}
\providecommand{\URLprefix}{URL: }
\providecommand{\Pubmedprefix}{pmid:}
\providecommand{\doi}[1]{\href{http://dx.doi.org/#1}{\path{#1}}}
\providecommand{\Pubmed}[1]{\href{pmid:#1}{\path{#1}}}
\providecommand{\bibinfo}[2]{#2}
\ifx\xfnm\relax \def\xfnm[#1]{\unskip,\space#1}\fi
\bibitem[{Ahn and Horenstein(2013)}]{ahn2013eigenvalue}
\bibinfo{author}{Ahn, S.C.}, \bibinfo{author}{Horenstein, A.R.},
  \bibinfo{year}{2013}.
\newblock \bibinfo{title}{Eigenvalue ratio test for the number of factors}.
\newblock \bibinfo{journal}{Econometrica} \bibinfo{volume}{81},
  \bibinfo{pages}{1203--1227}.
\bibitem[{Bai(2003)}]{bai2003inferential}
\bibinfo{author}{Bai, J.}, \bibinfo{year}{2003}.
\newblock \bibinfo{title}{Inferential theory for factor models of large
  dimensions}.
\newblock \bibinfo{journal}{Econometrica} \bibinfo{volume}{71},
  \bibinfo{pages}{135--171}.
\bibitem[{Bai and Li(2012)}]{bai2012statistical}
\bibinfo{author}{Bai, J.}, \bibinfo{author}{Li, K.}, \bibinfo{year}{2012}.
\newblock \bibinfo{title}{Statistical analysis of factor models of high
  dimension}.
\newblock \bibinfo{journal}{The Annals of Statistics} \bibinfo{volume}{40},
  \bibinfo{pages}{436--465}.
\bibitem[{Bai and Li(2016)}]{Bai2016Maximum}
\bibinfo{author}{Bai, J.}, \bibinfo{author}{Li, K.}, \bibinfo{year}{2016}.
\newblock \bibinfo{title}{Maximum likelihood estimation and inference for
  approximate factor models of high dimension}.
\newblock \bibinfo{journal}{Review of Economics and Statistics}
  \bibinfo{volume}{98}, \bibinfo{pages}{298--309}.
\bibitem[{Bai and Ng(2002)}]{bai2002determining}
\bibinfo{author}{Bai, J.}, \bibinfo{author}{Ng, S.}, \bibinfo{year}{2002}.
\newblock \bibinfo{title}{Determining the number of factors in approximate
  factor models}.
\newblock \bibinfo{journal}{Econometrica} \bibinfo{volume}{70},
  \bibinfo{pages}{191--221}.
\bibitem[{Chen and Fan(2021)}]{chen2021statistical}
\bibinfo{author}{Chen, E.Y.}, \bibinfo{author}{Fan, J.}, \bibinfo{year}{2021}.
\newblock \bibinfo{title}{Statistical inference for high-dimensional
  matrix-variate factor models}.
\newblock \bibinfo{journal}{Journal of the American Statistical Association} ,
  \bibinfo{pages}{1--18}.
\bibitem[{Chen et~al.(2021a)Chen, Dolado and Gonzalo}]{Chen2021Quantile}
\bibinfo{author}{Chen, L.}, \bibinfo{author}{Dolado, J.J.},
  \bibinfo{author}{Gonzalo, J.}, \bibinfo{year}{2021}a.
\newblock \bibinfo{title}{Quantile factor models}.
\newblock \bibinfo{journal}{Econometrica} \bibinfo{volume}{89},
  \bibinfo{pages}{875--910}.
\bibitem[{Chen et~al.(2021b)Chen, Chi, Fan, Ma et~al.}]{chen2021spectral}
\bibinfo{author}{Chen, Y.}, \bibinfo{author}{Chi, Y.}, \bibinfo{author}{Fan,
  J.}, \bibinfo{author}{Ma, C.}, et~al., \bibinfo{year}{2021}b.
\newblock \bibinfo{title}{Spectral methods for data science: A statistical
  perspective}.
\newblock \bibinfo{journal}{Foundations and Trends{\textregistered} in Machine
  Learning} \bibinfo{volume}{14}, \bibinfo{pages}{566--806}.
\bibitem[{Edelman(1988)}]{edelman1988eigenvalues}
\bibinfo{author}{Edelman, A.}, \bibinfo{year}{1988}.
\newblock \bibinfo{title}{Eigenvalues and condition numbers of random
  matrices}.
\newblock \bibinfo{journal}{SIAM journal on matrix analysis and applications}
  \bibinfo{volume}{9}, \bibinfo{pages}{543--560}.
\bibitem[{Fan et~al.(2013)Fan, Liao and Mincheva}]{fan2013large}
\bibinfo{author}{Fan, J.}, \bibinfo{author}{Liao, Y.},
  \bibinfo{author}{Mincheva, M.}, \bibinfo{year}{2013}.
\newblock \bibinfo{title}{Large covariance estimation by thresholding principal
  orthogonal complements}.
\newblock \bibinfo{journal}{Journal of the Royal Statistical Society: Series B
  (Statistical Methodology)} \bibinfo{volume}{75}, \bibinfo{pages}{603--680}.
\bibitem[{Fan et~al.(2018)Fan, Liu and Wang}]{Fan2018LARGE}
\bibinfo{author}{Fan, J.}, \bibinfo{author}{Liu, H.}, \bibinfo{author}{Wang,
  W.}, \bibinfo{year}{2018}.
\newblock \bibinfo{title}{Large covariance estimation through elliptical factor
  models}.
\newblock \bibinfo{journal}{The Annals of Statistics: An Official Journal of
  the Institute of Mathematical Statistics} \bibinfo{volume}{46},
  \bibinfo{pages}{1383--1414}.
\bibitem[{Gupta and Varga(1994)}]{gupta1994new}
\bibinfo{author}{Gupta, A.}, \bibinfo{author}{Varga, T.}, \bibinfo{year}{1994}.
\newblock \bibinfo{title}{A new class of matrix variate elliptically contoured
  distributions}.
\newblock \bibinfo{journal}{Journal of the Italian Statistical Society}
  \bibinfo{volume}{3}, \bibinfo{pages}{255--270}.
\bibitem[{Gupta and Nagar(2018)}]{gupta2018matrix}
\bibinfo{author}{Gupta, A.K.}, \bibinfo{author}{Nagar, D.K.},
  \bibinfo{year}{2018}.
\newblock \bibinfo{title}{Matrix variate distributions}. volume
  \bibinfo{volume}{104}.
\newblock \bibinfo{publisher}{CRC Press}.
\bibitem[{Ham and Lee(2008)}]{2008Grassmann}
\bibinfo{author}{Ham, J.}, \bibinfo{author}{Lee, D.D.}, \bibinfo{year}{2008}.
\newblock \bibinfo{title}{Grassmann discriminant analysis: a unifying view on
  subspace-based learning}, in: \bibinfo{booktitle}{Machine Learning,
  Twenty-fifth International Conference, Helsinki, Finland, June}.
\bibitem[{He et~al.(2021a)He, Kong, Trapani and Yu}]{He2021Vector}
\bibinfo{author}{He, Y.}, \bibinfo{author}{Kong, X.}, \bibinfo{author}{Trapani,
  L.}, \bibinfo{author}{Yu, L.}, \bibinfo{year}{2021}a.
\newblock \bibinfo{title}{Vector factor model or matrix factor model? \text{A}
  strong rule helps!}
\newblock \bibinfo{journal}{arXiv: arXiv:2110.01008} .
\bibitem[{He et~al.(2020)He, Kong, Yu and Zhang}]{He2020Learning}
\bibinfo{author}{He, Y.}, \bibinfo{author}{Kong, X.}, \bibinfo{author}{Yu, L.},
  \bibinfo{author}{Zhang, P.}, \bibinfo{year}{2020}.
\newblock \bibinfo{title}{Learning quantile factors for large-dimensional time
  series with statistical guarantee}.
\newblock \bibinfo{journal}{arXiv:2006.08214} .
\bibitem[{He et~al.(2022)He, Kong, Yu and Zhang}]{he2022large}
\bibinfo{author}{He, Y.}, \bibinfo{author}{Kong, X.}, \bibinfo{author}{Yu, L.},
  \bibinfo{author}{Zhang, X.}, \bibinfo{year}{2022}.
\newblock \bibinfo{title}{Large-dimensional factor analysis without moment
  constraints}.
\newblock \bibinfo{journal}{Journal of Business \& Economic Statistics}
  \bibinfo{volume}{40}, \bibinfo{pages}{302--312}.
\bibitem[{He et~al.(2021b)He, Kong, Yu, Zhang and Zhao}]{He2021Statistical}
\bibinfo{author}{He, Y.}, \bibinfo{author}{Kong, X.}, \bibinfo{author}{Yu, L.},
  \bibinfo{author}{Zhang, X.}, \bibinfo{author}{Zhao, C.},
  \bibinfo{year}{2021}b.
\newblock \bibinfo{title}{Statistical inference for large-dimensional matrix
  factor model from least squares and huber loss points of view}.
\newblock \bibinfo{journal}{arXiv:2112.04186} .
\bibitem[{He et~al.(2021c)He, Kong, Trapani and Yu}]{He2021Online}
\bibinfo{author}{He, Y.}, \bibinfo{author}{Kong, X.B.},
  \bibinfo{author}{Trapani, L.}, \bibinfo{author}{Yu, L.},
  \bibinfo{year}{2021}c.
\newblock \bibinfo{title}{Online change-point detection for matrix-valued time
  series with latent two-way factor structure}.
\newblock \bibinfo{journal}{arXiv:2112.13479} .
\bibitem[{Onatski(2009)}]{onatski09}
\bibinfo{author}{Onatski, A.}, \bibinfo{year}{2009}.
\newblock \bibinfo{title}{Testing hypotheses about the number of factors in
  large factor models}.
\newblock \bibinfo{journal}{Econometrica} \bibinfo{volume}{77},
  \bibinfo{pages}{1447--1479}.
\bibitem[{Ross(1976)}]{Ross1976The}
\bibinfo{author}{Ross, S.A.}, \bibinfo{year}{1976}.
\newblock \bibinfo{title}{The arbitrage theory of capital asset pricing}.
\newblock \bibinfo{journal}{Journal of Finance} \bibinfo{volume}{13},
  \bibinfo{pages}{341--360}.
\bibitem[{Rudelson and Vershynin(2010)}]{rudelson2010non}
\bibinfo{author}{Rudelson, M.}, \bibinfo{author}{Vershynin, R.},
  \bibinfo{year}{2010}.
\newblock \bibinfo{title}{Non-asymptotic theory of random matrices: extreme
  singular values}, in: \bibinfo{booktitle}{Proceedings of the International
  Congress of Mathematicians 2010 (ICM 2010) (In 4 Volumes) Vol. I: Plenary
  Lectures and Ceremonies Vols. II--IV: Invited Lectures},
  \bibinfo{organization}{World Scientific}. pp. \bibinfo{pages}{1576--1602}.
\bibitem[{Stock and Watson(2002)}]{stock2002forecasting}
\bibinfo{author}{Stock, J.H.}, \bibinfo{author}{Watson, M.W.},
  \bibinfo{year}{2002}.
\newblock \bibinfo{title}{Forecasting using principal components from a large
  number of predictors}.
\newblock \bibinfo{journal}{Journal of the American statistical association}
  \bibinfo{volume}{97}, \bibinfo{pages}{1167--1179}.
\bibitem[{Trapani(2018)}]{Trapani2018A}
\bibinfo{author}{Trapani, L.}, \bibinfo{year}{2018}.
\newblock \bibinfo{title}{A randomised sequential procedure to determine the
  number of factors}.
\newblock \bibinfo{journal}{Journal of the American Statistical Association}
  \bibinfo{volume}{113}, \bibinfo{pages}{1341--1349}.
\bibitem[{Tropp(2012)}]{tropp2012user}
\bibinfo{author}{Tropp, J.A.}, \bibinfo{year}{2012}.
\newblock \bibinfo{title}{User-friendly tail bounds for sums of random
  matrices}.
\newblock \bibinfo{journal}{Foundations of computational mathematics}
  \bibinfo{volume}{12}, \bibinfo{pages}{389--434}.
\bibitem[{Tropp(2015)}]{tropp2015introduction}
\bibinfo{author}{Tropp, J.A.}, \bibinfo{year}{2015}.
\newblock \bibinfo{title}{An introduction to matrix concentration
  inequalities}.
\newblock \bibinfo{journal}{arXiv preprint arXiv:1501.01571} .
\bibitem[{Wang et~al.(2019)Wang, Liu and Chen}]{wang2019factor}
\bibinfo{author}{Wang, D.}, \bibinfo{author}{Liu, X.}, \bibinfo{author}{Chen,
  R.}, \bibinfo{year}{2019}.
\newblock \bibinfo{title}{Factor models for matrix-valued high-dimensional time
  series}.
\newblock \bibinfo{journal}{Journal of econometrics} \bibinfo{volume}{208},
  \bibinfo{pages}{231--248}.
\bibitem[{Yu et~al.(2021)Yu, He, Kong and Zhang}]{Yu2021Projected}
\bibinfo{author}{Yu, L.}, \bibinfo{author}{He, Y.}, \bibinfo{author}{Kong, X.},
  \bibinfo{author}{Zhang, X.}, \bibinfo{year}{2021}.
\newblock \bibinfo{title}{Projected estimation for large-dimensional matrix
  factor models}.
\newblock \bibinfo{journal}{Journal of Econometrics, in press.}
  \DOIprefix\doi{10.1016/j.jeconom.2021.04.001}.
\bibitem[{Yu et~al.(2019)Yu, He and Zhang}]{yu2019robust}
\bibinfo{author}{Yu, L.}, \bibinfo{author}{He, Y.}, \bibinfo{author}{Zhang,
  X.}, \bibinfo{year}{2019}.
\newblock \bibinfo{title}{Robust factor number specification for
  large-dimensional elliptical factor model}.
\newblock \bibinfo{journal}{Journal of Multivariate analysis}
  \bibinfo{volume}{174}, \bibinfo{pages}{104543}.
\bibitem[{Yu et~al.(2015)Yu, Wang and Samworth}]{yu2015useful}
\bibinfo{author}{Yu, Y.}, \bibinfo{author}{Wang, T.},
  \bibinfo{author}{Samworth, R.J.}, \bibinfo{year}{2015}.
\newblock \bibinfo{title}{A useful variant of the davis--kahan theorem for
  statisticians}.
\newblock \bibinfo{journal}{Biometrika} \bibinfo{volume}{102},
  \bibinfo{pages}{315--323}.
\bibitem[{Zhang and Zhou(2005)}]{zhang20052d}
\bibinfo{author}{Zhang, D.}, \bibinfo{author}{Zhou, Z.H.},
  \bibinfo{year}{2005}.
\newblock \bibinfo{title}{(2d) 2pca: Two-directional two-dimensional pca for
  efficient face representation and recognition}.
\newblock \bibinfo{journal}{Neurocomputing} \bibinfo{volume}{69},
  \bibinfo{pages}{224--231}.

\end{thebibliography}

\setlength{\bibsep}{1pt}

\renewcommand{\baselinestretch}{1}
\setcounter{footnote}{0}
\clearpage
\setcounter{page}{1}
\title{
	\begin{center}
		\Large Supplementary Materials for ``Manifold Principle Component Analysis for Large-Dimensional Matrix Elliptical Factor Model"
	\end{center}
}
\date{}
\begin{center}
	\author{
	ZeYu Li
		\footnotemark[1] \footnotemark[4],
	Yong He\footnotemark[2] \footnotemark[4],
Xinbing Kong\footnotemark[3],
	Xinsheng Zhang
		\footnotemark[1]
	}
\renewcommand{\thefootnote}{\fnsymbol{footnote}}
\footnotetext[1]{Department of Statistics, School of Management at Fudan University, China; e-mail:{\tt zeyuli21@m.fudan.edu.cn; xszhang@fudan.edu.cn }}
\footnotetext[2]{Institute of Financial Studies, Shandong University, China; e-mail:{\tt heyong@sdu.edu.cn }}
\footnotetext[3]{Nanjing Audit University, China; e-mail:{\tt xinbingkong@126.com }}
\footnotetext[4]{The authors contributed equally to this work.}
\end{center}
\maketitle
\appendix
This document provides detailed proofs and additional simulation results of the main paper.

\section{Proof of Lemma \ref{sqrt(T)-convergence}}
It is a direct consequence of the matrix Hoeffding inequality from \cite{tropp2012user}, we only need to verify that $[P_{\hat{R}_t}-\E P_{\hat{R}_t}]^2\preccurlyeq I_p$ almost surely. Since $0\leq\langle v,P_{\hat{R}_t} v\rangle \leq 1$ almost surely and thus $0\leq\langle v,\E P_{\hat{R}_t} v\rangle \leq 1$ for all $\|v\|_2=1$, we have $\|P_{\hat{R}_t}-\E P_{\hat{R}_t}\|_{op}\leq 1$ almost surely. The rest would be straightforward.
\section{Proof of Lemma \ref{invariant-subspaces}}\label{proof_invariant_subspace}
For joint matrix elliptical data $X_t=RF_tC^{\top}+E_t$, since $E_t$ is left spherical, we could write $F_t = r_t\Sigma_1^{1/2} Z_t^{F}\Sigma_2^{1/2}/\|Z_t\|_2$ and $E_t = r_t Z_t^{E}\Omega_2^{1/2}/\|Z_t\|_2$ under model \ref{eq_joint_ellip_model}. Let $Z_t^{'E} = (2P_R-I) Z_t^E$, then $\|Z_t\|_2^2=\|Z_t^F\|_F^2+\|Z_t^E\|_F^2 = \|Z_t^F\|_F^2+\|Z_t^{'E}\|_F^2 = \|Z'_t\|_2^2$ almost surely, where $Z'_t$ is defined as $(\vec(Z_t^F)^{\top}, \vec(Z_t^{'E})^{\top})$. Due to rotational invariance of $Z_t^E$, $Z_t$ is identically distributed to $Z'_t$, the latter generates $X'_t = RF'_tC^{\top}+E'_t\stackrel{d}{=}X_t$. Now that $F'_t = r_t\Sigma_1^{1/2} Z_t^{F}\Sigma_2^{1/2}/\|Z'_t\|_2=F_t$ almost surely, while $E'_t = r_t Z_t^{'E}\Omega_2^{1/2}/\|Z'_t\|_2 = (2P_R-I)E_t$ almost surely, we have $X_t'=(2P_R-I)X_t$ almost surely. Since $X_t$ and $X_t'$ are identically distributed, so do their best linear subspace estimations $\hat{R}_t$ and $\hat{R}_t'$, so that:
	\begin{equation*}
		\E\left[P_{\hat{R}_t}\right] = \E\left[P_{\hat{R}_t}+P_{\hat{R}_t'}\right]/2.
	\end{equation*}

From $X_t'=(2P_R-I)X_t$ we know that the column vectors of $X_t'$ and $X_t$ are symmetric in respect of $\span(R)$, so should their leading left singular vectors $\hat{R}_t'$ and $\hat{R}_t$, which gives $P_{\hat{R}_t'}=(2P_R-I)P_{\hat{R}_t}(2P_R-I)$. That is to say, $\forall u \in \span(R)$, $(P_{\hat{R}_t}+P_{\hat{R}_t'})u=2P_R P_{\hat{R}_t}u\in \span(R)$.  Similarly,  $\forall v \in \span(R^{\perp})$, $(P_{\hat{R}_t}+P_{\hat{R}_t'})v=2(I-P_R) P_{\hat{R}_t}v\in \span(R^{\perp})$. In the end, since $\E$ is linear, $\E\left[P_{\hat{R}_t}+P_{\hat{R}_t'}\right]u\in \span(R)$ and $\E\left[P_{\hat{R}_t}+P_{\hat{R}_t'}\right]v\in \span(R^{\perp})$, we claim the proof.

\section{Proof of Theorem \ref{consistency_MPCAop}}\label{proof_MPCAop}
Without loss of generality, we only prove $\span(R)$ here. For notation simplicity, here we let $\hat{R}$ to be the result from MPCA$_{op}$, instead of $\hat{R}_{op}$. Before the proof, recall a well-known fact that for $P_{\hat{R}}=\hat{R}\hat{R}^{\top}/p$ and $P_R=RR^{\top}/p$:
\begin{equation*}
	\|P_{\hat{R}}-P_R\|^2_F\asymp \min_{H_R\in\mathcal{O}_{p_0,p_0}}\|\hat{R}-RH_R\|^2_F/p,
\end{equation*}
where the left hand side is known as the projection metric on Grassmann manifolds. So it is equivalent to study the term $\|P_{\hat{R}}-P_R\|_F$ instead, see \cite{chen2021spectral} for details.

\subsection{Spherical Neighbour}
From lemma \ref{invariant-subspaces}, it is ideal if $E_t$ is left spherical when estimating $\span(R)$, right spherical when estimating $\span(C)$, and spherical when estimating both. Unfortunately, let $\zeta_t=r_t/\|Z_t\|_2$, the noise $E_t = \zeta_t\Omega_1^{1/2} Z_t^{E}\Omega_2^{1/2}$ is elliptically transformed by $\Omega_1$ and $\Omega_2$. It is then natural to evaluate how much harm would deviating to non-spherical noise do. A spherical neighbour argument is applied where we construct a desirable $\dot{E}_t$ sufficiently close to $E_t$, controlling the difference by matrix perturbation results. Since we are currently dealing with $\span(R)$, $\dot{E}_t$ needs to be left spherical. Let $\omega_1 = \argmin_{\omega}\|\Omega_1^{1/2}-\omega^{1/2} I_{p}\|_{op}$, define $\dot{E}_t=\zeta_t\omega_1^{1/2} Z_t^{E}\Omega_2^{1/2}$. Denote $\dot{X}_t=RF_tC^{\top}+\dot{E}_t$, $\dot{P}_{\hat{R}_t}$ as the empirical projection matrix from each $\dot{X}_t$, and $\dot{P}_{\hat{R}}$ as the $\text{MPCA}_{op}$ result from $\{\dot{X}_t\}$, then by triangular inequality we have:

\begin{equation}\label{sn_triangular}
	\|P_{\hat{R}}-P_R\|_F\leq \|P_{\hat{R}}-\dot{P}_{\hat{R}}\|_F + \|\dot{P}_{\hat{R}}-P_R\|_F.
\end{equation}

We first focus on the term $\|P_{\hat{R}}-\dot{P}_{\hat{R}}\|_F$, which comes from noise $E_t$ being non-spherical. Define $d_{i}(A)=(\lambda_{i}-\lambda_{i+1})(A)$ as the $i$-th eigengap of matrix $A$, where $\lambda_{i}$ is the $i$-th non-increasing eigenvalue of $A$. Consider the matrix perturbation $\sum_{t=1}^{T} P_{\hat{R}_t}/T =\sum_{t=1}^{T} \dot{P}_{\hat{R}_t}/T+ \sum_{t=1}^{T} (P_{\hat{R}_t}- \dot{P}_{\hat{R}_t})/T$, by YWS's inequality from \cite{yu2015useful} and Jensen's inequality:
\begin{equation}\label{eq_csn}
	 \|P_{\hat{R}}-\dot{P}_{\hat{R}}\|_F \leq \frac{2\sqrt{2}\|\sum_{t=1}^{T} (P_{\hat{R}_t}- \dot{P}_{\hat{R}_t})/T\|_F}{d_{p_0}(\sum_{t=1}^{T} \dot{P}_{\hat{R}_t}/T)}\leq \frac{2\sqrt{2}\sum_{t=1}^{T} \|P_{\hat{R}_t}- \dot{P}_{\hat{R}_t}\|_F}{T d_{p_0}(\sum_{t=1}^{T} \dot{P}_{\hat{R}_t}/T)}.
\end{equation}

We are going to show that $d_{p_0}(\E \dot{P}_{\hat{R}_t})>0$ and $\|\sum_{t=1}^{T} \dot{P}_{\hat{R}_t}/T-\E \dot{P}_{\hat{R}_t}\|_{op}\rightarrow 0$ as $T, p ,q\rightarrow \infty$ in later analysis. By Weyl's inequality the eigengap $d_{p_0}(\sum_{t=1}^{T} \dot{P}_{\hat{R}_t}/T)\rightarrow d_{p_0}(\E \dot{P}_{\hat{R}_t})>0$. So we focus on the term $\sum_{t=1}^{T} \|P_{\hat{R}_t}- \dot{P}_{\hat{R}_t}\|_F/T$. Consider the perturbation $X_t=\dot{X}_t+(E_t-\dot{E}_t)$, by Wedin's theorem and Weyl's inequality:
\begin{equation}\label{control_spherical_neighbour}
	\|P_{\hat{R}_t}- \dot{P}_{\hat{R}_t}\|_F\leq \frac{2\sqrt{r_0}\|E_t-\dot{E}_t\|_{op}}{\sigma_{t,r_0}-2\|\dot{E}_t\|_{op}-\|E_t-\dot{E}_t\|_{op}}\wedge \sqrt{2r_0},
\end{equation}
where $\sigma_{t,r_0}$ is the $r_0$-th singular value of the signal part $S_t=RF_t C^{\top}$, and by sub-multiplicativity of operator norm:
\begin{equation}\label{eq_sn_et}
\|E_t-\dot{E}_t\|_{op}=\frac{\|(\Omega_1^{1/2}-\omega_1^{1/2} I_{p})\dot{E}_t\|_{op}}{\omega_1^{1/2}}\leq \frac{\|\Omega_1^{1/2}-\omega_1^{1/2} I_{p}\|_{op}}{\omega_1^{1/2}}\|\dot{E}_t\|_{op}.
\end{equation}

Then under Assumption \ref{joint_elliptical} to \ref{regular_noise}, we have
\begin{equation}\label{eq_Edsn}
	\E\|P_{\hat{R}_t}- \dot{P}_{\hat{R}_t}\|_F=O(p^{-1/4}+q^{-1/4}).
\end{equation}

Intuitively, under strong factor model, $\sigma_{t,r_0}\gtrsim (pq)^{1/2} \zeta_t \sigma_{r_0}(Z_t^{F})$ and $\|\dot{E}_t\|_{op}\lesssim (p\vee q)^{1/2}\zeta_t$, so that $\E\|P_{\hat{R}_t}- \dot{P}_{\hat{R}_t}\|_F$ tends to zero as $p$, $q\rightarrow \infty$ from \ref{control_spherical_neighbour} and \ref{eq_sn_et}. To be more specific, here without loss of generality assume $p_0=r_0\leq q_0$, consider the matrix $S_tS_t^{\top}=qRF_t F_t^{\top}R^{\top}$, whose $r_0$-th eigenvalue is $\sigma^2_{t,r_0}$. Let $v=Ru/\sqrt{p}$, $\|u\|_2=1$ be the unit vector of $r_0$-dimensional subspace $\span(R)$, then $\langle v,S_tS_t^{\top} v\rangle=pq\langle u, F_t F_t^{\top} u \rangle\geq pq \sigma^2_{r_0}(F_t) \geq pq (c_1\zeta_t)^2 \sigma^2_{r_0}(Z_t^F)$, $\forall v\in \span(R)$, the last inequality comes from $\|\Sigma_1^{1/2}Z_t^F\Sigma_2^{1/2}w\|_2\geq c_1\sigma_{r_0}(Z_t^F)$ for all $\|w\|_2=1$. By Courant–Fischer's minimax theorem, $\sigma_{t,r_0}\geq  (pq)^{1/2} c_1 \zeta_t\sigma_{r_0}(Z_t^F)$, while $\|\dot{E}_t\|_{op}\lesssim  \zeta_t\|Z_t^E\|_{op}$ is straightforward by sub-multiplicativity of operator norm. Consider the ratio in \ref{control_spherical_neighbour}, under joint matrix elliptical model in Assumption \ref{joint_elliptical}, the shared $\zeta_t$ is cancelled out, we only need to focus on the expectation of $(pq)^{-1/2}\|Z_t^E\|_{op}/\sigma_{r_0}(Z_t^F)$, whose numerator and denominator are independent.

Non-asymptotic random matrix theory asserts that $\sigma_{r_0}(Z_t^F)\asymp \sqrt{q_0}-\sqrt{p_0}$, so $\E\|P_{\hat{R}_t}- \dot{P}_{\hat{R}_t}\|_F=O(p^{-1/2}+q^{-1/2})$ unless $p_0=q_0=r_0$, only then $\sigma_{r_0}(Z_t^{F})$ has a larger probability towards $0$, leading to invertibility problems. In this case, we could use the fact from \cite{edelman1988eigenvalues} that:

\begin{equation*}
	\mathbb{P}\left(\sigma_{r_0}(Z_t^{F}) \leq \varepsilon r_0^{-1 / 2}\right) \leq \varepsilon, \quad \forall \varepsilon \geq 0,
\end{equation*}
see \cite{rudelson2010non} for details on extreme singular values of random matrices. If $p\geq q$, let $\epsilon = q^{-1/4}$, and $\mathbb{P}\left(\sigma_{r_0}(Z_t^{F}) \leq q^{-1/4}\right) \lesssim q^{-1/4}$. Then we have:

\begin{equation}\label{truncation}
	\begin{aligned}
		\E\|P_{\hat{R}_t}- \dot{P}_{\hat{R}_t}\|_F &= \E\left(\|P_{\hat{R}_t}- \dot{P}_{\hat{R}_t}\|_F I_{\{\sigma_{r_0}(Z_t^{F}) \leq q^{-1/4}\}}\right) \\
		&\quad + \E\left(\|P_{\hat{R}_t}- \dot{P}_{\hat{R}_t}\|_F I_{\{\sigma_{r_0}(Z_t^{F}) > q^{-1/4}\}}\right)\\
		&\lesssim \sqrt{2r_0} \mathbb{P}\left(\sigma_{r_0}(Z_t^{F}) \leq q^{-1/4}\right) + q^{-1/4} = O(q^{-1/4}),
	\end{aligned}
\end{equation}
where $\E(\|P_{\hat{R}_t}- \dot{P}_{\hat{R}_t}\|_F I_{\{\sigma_{r_0}(Z_t^{F}) > q^{-1/4}\}})\lesssim q^{-1/4}$ comes from the fact that when $\sigma_{r_0}(Z_t^{F}) > q^{-1/4}$, we have $\sigma_{t,r_0}\geq c_1\zeta_t p^{1/2}q^{1/4}$, and $\|P_{\hat{R}_t}- \dot{P}_{\hat{R}_t}\|_F\lesssim \|\dot{E}_t\|_{op}/\sigma_{t,r_0}\lesssim q^{-1/4}$ from \ref{control_spherical_neighbour}. Similarly, if $p< q$, $\E\|P_{\hat{R}_t}- \dot{P}_{\hat{R}_t}\|_F=O(p^{-1/4})$. In the end, since $\|P_{\hat{R}_t}- \dot{P}_{\hat{R}_t}\|_F\leq \sqrt{2r_0}$ almost surely, apply scalar concentration to \ref{eq_csn}, \ref{eq_Edsn} and:
\begin{equation}\label{eq_sn_concen}
	\|P_{\hat{R}}-\dot{P}_{\hat{R}}\|_F\lesssim \frac{\sum_{t=1}^{T} \|P_{\hat{R}_t}- \dot{P}_{\hat{R}_t}\|_F}{T}=O_p(T^{-1/2}+p^{-1/4}+q^{-1/4}),
\end{equation}
that is to say, the influence of $E_t$ being non-spherical is ignorable as $T,p, q\rightarrow \infty$.

We then focus on the second term $\|\dot{P}_{\hat{R}}-P_R\|_F$ in \ref{sn_triangular}, the convergence of $\text{MPCA}_{op}$ under spherical noise $\dot{E}_t$. With slight abuse of notation but no loss of generality, the model could be reset as $X_t=RF_tC^{\top}+E_t$, where $\Omega_1=\omega_1 I_p$ and $E_t$ is left spherical. The second term is then $\|P_{\hat{R}}-P_R\|_F$. Since $\span(R)$ is the invariant subspace of $\E P_{\hat{R}_t}$ according to lemma \ref{invariant-subspaces}, consider the matrix perturbation $\sum_{t=1}^T P_{\hat{R}_t}/T = \E P_{\hat{R}_t}+(\sum_{t=1}^T P_{\hat{R}_t}/T-\E P_{\hat{R}_t})$, by YWS's inequality we have:

\begin{equation}\label{eq_num_denom}
		\|P_{\hat{R}}-P_R\|_F \leq\frac{2\sqrt{2p_0}\|\sum_{t=1}^T P_{\hat{R}_t}/T-\E P_{\hat{R}_t}\|_{op}}{d_{p_0}(\E P_{\hat{R}_t})}\wedge\sqrt{2p_0},
\end{equation}
so the convergence of $\|P_{\hat{R}}-P_R\|_F$ naturally depends on the expected projection matrix $\E P_{\hat{R}_t}$, the denominator, and on matrix concentration, the numerator.

\subsection{Non-degenerated Case}

We first show convergence of $\|P_{\hat{R}}-P_R\|_F$ in the non-degenerated case, where $p_0=r_0\leq q_0$. After spherical neighbour arguments, the model is reset as $X_t=RF_tC^{\top}+E_t$ with $E_t$ left spherical.
\subsubsection{Expected Projection Matrix}
We first take a look at $\E P_{\hat{R}_t}$, by lemma \ref{invariant-subspaces}, $\E P_{\hat{R}_t}$ could be decomposed into two separate parts:
\begin{equation*}
	\E P_{\hat{R}_t} = \sum_{i=1}^{p} \lambda_i u_i u_i^{\top} = \underbrace{\sum_{i=1}^{r_0} \lambda_i u_i u_i^{\top}}_{S}+\underbrace{\sum_{j=r_0+1}^{p} \lambda_j u_j u_j^{\top}}_{N},
\end{equation*}
where $u_i$, $i\in\{1,\dots,r_0\}$, generate $\span(R)$ while $u_j$, $j\in\{r_0+1,\dots,p\}$, generate $\span(R^{\perp})$.

\begin{lemma}[Subspace Variance]\label{subspace_variance}
	For $E_t$ left spherical, let $\{\theta_i\}$ be the principal angles between $\span(\hat{R}_t)$ and $\span(R)$, the following equality holds:
	\begin{equation*}
		\tr(N) = \E\left(\sum_{i=1}^{r_0} \sin^2 \theta_i\right) = \E\|P_{\hat{R}_t}-P_R\|_F^2/2.
	\end{equation*}
\end{lemma}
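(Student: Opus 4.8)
The plan is to separate the two claimed equalities, both of which reduce to the standard dictionary between principal angles and projection matrices, the second one directly and the first one only after Lemma \ref{invariant-subspaces} has been invoked.

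First I would record the elementary identity for two $r_0$-dimensional orthogonal projections. Writing the principal angles $\theta_1,\dots,\theta_{r_0}$ between $\span(\hat{R}_t)$ and $\span(R)$, the squared cosines $\cos^2\theta_i$ are exactly the squared singular values of $\hat{R}_t^{\top}R/\sqrt{p}$ (equivalently the eigenvalues of $P_R P_{\hat{R}_t}P_R$ restricted to $\span(R)$), so that $\tr(P_R P_{\hat{R}_t})=\sum_{i=1}^{r_0}\cos^2\theta_i$. Since both $P_{\hat{R}_t}$ and $P_R$ are rank-$r_0$ orthogonal projections, $\tr(P_{\hat{R}_t})=\tr(P_R)=r_0$, and expanding the Frobenius norm gives
\begin{equation*}
	\|P_{\hat{R}_t}-P_R\|_F^2 = \tr(P_{\hat{R}_t})+\tr(P_R)-2\tr(P_R P_{\hat{R}_t}) = 2\sum_{i=1}^{r_0}\left(1-\cos^2\theta_i\right) = 2\sum_{i=1}^{r_0}\sin^2\theta_i.
\end{equation*}
Taking expectations yields the second equality $\E\left(\sum_{i=1}^{r_0}\sin^2\theta_i\right)=\E\|P_{\hat{R}_t}-P_R\|_F^2/2$ at once.

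For the first equality I would exploit the invariant-subspace structure. By Lemma \ref{invariant-subspaces}, $\span(R)$ and $\span(R^{\perp})$ are invariant subspaces of $\E P_{\hat{R}_t}$, so in the block decomposition induced by $P_R$ and $I-P_R$ the off-diagonal blocks vanish and the component $N$ is precisely the compression of $\E P_{\hat{R}_t}$ to $\span(R^{\perp})$, that is, $N=(I-P_R)\,\E P_{\hat{R}_t}\,(I-P_R)$. Using idempotence $(I-P_R)^2=I-P_R$, cyclicity of the trace, and linearity of expectation,
\begin{equation*}
	\tr(N)=\tr\!\left((I-P_R)\,\E P_{\hat{R}_t}\right)=\E\,\tr\!\left((I-P_R)P_{\hat{R}_t}\right)=\E\left[r_0-\tr(P_R P_{\hat{R}_t})\right]=\E\sum_{i=1}^{r_0}\sin^2\theta_i,
\end{equation*}
where the last step reuses the computation above. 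The interchange of trace and expectation is harmless, since $\|P_{\hat{R}_t}\|_{op}\le 1$ almost surely.

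The calculation is otherwise routine, so the one point requiring care — and the step I expect to be the crux — is the identification $N=(I-P_R)\,\E P_{\hat{R}_t}\,(I-P_R)$. Without the invariance supplied by Lemma \ref{invariant-subspaces}, the cross blocks $P_R\,\E P_{\hat{R}_t}\,(I-P_R)$ need not vanish and $\tr(N)$ would not collapse to the clean deviation expression; I would therefore state explicitly that, by that lemma, $N$ is exactly the $\span(R^{\perp})$-part of the spectral decomposition of $\E P_{\hat{R}_t}$, which is what makes $\tr(N)=\E\,\tr((I-P_R)P_{\hat{R}_t})$ legitimate.
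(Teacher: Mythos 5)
Your proof is correct and follows essentially the same route as the paper's: both arguments use the invariant-subspace property from Lemma \ref{invariant-subspaces} to identify $\tr(N)$ with $\tr\left[\E P_{\hat{R}_t}(I-P_R)\right]$, and then invoke the principal-angle identity $\tr(P_{\hat{R}_t}P_R)=\sum_{i=1}^{r_0}\cos^2\theta_i$ together with the Frobenius-norm expansion $\|P_{\hat{R}_t}-P_R\|_F^2=2r_0-2\tr(P_{\hat{R}_t}P_R)$. Your write-up merely makes explicit (via $N=(I-P_R)\,\E P_{\hat{R}_t}\,(I-P_R)$ and the trace--expectation interchange) what the paper compresses into ``share eigenspace, it is easy to see.''
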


\begin{proof}
	Now that $\E P_{\hat{R}_t}$ and $P_R$ share eigenspace, It is easy to see that:
	\begin{equation*}
		\tr(N)=\tr\left[\E P_{\hat{R}_t}(I-P_R)\right]=r_0-\E\tr(P_{\hat{R}_t}P_R).
	\end{equation*}
		
	Since $\tr(P_{\hat{R}_t}P_R)=\sum_{i=1}^{r_0}\cos^2 \theta_i$ by definition of principal angles, we have acquired the proof.
\end{proof}

\begin{lemma}[Subspace Deviation]\label{subspace-deviation}
	For matrix model $X_t=RF_tC^{\top}+E_t$, denote $\sigma_{t,r_0}$ as the $r_0$-th singular value of the signal part $RF_t C^{\top}$, we have:
	
\begin{equation*}
	\|P_{\hat{R}_t}-P_R\|_F/\sqrt{2}\leq \frac{\sqrt{2r_0}\|E_t\|_{op}}{\sigma_{t,r_0}-\|E_t\|_{op}}\wedge\sqrt{r_0}.
\end{equation*}
\end{lemma}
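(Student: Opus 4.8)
The plan is to recognize this as a singular-subspace perturbation bound and to prove it with Wedin's $\sin\Theta$ theorem applied to the decomposition $X_t = S_t + E_t$, where $S_t = RF_tC^{\top}$ is the signal. First I would reduce the projection distance to principal angles. Writing $\Theta$ for the principal angles between $\span(\hat{R}_t)$ and $\span(R)$, the identity $\|P_{\hat{R}_t}-P_R\|_F^2 = 2r_0 - 2\|\hat{R}_t^{\top}R/\sqrt{p}\|_F^2 = 2\sum_{i=1}^{r_0}\sin^2\theta_i$ gives $\|P_{\hat{R}_t}-P_R\|_F/\sqrt{2} = \|\sin\Theta\|_F$, so it suffices to bound $\|\sin\Theta\|_F$. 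Since here $p_0 = r_0 \le q_0$ and $F_t$ has full row rank almost surely, $S_t$ has rank exactly $r_0$ and its leading $r_0$ left singular subspace is precisely $\span(R)$; thus $P_R$ projects onto the top-$r_0$ left singular subspace of $S_t$, whose $r_0$-th singular value is $\sigma_{t,r_0}>0$ and whose $(r_0+1)$-th singular value is $0$.

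Next I would invoke Wedin's theorem with $A = S_t$ and $\tilde A = X_t$, retaining the leading $r_0$ singular triplets. Let $\hat{R}_t$, $\tilde V_t$, $\tilde\Sigma_t$ denote the top-$r_0$ left singular vectors, right singular vectors and singular values of $X_t$. The two Wedin residuals simplify cleanly: using $X_t\tilde V_t = \hat{R}_t\tilde\Sigma_t$ one gets $S_t\tilde V_t - \hat{R}_t\tilde\Sigma_t = (X_t - E_t)\tilde V_t - \hat{R}_t\tilde\Sigma_t = -E_t\tilde V_t$, and similarly $S_t^{\top}\hat{R}_t - \tilde V_t\tilde\Sigma_t = -E_t^{\top}\hat{R}_t$. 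Each residual has $r_0$ columns, so $\|E_t\tilde V_t\|_F \le \sqrt{r_0}\,\|E_t\|_{op}$ and $\|E_t^{\top}\hat{R}_t\|_F \le \sqrt{r_0}\,\|E_t\|_{op}$, whence the combined residual norm is at most $\sqrt{2r_0}\,\|E_t\|_{op}$.

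For the separation I would use Weyl's inequality: since $\sigma_{r_0+1}(S_t)=0$, the $(r_0+1)$-th singular value of $X_t$ satisfies $\sigma_{r_0+1}(X_t)\le\|E_t\|_{op}$, so the gap between the retained block of $S_t$ and the discarded block of $X_t$ is at least $\delta := \sigma_{t,r_0} - \|E_t\|_{op}$ (with $\sigma_{t,r_0}\ge\delta$), positive precisely when $\sigma_{t,r_0} > \|E_t\|_{op}$. Wedin's theorem then yields $\sqrt{\|\sin\Theta_L\|_F^2 + \|\sin\Theta_R\|_F^2} \le \sqrt{2r_0}\,\|E_t\|_{op}/\delta$, and dropping the right-angle term gives $\|\sin\Theta\|_F \le \sqrt{2r_0}\,\|E_t\|_{op}/(\sigma_{t,r_0}-\|E_t\|_{op})$. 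Finally, since there are only $r_0$ principal angles each with $\sin^2\theta_i \le 1$, the trivial bound $\|\sin\Theta\|_F \le \sqrt{r_0}$ always holds; taking the minimum of the two bounds and multiplying by $\sqrt{2}$ reproduces the claim, and the minimum also covers the regime $\sigma_{t,r_0}\le\|E_t\|_{op}$ in which the first bound is vacuous.

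The main obstacle I anticipate is the bookkeeping around the separation hypothesis of Wedin's theorem: one must verify that $\span(R)$ is genuinely the rank-$r_0$ left singular subspace of $S_t$ (hence that $\sigma_{t,r_0}>0$, requiring $F_t$ full-rank, which holds almost surely under Assumption \ref{joint_elliptical}) and that the gap is measured between $\sigma_{r_0}(S_t)$ and the \emph{perturbed} $\sigma_{r_0+1}(X_t)$ rather than between same-matrix singular values; obtaining the correct $\sigma_{t,r_0}-\|E_t\|_{op}$ denominator and the $\sqrt{2r_0}$ constant hinges on this. The remaining steps (the residual computation and the Frobenius-versus-operator norm passage) are routine.
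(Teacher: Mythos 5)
Your proposal is correct and follows exactly the paper's route: the paper's own proof is the one-line remark that the lemma ``is a straightforward corollary of Wedin's theorem for perturbation $X_t=RF_tC^{\top}+E_t$,'' and your argument simply supplies the details of that corollary --- the principal-angle identity $\|P_{\hat{R}_t}-P_R\|_F=\sqrt{2}\,\|\sin\Theta\|_F$, the residual bound $\sqrt{2r_0}\,\|E_t\|_{op}$, and the Weyl-based separation $\sigma_{t,r_0}-\|E_t\|_{op}$, with the trivial $\sqrt{r_0}$ cap handling the vacuous regime.
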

\begin{proof}
	It is a straightforward corollary of Wedin's theorem for perturbation $X_t=RF_tC^{\top}+E_t$.
\end{proof}
Since $\tr(\E P_{\hat{R}_t})=\E \tr( P_{\hat{R}_t})=r_0$, from lemma \ref{subspace_variance}, \ref{subspace-deviation} and Assumption \ref{joint_elliptical} to \ref{regular_noise}, by truncation method as in \ref{truncation}, we have $\E\|P_{\hat{R}_t}-P_R\|_F^2= O(p^{-1/3}+q^{-1/3})$, which means $\tr(S)= \sum_{i=1}^{r_0} \lambda_i \rightarrow r_0$ and $\tr(N)=\sum_{j=r_0+1}^{p} \lambda_j\rightarrow 0$ as $p,q\rightarrow \infty$. Since $0\leq \lambda_i\leq 1$ for $i\in\{1,2,\dots,p\}$, we have $\lambda_{r_0}\rightarrow 1$, $\lambda_{r_0+1}\rightarrow 0$ and naturally $d_{r_0}(\E P_{\hat{R}_t})\rightarrow 1$ as $p,q\rightarrow \infty$.

\subsubsection{Matrix Concentration}
We then turn to the matrix concentration problem on the numerator of \ref{eq_num_denom}. Taking direct advantage of lemma \ref{sqrt(T)-convergence} would give $\|P_{\hat{R}}-P_R\|_F = O_p(\sqrt{\log p/T})$. This dimensional factor is inherent in existing matrix concentration results. Alternatively, by triangular inequality and taking expectation on both sides:
\begin{equation*}
	\E \|P_{\hat{R}_t}-\E P_{\hat{R}_t}\|_{op}\leq \E \|P_{\hat{R}_t}-P_R\|_{op}+\|P_R-\E P_{\hat{R}_t}\|_{op},
\end{equation*}
where $\|P_R-\E P_{\hat{R}_t}\|_{op}\leq \tr(N)=O(p^{-1/3}+q^{-1/3})$, while $\E\|P_{\hat{R}_t}-P_R\|_{op}\leq \E\|P_{\hat{R}_t}-P_R\|_{F} =O(p^{-1/4}+q^{-1/4})$ by applying truncation method as in \ref{truncation} to lemma \ref{subspace-deviation}. In the end, since $\| P_{\hat{R}_t}-\E P_{\hat{R}_t}\|_{op}\leq 1$ almost surely, by applying Jensen's inequality and scalar concentration to \ref{eq_num_denom}:
\begin{equation*}
		\|P_{\hat{R}}-P_R\|_F \leq\frac{2\sqrt{2r_0}\sum_{t=1}^T\| P_{\hat{R}_t}-\E P_{\hat{R}_t}\|_{op}}{Td_{r_0}(\E P_{\hat{R}_t})}=O_p(T^{-1/2}+p^{-1/4}+q^{-1/4}),
\end{equation*}
 which could be absorbed into the non-spherical deviation term from the previous section.

\subsection{Degenerated Case}

Then we discuss the degenerated case where $p_0>q_0=r_0$. For each data matrix $X_t=RF_tC^{\top}+E_t$, the signal part $RF_tC^{\top}$ is at most rank $r_0$. It is then natural to set $\hat{R}_t$ to be the leading $r_0$ left singular vectors of $X_t$. Then, $\text{MPCA}_{op}$ calculates the $p_0$ leading eigenvectors of average projection matrices $\sum_{t}P_{\hat{R}_t}/T$, denoted as $\hat{R}/\sqrt{p}$. As we discussed earlier, the manifold center intuition no longer holds under degeneration. Fortunately, the previous arguments on non-degenerated cases could be transferred readily to degenerated ones with only slight adjustments.

\subsubsection{Expected Projection Matrix}
By lemma \ref{invariant-subspaces}, $\E P_{\hat{R}_t}$ could still be decomposed into two separate parts:
\begin{equation}\label{eq_degen_SN}
	\E P_{\hat{R}_t} = \sum_{i=1}^{p} \lambda_i u_i u_i^{\top} = \underbrace{\sum_{i=1}^{p_0} \lambda_i u_i u_i^{\top}}_{S}+\underbrace{\sum_{j=p_0+1}^{p} \lambda_j u_j u_j^{\top}}_{N},
\end{equation}
where $u_i$, $i\in\{1,\dots,p_0\}$, generate $\span(R)$ while $u_j$, $j\in\{p_0+1,\dots,p\}$, generate $\span(R^{\perp})$. As in the non-degenerated case, $\tr(S)=\E\tr(P_{\hat{R}_t}P_R)$ and $\tr(N)=r_0-\E\tr(P_{\hat{R}_t}P_R)$. Take a further look at $\hat{R}_t$, the leading $r_0$ left singular vectors of $X_t=RF_tC^{\top}+E_t$. It is actually the perturbation of $R_t$, the $r_0$ left singular vectors of the signal part $RF_tC^{\top}$. Actually, $\span(R_t)$ is a $r_0$-dimensional random subspace of $\span(R)$, the randomness comes from signal $F_t$. Then $P_R$ could be decomposed into two parts as $P_R=P_{R_t}+P_{R_t^{\perp}}$, the second term corresponds to the $(p_0-r_0)$-dimensional subspace left. We have:

\begin{equation*}
 \tr(S)=\E \tr(P_{\hat{R}_t} P_{R_t}) + \E \tr(P_{\hat{R}_t} P_{R_t^{\perp}})\geq 	\E \tr(P_{\hat{R}_t} P_{R_t}),
\end{equation*}
where $\E \tr(P_{\hat{R}_t} P_{R_t})=r_0-\E\|P_{\hat{R}_t}-P_{R_t}\|_F^2/2$, and $\E\|P_{\hat{R}_t}-P_{R_t}\|_F^2/2$ could be viewed as subspace variance and controlled by arguments as in lemma \ref{subspace-deviation}. Since $\tr(S)\leq \tr(\E P_{\hat{R}_t})= r_0$, we have $\tr(S)= \sum_{i=1}^{p_0} \lambda_i \rightarrow r_0$ and $\tr(N)=\sum_{j=p_0+1}^{p} \lambda_j\rightarrow 0$ as $p,q\rightarrow \infty$. The problem is that now $\tr(S)\rightarrow r_0$ is distributed within $p_0$ eigenvalues, which gives chance for really small eigenvalue $\lambda_{p_0}$, while the eigengap $d_{p_0}(\E P_{\hat{R}_t})=\lambda_{p_0}-\lambda_{p_0+1}$ needs to be positive to ensure convergence. Fortunately, positive eigengap is justified by the following lemma.

\begin{lemma}[Positive Eigengap]\label{lemma_pe}
	Under Assumption \ref{joint_elliptical} to \ref{regular_noise}, there exists $c>0$ free of $p$ and $q$ such that the eigengap $d_{p_0}(\E P_{\hat{R}_t})=\lambda_{p_0}-\lambda_{p_0+1}\geq c$ as $p,q\rightarrow\infty$.
\end{lemma}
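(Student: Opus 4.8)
The plan is to reduce the eigengap bound to a lower bound on $\lambda_{p_0}(\E P_{\hat R_t})$. Since $\tr(N)=\sum_{j=p_0+1}^{p}\lambda_j\to 0$ in \eqref{eq_degen_SN} and all $\lambda_j\ge 0$, in particular $\lambda_{p_0+1}(\E P_{\hat R_t})\to 0$, so $d_{p_0}(\E P_{\hat R_t})=\lambda_{p_0}-\lambda_{p_0+1}\ge c>0$ will follow once I show $\lambda_{p_0}(\E P_{\hat R_t})$ is bounded away from zero. By Lemma \ref{invariant-subspaces}, $\span(R)$ is an invariant subspace of $\E P_{\hat R_t}$, and by the decomposition \eqref{eq_degen_SN} the eigenvalues attached to eigenvectors in $\span(R)$ are exactly $\lambda_1,\dots,\lambda_{p_0}$. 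Hence the Rayleigh quotient for the self-adjoint restriction gives
\[
\lambda_{p_0}(\E P_{\hat R_t})=\min_{v\in\span(R),\,\|v\|_2=1}\langle v,\E P_{\hat R_t}\,v\rangle=\min_{v\in\span(R),\,\|v\|_2=1}\E\|P_{\hat R_t}v\|_2^2,
\]
using $\langle v,P_{\hat R_t}v\rangle=\|P_{\hat R_t}v\|_2^2$ for the orthogonal projection $P_{\hat R_t}$. First I would replace the empirical $P_{\hat R_t}$ by the signal projection $P_{R_t}$ onto $\span(R_t)=\mathrm{col}(RF_tC^\top)$ via $\|P_{\hat R_t}v\|_2^2\ge\|P_{R_t}v\|_2^2-2\|P_{\hat R_t}-P_{R_t}\|_{op}$, where $\E\|P_{\hat R_t}-P_{R_t}\|_{op}\le\E\|P_{\hat R_t}-P_{R_t}\|_F\to 0$ by Wedin's theorem (Lemma \ref{subspace-deviation}) and the truncation argument in \eqref{truncation}. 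This reduces the task to a uniform lower bound on $\E\|P_{R_t}v\|_2^2$.

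The next step passes to a coordinate computation in $\reals^{p_0}$. Writing $v=Ru/\sqrt{p}$ with $\|u\|_2=1$, which exhausts the unit sphere of $\span(R)$ since $R^\top R/p=I_{p_0}$, and letting $G=\mathrm{col}(F_t)\subseteq\reals^{p_0}$ be the random $r_0$-dimensional column space with orthogonal projection $Q_G$, a direct check using $\|Rg\|_2^2=p\|g\|_2^2$ and orthonormality of a basis of $\span(R_t)$ shows $\|P_{R_t}(Ru/\sqrt p)\|_2^2=\|Q_Gu\|_2^2$. It therefore suffices to prove $\inf_{\|u\|_2=1}\E\|Q_Gu\|_2^2\ge c'>0$. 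Because $F_t=\zeta_t\Sigma_1^{1/2}Z_t^F\Sigma_2^{1/2}$ with $\zeta_t>0$ scalar and $\Sigma_2^{1/2}$ invertible acting on the right, the column space satisfies $G=\Sigma_1^{1/2}H$, where $H=\mathrm{col}(Z_t^F)$ is a \emph{uniformly} random $r_0$-dimensional subspace of $\reals^{p_0}$ by rotational invariance of the Gaussian matrix $Z_t^F$.

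The heart of the argument is the lower bound for the linearly distorted subspace $G=\Sigma_1^{1/2}H$. I would use the variational identity $\|Q_Gu\|_2^2=\max_{0\ne z\in G}\langle u,z\rangle^2/\|z\|_2^2$ with the explicit admissible test vector $z=\Sigma_1^{1/2}Q_H\Sigma_1^{1/2}u\in G$. Since $\langle u,z\rangle=\|Q_H\Sigma_1^{1/2}u\|_2^2$ and $\|z\|_2^2\le\lambda_1(\Sigma_1)\,\|Q_H\Sigma_1^{1/2}u\|_2^2$, this yields
\[
\|Q_Gu\|_2^2\ \ge\ \frac{\|Q_H\Sigma_1^{1/2}u\|_2^2}{\lambda_1(\Sigma_1)}.
\]
Taking expectations and using $\E Q_H=(r_0/p_0)I_{p_0}$ (rotational symmetry forces $\E Q_H=\alpha I_{p_0}$, and $\tr\E Q_H=r_0$ fixes $\alpha=r_0/p_0$) gives $\E\|Q_Gu\|_2^2\ge (r_0/p_0)\,u^\top\Sigma_1 u/\lambda_1(\Sigma_1)\ge (r_0/p_0)(c_1/C_1)=:c'$ by Assumption \ref{strong_factor}. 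Combining the three steps, $\lambda_{p_0}(\E P_{\hat R_t})\ge c'-o(1)\ge c'/2$ for $p,q$ large, and with $\lambda_{p_0+1}\to 0$ the positive eigengap follows. The main obstacle is precisely the loss of rotational symmetry induced by $\Sigma_1^{1/2}$, which prevents $\E Q_G$ from being a scalar multiple of the identity; the test-vector device above is what converts the bounded condition number of $\Sigma_1$ guaranteed by Assumption \ref{strong_factor} into the required uniform lower bound.
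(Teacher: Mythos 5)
Your proof is correct, and its outer skeleton matches the paper's: both reduce $\E P_{\hat R_t}$ to the signal projection $\E P_{R_t}$ (via Wedin's theorem, the truncation argument, Jensen, and a Weyl-type transfer), both exploit that $\mathrm{col}(Z_t^F)$ is a uniformly random $r_0$-dimensional subspace of $\reals^{p_0}$, and both note that the eigenvalues attached to $\span(R^{\perp})$ vanish so that only a lower bound on the restriction of $\E P_{R_t}$ to $\span(R)$ is needed. Where you genuinely diverge is in that core positivity step. The paper writes $Z_t^F = U_t D_t V_t^{\top}$ and uses the explicit inverse formula $P_{R_t}=(R\Sigma_1^{1/2}U_t)\left(U_t^{\top}\Sigma_1 U_t\right)^{-1}(U_t^{\top}\Sigma_1^{1/2}R^{\top})/p$, lower bounds $\langle u,\E P_{R_t}u\rangle$ by $c_1 \E\langle U_t^{\top}v,(U_t^{\top}\Sigma_1 U_t)^{-1} U_t^{\top} v\rangle$, and then argues positivity by contradiction (if the expectation were zero then $U_t^{\top}v=0$ almost surely); uniformity over $v$ and independence of $p,q$ are handled implicitly through the fixed dimension $p_0$. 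You instead avoid inverting $U_t^{\top}\Sigma_1 U_t$ altogether: the test vector $z=\Sigma_1^{1/2}Q_H\Sigma_1^{1/2}u$ in the variational characterization gives $\|Q_G u\|_2^2 \geq \|Q_H\Sigma_1^{1/2}u\|_2^2/\lambda_1(\Sigma_1)$, and the exact symmetry computation $\E Q_H=(r_0/p_0)I_{p_0}$ then produces the explicit, manifestly $u$-uniform constant $c'=(r_0/p_0)(c_1/C_1)$. This buys two things the paper's write-up leaves implicit: a quantitative constant, and uniformity over the unit sphere of $\span(R)$ without a compactness/contradiction step. (Indeed, your symmetry identity $\E\|U_t^{\top}v\|_2^2=r_0/p_0$ could be inserted into the paper's own bound to make it explicit as well.) One presentational caveat, shared with the paper: the identity $\lambda_{p_0}(\E P_{\hat R_t})=\min_{v\in\span(R),\|v\|_2=1}\langle v,\E P_{\hat R_t}v\rangle$ is only valid a posteriori, once the $\span(R)$-eigenvalues are shown to dominate the $\span(R^{\perp})$-eigenvalues; your final combination (lower bound $c'/2$ on $\span(R)$, $\tr(N)\to 0$ on $\span(R^{\perp})$) does close this loop, but it would be cleaner to state the two restricted extremal eigenvalues separately and order them at the end. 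Also note, again as in the paper, that the invariant-subspace input requires the left-spherical reduction performed earlier in the proof of Theorem \ref{consistency_MPCAop}, so the lemma should be read inside that reduced model.
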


\begin{proof}
	First we prove there exists $c>0$ such that $d_{p_0}(\E P_{R_t})>c$, where $R_t$ would be the $r_0$ leading left singular vectors of the signal part $RF_t C^{\top}=\zeta_t R\Sigma_1^{1/2} Z_t^{F}\Sigma_2^{1/2}C^{\top}$. By stochastic representation of matrix spherical distributions given in \cite{gupta2018matrix}, we could decompose $Z_t^{F}$ into three independent parts, namely $Z_t^{F} = U_t D_t V_t^{\top}$. Here $U_t$ of shape $p_0\times q_0$ and $V_t$ of shape $q_0\times q_0$ are uniformly distributed orthonormal matrices, while $D_t$ is diagonal. Clearly, $\span(R_t) = \span(R\Sigma_1^{1/2}U_t)$ and:
	\begin{equation}\label{eq_PRt}
	P_{R_t}=(R\Sigma_1^{1/2}U_t)\left(U_t^{\top}\Sigma_1 U_t\right)^{-1}	(U_t^{\top}\Sigma_1^{1/2}R^{\top})/p.
	\end{equation}
	
	Here $U_t^{\top}\Sigma_1 U_t$ is invertible since it is a $q_0\times q_0$ sub-matrix of a $p_0\times p_0$ positive definite matrix $\dot{U}_t^{\top}\Sigma_1 \dot{U}_t$, where $\dot{U}_t$ is acquired by filling $U_t$ to $p_0\times p_0$ orthonormal. In fact, the smallest eigenvalue of $\left(U_t^{\top}\Sigma_1 U_t\right)^{-1}	$ is larger than $1/C_1$ almost surely. Since $\span(R_t)$ is random subspace of $\span(R)$, the $(p_0+1)$-th eigenvalue of $\E P_{R_t}$ is clearly $0$, we only need to show that $\langle u,\E P_{R_t} u\rangle>c$ for all $u\in \span(R)$, $\|u\|_2=1$. For $u\in \span(R)$, $\|u\|_2=1$, it is not hard to verify that $R^{\top}u/\sqrt{p}$ would be a unit vector in $\mathbb{R}^{p_0}$, so that $\|\Sigma_1^{1/2}R^{\top}u/\sqrt{p}\|_2\geq \sqrt{c_1}$ under Assumption \ref{strong_factor}. For unit vector $v = (\Sigma_1^{1/2}R^{\top}u)/\|\Sigma_1^{1/2}R^{\top}u\|_2$, from \ref{eq_PRt} we have $\langle u,  \E P_{R_t} u\rangle\geq c_1 \E\langle U_t^{\top}v,(U_t^{\top}\Sigma_1 U_t)^{-1} U_t^{\top} v\rangle$ almost surely, and the right hand side is free of $p$ and $q$. Now that $\lambda_{q_0}\left((U_t^{\top}\Sigma_1 U_t)^{-1}	\right)\geq 1/C_1$ almost surely, if we assume that $\E\langle U_t^{\top}v,(U_t^{\top}\Sigma_1 U_t)^{-1} U_t^{\top} v\rangle=0$, we should have $U_t^{\top} v=0$ almost surely, which leads to contradiction.

	In the end since $\|\E P_{\hat{R}_t}-\E P_{R_t}\|_{op}\leq \E \| P_{\hat{R}_t}- P_{R_t}\|_{op}$ by Jensen's inequality, while the latter goes to 0 when $p,q\rightarrow \infty$ as we discussed earlier, apply Weyl's inequality to acquire the proof.
\end{proof}

\subsubsection{Matrix Concentration}

We then turn to the matrix concentration problem on the numerator of \ref{eq_num_denom}. Again taking direct advantage of lemma \ref{sqrt(T)-convergence} would give $\|P_{\hat{R}}-P_R\|_F = O_p(\sqrt{\log p/T})$. Fortunately, in this case of random projection matrices, we are able to shrink the dimensional factor $p$ to $r_0$ via intrinsic dimension arguments. According to matrix Bernstein inequality in Hermitian case with intrinsic dimension from \cite{tropp2015introduction}, we only need to focus on the independent centered term $\mathcal{P}_t=P_{\hat{R}_t}-\E P_{\hat{R}_t}$. First, $\|\mathcal{P}_t\|_{op}\leq 1$ almost surely. Then, consider the matrix variance $\E\mathcal{P}_t^2=\E P_{\hat{R}_t}-(\E P_{\hat{R}_t})^2$. Under left spherical $E_t$, the eigenvalues of $\E\mathcal{P}_t^2$ are precisely $\lambda_i-\lambda_i^2$, $i\in\{1,2,\dots,p_0\}$, and $\lambda_j-\lambda_j^2\rightarrow 0$ for $j\in\{p_0+1,p_0+2,\dots,p\}$, where $\lambda_i$ and $\lambda_j$ are eigenvalues of $S$ and $N$ from \ref{eq_degen_SN} respectively. Since there exists $c>0$ such that $\lambda_{p_0}\geq c$ according to lemma \ref{lemma_pe}, while $\lambda_{p_0}\leq r_0/p_0$ automatically, there exists $C>0$ free of $p$ and $q$ such that $C\leq \|\E\mathcal{P}_t^2\|_{op}\leq 1/4$. So the intrinsic dimension of $\E\mathcal{P}_t^2$, namely $\mathop{\text{\rm intdim}}(\E\mathcal{P}_t^2)=\tr(\E\mathcal{P}_t^2)/\|\E\mathcal{P}_t^2\|_{op}\lesssim r_0$ does not grow with matrix dimension, and we could replace $p$ in lemma \ref{sqrt(T)-convergence} with $r_0$ as:
\begin{equation*}
\P\left\{\left\|\sum_{t=1}^T(P_{\hat{R}_t}-\E P_{\hat{R}_t}) \right\|_{op} \geq x\right\} \lesssim r_0\cdot \exp\left\{\frac{-x^2/2}{T\|\E\mathcal{P}_t^2\|_{op}+x/3}\right\}.
\end{equation*}

 By applying dimension-free convergence, $\|P_{\hat{R}}-P_R\|_F = O_p(T^{-1/2})$. In the end, take the deviation from noise being non-spherical into account, we claim the proof.

\section{Proof of Theorem \ref{PE_MPCAF}}\label{proof_MPCAF}
We only discuss $\span(R)$ here due to symmetry. Recall that for $\text{MPCA}_{F}$, $\hat{R}^{(i)}_t$ would be the leading $r_0=p_0\wedge q_0$ eigenvalues of $X_tP_{\hat{C}^{(i-1)}} X_t^{\top}$ if $\hat{C}^{(i-1)}$ is given. Let $C^{(i-1)} = \hat{C}^{(i-1)}/\sqrt{q}$ for notational simplicity. We focus on the projected matrix model $X_tC^{(i-1)}=RF_t C^{\top}C^{(i-1)}+E_tC^{(i-1)}$, and $\hat{R}^{(i)}_t$ would exactly be the result of applying MPCA$_{op}$ to the projected data set $\{X_tC^{(i-1)}\}$. It is worth mentioning that multiplying $C^{(i-1)}$ on the right does not effect the left properties we need in section \ref{proof_MPCAop}: for instance, if $E_t$ is left spherical, then $E_tC^{(i-1)}$ is still left spherical, so the proof from section \ref{proof_MPCAop} could adjust to the projected data set readily.

The difference lies in the signal-to-noise ratio. In section \ref{proof_MPCAop}, let $\sigma_{t,r_0}$ be the $r_0$-th singular value of the signal part $S_t=RF_t C^{\top}$, then $\sigma_{t,r_0}\gtrsim (pq)^{1/2} \zeta_t \sigma_{r_0}(Z_t^{F})$ and $\|E_t\|_{op}\lesssim  \zeta_t\|Z_t^E\|_{op}$. It is foreseeable that we could increase the signal-to-noise ratio via projection by some $\hat{C}^{(i-1)}$ sufficiently close to $C$, keeping the signal size almost unchanged. It is ensured by assuming $\sigma_{q_0}(C^{\top} \hat{C}^{(i-1)})/q=c>0$.

In essence, let $\sigma^{(i-1)}_{t,r_0}$ be the $r_0$-th singular value of the projected signal part $S_tC^{(i-1)}=RF_t C^{\top}C^{(i-1)}$. If $p_0=r_0\leq q_0$, consider the matrix $S_tP_{C^{(i-1)}}S_t^{\top}=RF_tC^{\top}P_{C^{(i-1)}}CF_t^{\top}R^{\top}$, whose $r_0$-th eigenvalue is $(\sigma^{(i-1)}_{t,r_0})^2$. Let $v=Ru/\sqrt{p}$, $\|u\|_2=1$ be the unit vector of $r_0$-dimensional subspace $\span(R)$, then $\langle v,S_tP_{C^{(i-1)}}S_t^{\top} v\rangle=p\langle u, F_t C^{\top}P_{C^{(i-1)}}C F_t^{\top} u \rangle\geq p \sigma^2_{r_0}(F_t C^{\top}C^{(i-1)}) \geq pq (c_1c\zeta_t)^2 \sigma^2_{r_0}(Z_t^F)$, $\forall v\in \span(R)$, the last inequality comes from $\|\Sigma_1^{1/2}Z_t^F\Sigma_2^{1/2}C^{\top}C^{(i-1)}w\|_2\geq q^{1/2} c_1c\sigma_{r_0}(Z_t^F)$ for all $\|w\|_2=1$. In the end, by Courant–Fischer's minimax theorem, $\sigma^{(i-1)}_{t,r_0}\geq  (pq)^{1/2} c_1c \zeta_t\sigma_{r_0}(Z_t^F)$. If $p_0>q_0=r_0$, then similarly, since $\|(C^{(i-1)})^{\top}C \Sigma_2^{1/2} (Z_t^F)^{\top} \Sigma_1^{1/2} R^{\top}w\|_2\geq (pq)^{1/2} c_1c\sigma_{r_0}(Z_t^F)$ for all $w\in \span(R)$ and $\|w\|_2=1$, we still have $\sigma^{(i-1)}_{t,r_0}\geq  (pq)^{1/2} c_1c \zeta_t\sigma_{r_0}(Z_t^F)$.

As for the noise part, let $E_t = \zeta_t\Omega_1^{1/2}Z_t^E \Omega_1^{1/2}$ with $\zeta_t=r_t/\|Z_t\|_2$, we need to prove that:

\begin{equation*}
	\|E_tC^{(i-1)}\|_{op}=\sup_{\|v\|_2=1}\|E_tC^{(i-1)} v\|_2\lesssim \zeta_t p^{1/2}.
\end{equation*}

First, $u = \Omega_1^{1/2} C^{(i-1)} v$ spans a $q_0$-dimensional subspace of the $q$-dimensional space, with $c_2^{1/2}\leq\|u\|_2\leq C_2^{1/2}$ under Assumption \ref{regular_noise}. For $E_tC^{(i-1)}v = r_t\Omega_1^{1/2}Z_t^Eu/\|Z_t\|_2$, since $Z_t^E$ is rotation invariant while $\|Z_t\|_2$ remains unchanged under rotation as shown in section \ref{proof_invariant_subspace}, there is no loss of generality if we rotate $\span(u)$ to be $\span(e_1,e_2,\dots,e_{q_0})$, where $\{e_1,e_2,\dots,e_{q_0}\}$ are the first $q_0$ Euclidean basis vectors. It is equivalent to say that only the first $q_0$ elements in vector $u$ can be non-zero. That is to say, we should only take the first $q_0$ columns of $Z_t^E$ into account when maximizing $\zeta_t\Omega_1^{1/2}Z_t^Eu$, which is a $p\times q_0$ random matrix with i.i.d. standard Gaussian elements, and directly $\|E_t C^{(i-1)}\|_{op}\lesssim \zeta_t p^{1/2}$.

The proof is then identical to section $\ref{proof_MPCAop}$, since projection effects the signal-to-noise ratio, the convergence rate for $\hat{R}^{(i)}$ given $\hat{C}^{(i-1)}$ would be $O_p(T^{-1}+q^{-1/2})$.

\section{Proof of Corollary \ref{consistency_factor_cc}}
For $\hat{F}_t=\hat{R}^{\top} X_t \hat{C}/(pq)$, plug in $X_t = RF_tC^{\top}+E_t$ and get:
\begin{equation}\label{eq_Ft}
	\hat{F}_t = \hat{R}^{\top} RF_tC^{\top} \hat{C}/(pq)+\hat{R}^{\top} E_t \hat{C}/(pq).
\end{equation}

	Recall that $\zeta_t=r_t/\|Z_t\|_2=O_p(1)$ under Assumption \ref{strong_factor}, let $\varepsilon_R = \hat{R}-RH_R$ and $\varepsilon_C = \hat{C}-CH_C$, for the latter term we have:
	\begin{equation}\label{eq_prj_err}
	\begin{aligned}
		\hat{R}^{\top} E_t \hat{C} &= (RH_R+\varepsilon_R)^{\top} E_t (CH_C+\varepsilon_C)\\
		& = H_R^{\top} R^{\top} E_tCH_C+H_R^{\top} R^{\top} E_t\varepsilon_C +\varepsilon_R^{\top} E_tCH_C +  \varepsilon_R^{\top} E_t\varepsilon_C,
	\end{aligned}
	\end{equation}
	where $\|H_R^{\top} R^{\top} E_tCH_C\|_{op}=O_p((pq)^{1/2})$ by similar projection arguments as in section \ref{proof_MPCAF}, it is consistent with results in \cite{Yu2021Projected}. Take $\|.\|_{op}$ on both sides of \ref{eq_prj_err}, since $E_t=O_p((pq)^{1/2})$, $\|\varepsilon_R/\sqrt{p}\|_{op} = o_p(1)$, $\|\varepsilon_C/\sqrt{q}\|_{op}=o_p(1)$, by sub-multiplicativity of operator norm and triangular inequality we have:
	\begin{equation}\label{eq_Ft3}
		\|\hat{R}^{\top} E_t \hat{C}\|_{op}/(pq)= O_p(\|\frac{\varepsilon_R}{\sqrt{p}}\|_{op}+\|\frac{\varepsilon_C}{\sqrt{q}}\|_{op}+(pq)^{-1/2}).
	\end{equation}
	
	As for the former term on the right hand side of \ref{eq_Ft}, we have:
	\begin{equation}\label{eq_Ft2}
		\begin{aligned}
			\hat{R}^{\top} RF_tC^{\top} \hat{C} &=(RH_R+\varepsilon_R)^{\top} RF_tC^{\top} (CH_C+\varepsilon_C)\\
			&= pq H_R^{\top} F_t H_C +pH_R^{\top} F_t C^{\top}\varepsilon_C + q \varepsilon_R^{\top} R F_t H_C + \varepsilon_R^{\top} R F_t C^{\top} \varepsilon_C.
		\end{aligned}
	\end{equation}
	
	By arranging \ref{eq_Ft}, \ref{eq_Ft3} ,\ref{eq_Ft2} and taking $\|.\|_{op}$ on both sides, due to sub-multiplicativity of operator norm and triangular inequality, we have:
	
	\begin{equation*}
		\|\hat{F}_t-H_R^{\top} F_t H_C\|_{op} = O_p\left( \|\frac{\varepsilon_R}{\sqrt{p}}\|_{op}+\|\frac{\varepsilon_C}{\sqrt{q}}\|_{op}+(pq)^{-1/2}\right).
	\end{equation*}
	
	Similarly, let $\varepsilon_F = \hat{F}_t-H_R^{\top} F_t H_C$, we have:
	\begin{equation*}
		\begin{aligned}
			\hat{S}_t -S_t &= \hat{R} \hat{F}_t \hat{C}^{\top} -RF_tC^{\top}\\
			&=(RH_R+\varepsilon_R)(H_R^{\top} F_t H_C+\varepsilon_F )(CH_C+\varepsilon_C)^{\top}-RF_tC^{\top}\\
			& = RF_t H_C \varepsilon_C^{\top} + RH_R\varepsilon_F H_C^{\top}C^{\top} + RH_R \varepsilon_F \varepsilon_C^{\top}+\varepsilon_R H_R^{\top} F_t C^{\top}\\
			&\quad +\varepsilon_R H_R^{\top} F_t H_C \varepsilon_C^{\top} + \varepsilon_R\varepsilon_F H_C^{\top}C^{\top} +\varepsilon_R\varepsilon_F \varepsilon_C^{\top}.
		\end{aligned}
	\end{equation*}
	
	Again, take $\|.\|_{op}$ on both sides, by sub-multiplicativity of operator norm and triangular inequality:
	\begin{equation*}
		\|\hat{S}_t -S_t\|_{op}/\sqrt{pq} = O_p\left( \|\frac{\varepsilon_R}{\sqrt{p}}\|_{op}+\|\frac{\varepsilon_C}{\sqrt{q}}\|_{op}+(pq)^{-1/2}\right).
	\end{equation*}

\section{Proof of Theorem \ref{consistency_MERop}}
Here we only prove $\hat{p}_0$ due to symmetry. Now that:
\begin{equation*}
	\P(\hat{p}_0=p_0)\geq \P\left(\hat{p}_0=p_0,\hat{r}_0=r_0\right)=\P\left(\hat{r}_0=r_0\right)\P\left(\hat{p}_0=p_0\mid\hat{r}_0=r_0\right),
\end{equation*}
it suffices to prove that $\P\left(\hat{r}_0=r_0\right)\rightarrow 1$ and $\P\left(\hat{p}_0=p_0\mid\hat{r}_0=r_0\right) \rightarrow 1$. As for the first part, $\P(\hat{r}_{0,t}=r_0)\rightarrow 1$ under Assumption \ref{joint_elliptical} to \ref{regular_noise}, since $\hat{r}_{0,t}$ is determined by each data matrix $X_t=RF_tC^{\top}+E_t$ and the signal part goes to infinity faster than noise. So directly $\P\left(\hat{r}_0=r_0\right)\rightarrow 1$.

As for the second part, under condition $\hat{r}_0=r_0$, meaning that true compression rank $r_0$ is acquired, $\tilde{R}_t$ would be exactly $\hat{R}_t$ in MPCA$_{op}$. Under spherical neighbour arguments in section \ref{proof_MPCAop}, apply Jensen's inequality to \ref{eq_sn_concen} to get $\|\sum_{t=1}^{T} (P_{\hat{R}_t}- \dot{P}_{\hat{R}_t})/T\|_{op}=o_p(1)$, to \ref{eq_Edsn} to get $\|\E P_{\hat{R}_t}- \E\dot{P}_{\hat{R}_t}\|_{op}=o(1)$. Then, take $\|\sum_{t=1}^{T} \dot{P}_{\hat{R}_t}/T-\E\dot{P}_{\hat{R}_t}\|_{op} = o_p(1)$ from section \ref{proof_MPCAop}, by triangular inequality we have $\|\sum_{t=1}^{T} P_{\hat{R}_t}/T-\E P_{\hat{R}_t}\|_{op} = o_p(1)$ as $T,p,q\rightarrow \infty$. In addition, now that $\lambda_{p_0}(\E \dot{P}_{\hat{R}_t})\geq c>0$ while $\lambda_{p_0+1}(\E \dot{P}_{\hat{R}_t})\rightarrow 0$ as $p$, $q\rightarrow \infty$, by Weyl's inequality, $\|\E P_{\hat{R}_t}- \E\dot{P}_{\hat{R}_t}\|_{op}=o(1)$ and $\|\sum_{t=1}^{T} P_{\hat{R}_t}/T-\E P_{\hat{R}_t}\|_{op} = o_p(1)$ we have $\P\left(\hat{p}_0=p_0\mid\hat{r}_0=r_0\right) \rightarrow 1$.

\newpage
\section{Additional Simulation Results}

\begin{table}[htb]
\begin{center}
\caption{Means and standard deviations (in parentheses) of $\mathcal{D}(\hat{R},R)$ and $\mathcal{D}(\hat{C},C)$ over 100 replications with $s_E=1.5$ and $T = 3(pq)^{1/2}$. Here MPCA$_{op}$ and MPCA$_{F}$ stands for Manifold PCA methods; $(2D)^2$-PCA is from \cite{zhang20052d}, it is equivalent to $\alpha$-PCA by \cite{chen2021statistical} with $\alpha=-1$; PE stands for the projected estimation by \cite{Yu2021Projected}.}
\label{simulation_loading_se1.5_T3}
\scalebox{0.90}{\begin{tabular}{cccccccc}
\bottomrule
Distribution& Evaluation & $p$ & $q$ & MPCA$_{op}$ & MPCA$_{F}$ & $(2D)^2$-PCA & PE\\
\hline
\multirow{6}{*}{Gaussian}&\multirow{3}{*}{$\mathcal{D}(\hat{R},R)$}  & 20 & 20 & (0.4859,0.0885)&(0.1702,0.0412)&(0.4314,0.1167)&\bf{(0.1372,0.0377)}\\
 & & 20 & 100 &(0.5261,0.0579)&(0.0535,0.0082)&(0.4293,0.1207)&\bf{(0.0361,0.0064)}\\
 & & 100 & 100 &(0.2646,0.0994)&(0.0591,0.0040)&(0.2982,0.1353)&\bf{(0.0514,0.0040)}\\
& \multirow{3}{*}{$\mathcal{D}(\hat{C},C)$} & 20 & 20 &(0.4898,0.0952)&(0.1756,0.0468)&(0.4410,0.1271)&\bf{(0.1419,0.0521)}\\
 & & 20 & 100 &(0.1026,0.0115)&(0.0910,0.0082)&(0.0863,0.0113)&\bf{(0.0784,0.0089)}\\
 & & 100 & 100 &(0.2625,0.0886)&(0.0586,0.0039)&(0.2932,0.1267)&\bf{(0.0509,0.0039)}\\
 \hline
\multirow{6}{*}{$t_3$}&\multirow{3}{*}{$\mathcal{D}(\hat{R},R)$}  & 20 & 20 &(0.6169,0.0693)&\bf{(0.3567,0.1261)}&(0.6974,0.0892)&(0.5756,0.1938)\\
 & & 20 & 100 &(0.5560,0.0391)&\bf{(0.0829,0.0135)}&(0.6107,0.0748)&(0.2065,0.2175)\\
 & & 100 & 100 &(0.6091,0.0222)&\bf{(0.0968,0.0077)}&(0.7433,0.1208)&(0.4291,0.3020)\\
& \multirow{3}{*}{$\mathcal{D}(\hat{C},C)$} & 20 & 20 &(0.6232,0.0649)&\bf{(0.3723,0.1308)}&(0.6880,0.0866)&(0.5782,0.2010)\\
 & & 20 & 100 &(0.2285,0.0479)&\bf{(0.1454,0.0165)}&(0.3655,0.1833)&(0.2800,0.2107)\\
 & & 100 & 100 &(0.6092,0.0277)&\bf{(0.0977,0.0072)}&(0.7398,0.1204)&(0.4247,0.3019)\\
 \hline
 \multirow{6}{*}{$t_1$}&\multirow{3}{*}{$\mathcal{D}(\hat{R},R)$}  & 20 & 20 &(0.0700,0.0177)&\bf{(0.0522,0.0076)}&(0.8259,0.1215)&(0.8402,0.1488)\\
 & & 20 & 100 &(0.0415,0.0096)&\bf{(0.0231,0.0030)}&(0.8677,0.0953)&(0.8793,0.1189)\\
 & & 100 & 100 & \bf{(0.0147,0.0010)}&(0.0149,0.0009)	&(0.9769,0.0240)&(0.9807,0.0253) \\
& \multirow{3}{*}{$\mathcal{D}(\hat{C},C)$} & 20 & 20 &(0.0689,0.0145)&\bf{(0.0507,0.0075)}& (0.8304,0.1164)&(0.8432,0.1339)\\
 & & 20 & 100 &\bf{(0.0296,0.0027)}&(0.0316,0.0025)&(0.9519,0.0850)&(0.9578,0.1005)\\
 & & 100 & 100 & \bf{(0.0147,0.0011)}&(0.0151,0.0009)&(0.9762,0.0292)	&(0.9831,0.0198) \\
 \hline
 \multirow{6}{*}{$\alpha$-stable}&\multirow{3}{*}{$\mathcal{D}(\hat{R},R)$}  & 20 & 20 & (0.6460,0.0612)&\bf{(0.4506,0.1279)}&(0.8692,0.0598)&(0.8834,0.0781)\\
 & & 20 & 100 &(0.5714,0.0449)&\bf{(0.0988,0.0171)}&(0.8873,0.0437)&(0.9080,0.0454)\\
 & & 100 & 100 &(0.9013,0.0449)&\bf{(0.1188,0.0091)}&(0.9846,0.0071)&(0.9854,0.0070) \\
& \multirow{3}{*}{$\mathcal{D}(\hat{C},C)$} & 20 & 20 &(0.6502,0.0592)&\bf{(0.4507,0.1374)}&(0.8713,0.0613) &(0.8887,0.0740) \\
 & & 20 & 100 &(0.3284,0.0861)&\bf{(0.1654,0.0174)}&(0.9606,0.0400)&(0.9787,0.0287)\\
 & & 100 & 100 & (0.9022,0.0543)&\bf{(0.1188,0.0087)}&(0.9834,0.0074)	&(0.9843,0.0069) \\
 \hline

 \multirow{6}{*}{skewed-$t_3$}&\multirow{3}{*}{$\mathcal{D}(\hat{R},R)$}  & 20 & 20 &(0.6189,0.0643)&\bf{(0.3869,0.1410)}&(0.7403,0.0909)&(0.6967,0.1768) \\
 & & 20 & 100 & (0.5480,0.0465)&\bf{(0.0813,0.0150)}&(0.6245,0.0882)	&(0.2654,0.2657)\\
 & & 100 & 100 &(0.6299,0.0296)&\bf{(0.0951,0.0066)}	&(0.8033,0.1176)&	(0.5844,0.3226)\\
& \multirow{3}{*}{$\mathcal{D}(\hat{C},C)$} & 20 & 20 & (0.6070,0.0768)&\bf{(0.3616,0.1408)}&(0.7421,0.0955)&(0.6961,0.1930) \\
 & & 20 & 100 & (0.2161,0.0469)&\bf{(0.1375,0.0150)}&(0.4176,0.2039)	&(0.3341,0.2542)\\
 & & 100 & 100 & (0.6265,0.0239)&\bf	{(0.0945,0.0068)}&(0.8072,0.1158)&(0.5877,0.3136)\\
\bottomrule
\end{tabular}}
\end{center}
\end{table}

\begin{table}[htb]
\begin{center}
\caption{Means and standard deviations (in parentheses) of $\mathcal{D}(\hat{R},R)$ and $\mathcal{D}(\hat{C},C)$ over 100 replications with $s_E=2$ and $T = 3(pq)^{1/2}$. Here MPCA$_{op}$ and MPCA$_{F}$ stands for Manifold PCA methods; $(2D)^2$-PCA is from \cite{zhang20052d}, it is equivalent to $\alpha$-PCA by \cite{chen2021statistical} with $\alpha=-1$; PE stands for the projected estimation by \cite{Yu2021Projected}.}
\label{simulation_loading_se2_T3}
\scalebox{0.90}{\begin{tabular}{cccccccc}
\bottomrule
Distribution& Evaluation & $p$ & $q$ & MPCA$_{op}$ & MPCA$_{F}$ & $(2D)^2$-PCA & PE\\
\hline
\multirow{6}{*}{Gaussian}&\multirow{3}{*}{$\mathcal{D}(\hat{R},R)$}  & 20 & 20 &(0.5463,0.0752)&(0.2518,0.0795)&(0.5284,0.0818)&\bf{(0.2209,0.0882)}\\
 & & 20 & 100 &(0.5409,0.0430)&(0.0660,0.0110)&(0.5195,0.0635)&\bf{(0.0484,0.0096)}\\
 & & 100 & 100 &(0.5380,0.0466)&(0.0755,0.0052)&(0.5580,0.0304)&\bf{(0.0687,0.0049)}\\
& \multirow{3}{*}{$\mathcal{D}(\hat{C},C)$} & 20 & 20 &(0.5648,0.0592)&(0.2668,0.0984)&(0.5447,0.0670)&\bf{(0.2315,0.0953)}\\
 & & 20 & 100 &(0.1422,0.0212)&(0.1163,0.0125)&(0.1261,0.0215)&\bf{(0.1048,0.0129)}\\
 & & 100 & 100 &(0.5393,0.0460)&(0.0762,0.0052)&(0.5585,0.0273)& \bf{(0.0692,0.0051)}\\
 \hline
\multirow{6}{*}{$t_3$}&\multirow{3}{*}{$\mathcal{D}(\hat{R},R)$}  & 20 & 20 &(0.7081,0.0763)&\bf{(0.5596,0.1189)}&(0.7817,0.0755)&(0.7694,0.1298)\\
 & & 20 & 100 &(0.5696,0.0457)&\bf{(0.1143,0.0248)}&(0.6824,0.0920)&(0.3751,0.2600)\\
 & & 100 & 100 &(0.7410,0.0510)&\bf{(0.1294,0.0104)}&(0.8591,0.1006)&(0.7143,0.2917)\\
& \multirow{3}{*}{$\mathcal{D}(\hat{C},C)$} & 20 & 20 &(0.7051,0.0693)&\bf{(0.5471,0.1226)}&(0.7738,0.0829) &(0.7675,0.1385)\\
 & & 20 & 100 &(0.4192,0.1217)&\bf{(0.1890,0.0203)}&(0.6013,0.1564)&(0.4481,0.2478)\\
 & & 100 & 100 &(0.7408,0.0522)&\bf{(0.1296,0.0086)}&(0.8587,0.0990)&(0.7108,0.2969)\\
 \hline
 \multirow{6}{*}{$t_1$}&\multirow{3}{*}{$\mathcal{D}(\hat{R},R)$}  & 20 & 20 &(0.0821,0.0237)&\bf{(0.0594,0.0095)}&(0.8699,0.0809)&(0.8957,0.0811)\\
 & & 20 & 100 &(0.0569,0.0135)&\bf{(0.0267,0.0033)}&(0.9039,0.0399)&(0.9091,0.0428)\\
 & & 100 & 100 & (0.0179,0.0014)	&\bf{(0.0170,0.0009)}&(0.9842,0.0067)	&(0.9855,0.0062)\\
& \multirow{3}{*}{$\mathcal{D}(\hat{C},C)$} & 20 & 20 &(0.0849,0.0193)&\bf{(0.0592,0.0090)}& (0.8706,0.0672)& (0.8975,0.0703)\\
 & & 20 & 100 &\bf{(0.0361,0.0037)}&(0.0364,0.0025)&(0.9798,0.0218)&(0.9834,0.0177)\\
 & & 100 & 100 & (0.0176,0.0013)&\bf{(0.0169,0.0010)	}&(0.9840,0.0068)	&(0.9851,0.0064) \\
 \hline
 \multirow{6}{*}{$\alpha$-stable}&\multirow{3}{*}{$\mathcal{D}(\hat{R},R)$}  & 20 & 20 & (0.7404,0.0672)&\bf{(0.6178,0.1109)}&(0.8953,0.0390)&(0.9104,0.0408)\\
 & & 20 & 100 &(0.6226,0.0442)&\bf{(0.1385,0.0335)}&(0.9046,0.0370)&(0.9174,0.0360)\\
 & & 100 & 100 & (0.9700,0.0157)&\bf{(0.2533,0.2459)	}&(0.9836,0.0081)&(0.9843,0.0076) \\
& \multirow{3}{*}{$\mathcal{D}(\hat{C},C)$} & 20 & 20 &(0.7507,0.0702)&\bf{(0.6170,0.1215)}&(0.8959,0.0456) &(0.9115,0.0376)\\
 & & 20 & 100 &(0.5948,0.1134)&\bf{(0.2171,0.0262)}&(0.9774,0.0156)&(0.9842,0.0063)\\
 & & 100 & 100 & (0.9681,0.0177)	&\bf{(0.2295,0.2022)	}&(0.9839,0.0075)&(0.9846,0.0073) \\
 \hline
 \multirow{6}{*}{skewed-$t_3$}&\multirow{3}{*}{$\mathcal{D}(\hat{R},R)$}  & 20 & 20 & (0.6981,0.0697)&\bf{(0.5468,0.1087)}&(0.8213,0.0619)	&(0.8328,0.1005)\\
 & & 20 & 100 & (0.5842,0.0532)&\bf{(0.1137,0.0260)}&(0.7048,0.0897)	&(0.4515,0.2845) \\
 & & 100 & 100 &(0.8091,0.0651)&\bf{(0.1281,0.0103)}&(0.9191,0.0655)	&(0.8814,0.1871) \\
& \multirow{3}{*}{$\mathcal{D}(\hat{C},C)$} & 20 & 20 & (0.7142,0.0755)&\bf{(0.5531,0.1118)}&(0.8218,0.0735)&(0.8444,0.0990)\\
 & & 20 & 100 &(0.4068,0.1141)&\bf{(0.1905,0.0268)}&(0.6639,0.1673)&(0.5606,0.2803) \\
 & & 100 & 100 & (0.8226,0.0577)&\bf{(0.1269,0.0090)	}&(0.9247,0.0666)&(0.9092,0.1498)\\
\bottomrule
\end{tabular}}
\end{center}
\end{table}

\begin{table}[htb]
\begin{center}
\caption{Means and standard deviations (in parentheses) of MSE and opMax over 100 replications with $s_E=1.5$ and $T = 3(pq)^{1/2}$. Here MPCA$_{op}$ and MPCA$_{F}$ stands for Manifold PCA methods; $(2D)^2$-PCA is from \cite{zhang20052d}, it is equivalent to $\alpha$-PCA by \cite{chen2021statistical} with $\alpha=-1$; PE stands for the projected estimation by \cite{Yu2021Projected}.}
\label{simulation_cc_se1.5_T3}
\begin{tabular}{ccccccc}
\bottomrule
\bf{MSE}\\

Distribution & $p$ & $q$ & MPCA$_{op}$ & MPCA$_{F}$ & $(2D)^2$-PCA & PE\\
\hline
\multirow{3}{*}{Gauss}& 20 & 20 & (0.1961,0.0260)&(0.0758,0.0081)&(0.1670,0.0296)&\bf{(0.0683,0.0083)}\\
  				& 20 & 100 & (0.0387,0.0108)	&(0.0066,0.0003)	&(0.0266,0.0075)&\bf{(0.0059,0.0003)}\\
				& 100 & 100 & (0.0134,0.0047)&(0.0027,0.0001)&(0.0161,0.0078)	&\bf{(0.0025,0.0001)}\\
\hline
\multirow{3}{*}{$t_3$}& 20 & 20 & (0.4953,0.0590)&\bf{(0.2977,0.0688)}&(0.7513,0.3589)	&(0.6813,0.4534)\\
  				& 20 & 100 & (0.0628,0.0082)&\bf{(0.0193,0.0013)}	&(0.1273,0.1098)&	(0.0806,0.1168)\\
				& 100 & 100 & (0.0575,0.0036)&\bf{(0.0079,0.0003)}&(0.1415,0.1468)&(0.1068,0.1579)\\
\hline
\multirow{3}{*}{$t_1$}& 20 & 20 &(6.3133,36.191)&\bf{(4.5986,23.124)}&(2492.6,23518)&(2492.7,23518)\\
  				& 20 & 100 & (19.744,156.67)&\bf{(16.525,126.47)}&	(5107.7,43364)	&(5107.7,43364)\\
				& 100 & 100 & (0.4771,2.9825)&\bf{(0.4477,2.7605)}&	(1490.0,7599.4)&(1490.0,7599.4)\\
\hline
\multirow{3}{*}{$\alpha$-stable}& 20 & 20 & (3.4060,16.984)&\bf{(3.1931,18.262)}&	(35.941,250.45)&(36.095,250.43)\\
  				& 20 & 100 & (0.1256,0.0996)&\bf{(0.0515,0.0413)}&(5.3870,18.945)	&(5.4352,18.946)\\
				& 100 & 100 &(0.2104,0.2435)&\bf{(0.0476,0.1373)}&(18.161,60.959)&(18.173,60.958)\\
\hline
\multirow{3}{*}{skewed-$t_3$}& 20 & 20 & (0.4887,0.0708)&\bf{(0.3088,0.0789)}&(1.0816,0.8631)&(1.0927,0.9170)\\
  				& 20 & 100 & (0.0622,0.0087)	&\bf{(0.0187,0.0013)}&(0.1576,0.1347)	&(0.1165,0.1595)\\
				& 100 & 100 & (0.0609,0.0043)&\bf{(0.0078,0.0003)}&(0.2176,0.5886)&	(0.1974,0.5937)\\
\bottomrule
\bf{opMax}\\
Distribution & $p$ & $q$ & MPCA$_{op}$ & MPCA$_{F}$ & $(2D)^2$-PCA & PE\\
\hline
\multirow{3}{*}{Gauss}& 20 & 20 & (0.1264,0.0170)&(0.0758,0.0065)&(0.1133,0.0159)&\bf{	(0.0757,0.0073)}\\
  				& 20 & 100 & (0.0545,0.0110)	&(0.0155,0.0012)&(0.0433,0.0091)	&\bf{(0.0153,0.0012)}\\
				& 100 & 100 &(0.0187,0.0047)	&\bf{(0.0071,0.0004)}	&(0.0207,0.0066)&	(0.0071,0.0005)\\
\hline
\multirow{3}{*}{$t_3$}& 20 & 20 & (0.2551,0.1117)&\bf{(0.2119,0.0883)}&(0.6100,0.4355)&	(0.6526,0.4516)\\
  				& 20 & 100 &(0.0598,0.0099)&\bf{(0.0370,0.0136)}&(0.2376,0.2590)&	(0.1992,0.2808)\\
				& 100 & 100 &(0.0385,0.0070)&\bf{(0.0160,0.0057)}	&(0.3002,0.2869)&	(0.2864,0.3061)\\
\hline
\multirow{3}{*}{$t_1$}& 20 & 20 & (1.7080,3.8917)&\bf{(1.5531,3.2567)}&(17.793,84.060)	&(17.794,84.060)\\
  				& 20 & 100 &(1.6269,7.4722)&\bf{(1.5267,6.8267)}&(25.729,120.29)	&(25.729,120.29)\\
				& 100 & 100 &(0.3822,1.1203)&\bf{(0.3740,1.0838)}&(19.831,63.344)&(19.832,63.344)\\
\hline
\multirow{3}{*}{$\alpha$-stable}& 20 & 20 &(1.1161,2.5846)&\bf{(0.9556,2.6376)}&(3.9438,9.1424)&(3.9717,9.1340)\\
  				& 20 & 100 & (0.2179,0.2049)&\bf{(0.1697,0.1533)}&(2.3220,2.6888)	&(2.3308,2.6843)\\
				& 100 & 100 & (0.3336,0.3474)&\bf{(0.1577,0.2429)}&(4.4859,5.0260)&(4.4867,5.0255)\\
\hline
\multirow{3}{*}{skewed-$t_3$}& 20 & 20 & (0.2826,0.1318)&\bf{(0.2507,0.1286)}&(0.9194,0.6804)&(0.9961,0.6567)\\
  				& 20 & 100 & (0.0649,0.0163)&\bf{(0.0399,0.0143)}&(0.2827,0.2790)	&(0.2557,0.3077)\\
				& 100 & 100 & (0.0401,0.0150)&\bf{(0.0179,0.0066)}&(0.3965,0.4447)&(0.3925,0.4546)\\
\bottomrule
\end{tabular}
\end{center}
\end{table}

\begin{table}[htb]
\begin{center}
\caption{Means and standard deviations (in parentheses) of MSE and opMax over 100 replications with $s_E=2$ and $T = 3(pq)^{1/2}$. Here MPCA$_{op}$ and MPCA$_{F}$ stands for Manifold PCA methods; $(2D)^2$-PCA is from \cite{zhang20052d}, it is equivalent to $\alpha$-PCA by \cite{chen2021statistical} with $\alpha=-1$; PE stands for the projected estimation by \cite{Yu2021Projected}.}
\label{simulation_cc_se2_T3}
\begin{tabular}{ccccccc}
\bottomrule
\bf{MSE}\\

Distribution & $p$ & $q$ & MPCA$_{op}$ & MPCA$_{F}$ & $(2D)^2$-PCA & PE\\
\hline
\multirow{3}{*}{Gauss}& 20 & 20 & (0.3133,0.0297)&(0.1549,0.0276)&(0.2971,0.0298)&\bf{(0.1432,0.0303)}\\
  				& 20 & 100 &(0.0485,0.0106)&(0.0116,0.0006)&(0.0443,0.0090)&\bf{(0.0107,0.0006)}\\
				& 100 & 100 & (0.0439,0.0044)&(0.0047,0.0001)&(0.0460,0.0047)&\bf{(0.0045,0.0001)}\\
\hline
\multirow{3}{*}{$t_3$}& 20 & 20 & (0.8438,0.1168)&\bf{(0.6879,0.1074)}&(1.4007,0.7806)&(1.4703,0.8140)\\
  				& 20 & 100 &(0.1073,0.0162)&\bf{(0.0360,0.0036)}	&(0.3363,0.4292)&(0.2782,0.4445)\\
				& 100 & 100 & (0.0882,0.0048)&\bf{(0.0144,0.0005)}&(0.2006,0.1075)&(0.1919,0.1296)\\
\hline
\multirow{3}{*}{$t_1$}& 20 & 20 & (12.0691,70.742)	&\bf{(10.7265,63.910)}&	(651.55,4839.8)&(651.66,4839.9)\\
  				& 20 & 100 & (184.19,1275.9)&\bf{(155.09,1063.5)}	&(80440,663049)&(80440,663049)\\
				& 100 & 100 &(0.6902,2.9519)	&\bf{(0.6519,2.7665)	}&(1434.6,9685.8)&	(1434.6,9685.8)\\
\hline
\multirow{3}{*}{$\alpha$-stable}& 20 & 20 &(4.1815,15.561)&\bf{(3.1335,12.560)	}&(32.625,162.79)&(32.842,162.80)\\
  				& 20 & 100 & (0.4038,0.8881)	&\bf{(0.1229,0.2451)}&(13.323,69.263)&(13.383,69.264)\\
				& 100 & 100 & (1.6554,8.3149)&\bf{(0.0981,0.1326)}&(37.704,149.03)	&(37.749,149.03)\\
\hline
\multirow{3}{*}{skewed-$t_3$}& 20 & 20 &(0.8543,0.2492)	&\bf{(0.6999,0.2006)}&	(1.8161,1.3214)&(1.9703,1.3354)\\
  				& 20 & 100 &(0.1038,0.0125)&\bf{(0.0353,0.0029)}&(0.3024,0.2596)&(0.2616,0.2885)\\
				& 100 & 100 & (0.0986,0.0071)&\bf{(0.0143,0.0006)}&(0.2907,0.2318)&(0.3079,0.2351)\\
\bottomrule
\bf{opMax}\\

Distribution & $p$ & $q$ & MPCA$_{op}$ & MPCA$_{F}$ & $(2D)^2$-PCA & PE\\
\hline
\multirow{3}{*}{Gauss}& 20 & 20 & (0.1536,0.0146)&\bf{(0.1104,0.0130)}&(0.1506,0.0142)&(0.1110,0.0148)\\
  				& 20 & 100 & (0.0594,0.0109)	&(0.0207,0.0015)&(0.0566,0.0098)&	\bf{(0.0205,0.0015)}\\
				& 100 & 100 & (0.0348,0.0039)&\bf{(0.0095,0.0006)}&(0.0359,0.0040)&\bf{(0.0095,0.0006)}\\
\hline
\multirow{3}{*}{$t_3$}& 20 & 20 &(0.3410,0.1820)&\bf{(0.3184,0.1199)}&(0.9777,0.6471)&(1.0549,0.6193)\\
  				& 20 & 100 & (0.0804,0.0445)&\bf{(0.0542,0.0298)}	&(0.4899,0.5037)&	(0.4745,0.5275)\\
				& 100 & 100 & (0.0442,0.0147)&\bf{(0.0217,0.0060)}&(0.4244,0.2420)&(0.4308,0.2528)\\
\hline
\multirow{3}{*}{$t_1$}& 20 & 20 &(2.2703,5.4880)&\bf{(2.1404,5.1760)}	&(15.821,40.776)&(15.824,40.775)\\
  				& 20 & 100 & (4.6084,22.901)&\bf{(4.2820,21.001)}&(85.414,480.86)	&(85.415,480.86)\\
				& 100 & 100 &(0.5571,1.3087)	&\bf{(0.5456,1.2697)	}&(20.645,61.759)&(20.645,61.759)\\
\hline
\multirow{3}{*}{$\alpha$-stable}& 20 & 20 & (1.5571,2.5314)&\bf{(1.2625,2.1515)}&(4.8993,7.9365)&(4.9220,7.9273)\\
  				& 20 & 100 & (0.4417,0.6399)	&\bf{(0.2548,0.3278)}&(3.2412,4.8479)	&(3.2514,4.8433)\\
				& 100 & 100 & (0.7617,1.9625)&\bf{(0.2440,0.2686)}&(6.2133,7.8045)&(6.2145,7.8038)\\
\hline
\multirow{3}{*}{skewed-$t_3$}& 20 & 20 & (0.3958,0.2649)&\bf{(0.3815,0.2394)}	&(1.2478,0.7846)&(1.3338,0.7403)\\
  				& 20 & 100 &(0.0781,0.0280)&\bf{(0.0557,0.0208)}&(0.4686,0.3850)&(0.4641,0.4085)\\
				& 100 & 100 & (0.0536,0.0238)&\bf{(0.0262,0.0120)}&(0.5649,0.3378)&(0.5866,0.3244)\\
\bottomrule
\end{tabular}
\end{center}
\end{table}

\begin{table}[htb]
\begin{center}
\caption{Frequencies of exact estimation and underestimation (in parentheses) of factor numbers over 100 replications with $s_E=1.5$ and $T = 3(pq)^{1/2}$. Here MER$_{op}$ and MER$_{F}$ stands for Manifold eigenvalue-ratio methods; $(2D)^2$-ER is equivalent to the ER method in \cite{chen2021statistical} with $\alpha=-1$; IterER is from \cite{Yu2021Projected}.}
\label{simulation_fn_se1.5_T3}
\begin{tabular}{ccccccc}
\bottomrule
Distribution & $p$ & $q$ & MER$_{op}$ & MER$_{F}$ & $(2D)^2$-ER & IterER\\
\hline
\multirow{3}{*}{Gaussian}& 20 & 20 & (0.09,0.39)	&\bf{(0.77,0.22)}&(0.05,0.76)	&(0.68,0.32)\\
  				& 20 & 100 & (0.29,0.06)&\bf{(1.00,0.00)	}&(0.07,0.28)&\bf{(1.00,0.00)}\\
				& 100 & 100 & (0.70,0.00)&\bf{(1.00,0.00)}&(0.00,0.00)&\bf{(1.00,0.00)}\\
\hline
\multirow{3}{*}{$t_3$}& 20 & 20 & (0.00,0.97)&\bf{(0.12,0.87)}&(0.00,0.99)&(0.05,0.94)\\
  				& 20 & 100 & (0.15,0.75)	&\bf{(1.00,0.00)	}&(0.03,0.95)&(0.70,0.11)\\
				& 100 & 100 & (0.00,0.98)&\bf{(0.87,0.13)}&(0.02,0.98)&(0.66,0.33)\\
\hline
\multirow{3}{*}{$t_1$}& 20 & 20 & (0.95,0.05)&\bf{(1.00,0.00)}&(0.09,0.79)&(0.04,0.75)\\
  				& 20 & 100 &(0.98,0.02)&\bf{(1.00,0.00)}&(0.08,0.82)&(0.07,0.80)\\
				& 100 & 100 & (1.00,0.00)&\bf{(1.00,0.00)}&(0.09,0.73)&(0.10,0.76)\\
\hline
\multirow{3}{*}{$\alpha$-stable}& 20 & 20 &(0.01,0.97)&\bf{(0.04,0.95)}&(0.03,0.97)&(0.04,0.96)\\
  				& 20 & 100 & (0.01,0.99)	&\bf{(0.91,0.09)}	&(0.09,0.88)	&(0.05,0.91)\\
				& 100 & 100 &(0.00,1.00)&\bf{(0.34,0.66)	}&(0.09,0.81)&(0.02,0.94)\\
\hline
\multirow{3}{*}{skewed-$t_3$}& 20 & 20 &(0.00,0.99)	&\bf{(0.11,0.88)	}&(0.00,0.99)&	(0.05,0.94)\\
  				& 20 & 100 & (0.18,0.71)&\bf{(0.98,0.02)}&(0.06,0.92)	&(0.66,0.19)\\
				& 100 & 100 &(0.00,1.00)&\bf{(0.87,0.13)	}&(0.02,0.98)	&(0.41,0.58)\\
\bottomrule
\end{tabular}
\end{center}
\end{table}

\begin{table}[htb]
\begin{center}
\caption{Frequencies of exact estimation and underestimation (in parentheses) of factor numbers over 100 replications with $s_E=2$ and $T = 3(pq)^{1/2}$. Here MER$_{op}$ and MER$_{F}$ stands for Manifold eigenvalue-ratio methods; $(2D)^2$-ER is equivalent to the ER method in \cite{chen2021statistical} with $\alpha=-1$; IterER is from \cite{Yu2021Projected}.}
\label{simulation_fn_se2_T3}
\begin{tabular}{ccccccc}
\bottomrule
Distribution & $p$ & $q$ & MER$_{op}$ & MER$_{F}$ & $(2D)^2$-ER & IterER\\
\hline
\multirow{3}{*}{Gaussian}& 20 & 20 & (0.08,0.72)&\bf{(0.39,0.59)}&(0.08,0.78)	&(0.34,0.62)\\
  				& 20 & 100 & (0.12,0.10)	&\bf{(1.00,0.00)}&(0.15,0.40)&(0.99,0.01)\\
				& 100 & 100 &(0.00,0.00)&\bf{(1.00,0.00)}&(0.00,0.02)&\bf{(1.00,0.00)}\\
\hline
\multirow{3}{*}{$t_3$}& 20 & 20 & (0.00,1.00)&(0.00,1.00)&(0.00,1.00)&(0.00,1.00)\\
  				& 20 & 100 &(0.00,1.00)&\bf{(0.93,0.07)}&(0.00,1.00)	&(0.48,0.44)\\
				& 100 & 100 & (0.00,1.00)&\bf{(0.25,0.75)}&(0.03,0.96)&(0.14,0.86)\\
\hline
\multirow{3}{*}{$t_1$}& 20 & 20 & (0.88,0.12)&\bf{(1.00,0.00)}&(0.01,0.91)&(0.02,0.88)\\
  				& 20 & 100 & (0.98,0.02)&\bf{(1.00,0.00)}&(0.10,0.77)	&(0.08,0.79)\\
				& 100 & 100 &\bf{(1.00,0.00)}&\bf{(1.00,0.00)}&(0.11,0.77)	&(0.07,0.80)\\
\hline
\multirow{3}{*}{$\alpha$-stable}& 20 & 20 & (0.00,1.00)&(0.00,1.00)&(0.04,0.95)	&\bf{(0.05,0.94)}\\
  				& 20 & 100 & (0.00,1.00)&\bf{(0.50,0.50)}&(0.01,0.95)	&(0.01,0.95)\\
				& 100 & 100 & (0.00,1.00)&(0.02,0.98)&(0.03,0.85)&\bf{(0.02,0.90)}\\
\hline
\multirow{3}{*}{skewed-$t_3$}& 20 & 20 &(0.00,1.00)	&(0.00,1.00)&(0.00,1.00)&	(0.00,1.00)\\
  				& 20 & 100 & (0.00,1.00)	&\bf{(0.89,0.11)}&(0.00,1.00)&(0.42,0.50)\\
				& 100 & 100 &(0.00,1.00)	&\bf{(0.18,0.82)}&(0.03,0.93)&(0.04,0.96)\\
\bottomrule
\end{tabular}
\end{center}
\end{table}
\end{document}